\newtheorem{theorem}{Theorem}[section]
\newtheorem{corollary}[theorem]{Corollary}
\newtheorem{lemma}[theorem]{Lemma}
\newtheorem{proposition}[theorem]{Proposition}
\theoremstyle{definition}
\newtheorem{definition}[theorem]{Definition}
\newtheorem{assumption}[theorem]{Assumption}
\newtheorem{ass}[theorem]{Assumption}
\theoremstyle{remark}
\newtheorem{remark}[theorem]{Remark}
\numberwithin{equation}{section}
\newcommand{\reals}{\mathbb R}
\newcommand{\eps}{\varepsilon}
\newcommand{\such}{\ | \ }
\newcommand{\prob}{\mathbb{P}}
\newcommand{\qprob}{\mathbb{Q}}
\newcommand{\esp}{\mathbb{E}}
\newcommand{\espalt}[2]{\esp^{#1}\bra{#2}}
\newcommand{\espaltm}[3]{\esp^{#1}_{#2}\bra{#3}}
\newcommand{\F}{\mathcal{F}}
\newcommand{\M}{\mathcal{M}}
\newcommand{\tM}{\tilde{\mathcal{M}}}
\newcommand{\filt}{\mathbb{F}}
\newcommand{\filtration}{\filt = \pare{\F_t}_{0\leq t\leq T}}
\newcommand{\probtriple}{\pare{\Omega, \F, \prob}}
\newcommand{\esssup}[2]{\mathop{\textrm{ess sup}}_{#1}\left[ #2\right]}
\newcommand{\essinf}[2]{\mathop{\textrm{ess inf}}_{#1}\left[ #2\right]}
\newcommand{\relent}[2]{H\left(#1\such #2\right)}
\newcommand{\N}{\mathbb{N}}
\newcommand{\R}{\mathbb{R}}
\newcommand{\EN}{\mathcal{E}}
\newcommand{\Lb}{\mathbb{L}}
\newcommand{\nada}[1]{}
\newcommand{\dfn}{\, := \,}
\newcommand{\rdfn}{\, =: \,}
\newcommand{\Ua}{\mathcal{U}_{a}}
\newcommand{\pare}[1]{\left(#1\right)}
\newcommand{\bra}[1]{\left[#1\right]}
\newcommand{\cbra}[1]{\left\{#1\right\}}
\newcommand{\dbra}[1]{[\kern-0.15em[ #1 ]\kern-0.15em]}
\newcommand{\dbraco}[1]{[\kern-0.15em[ #1 [\kern-0.15em[}
\newcommand{\transpm}{\lambda}
\newcommand{\epn}{p_*^n}
\newcommand{\eqn}{q_*^n}
\newcommand{\pr}{P}
\title{The pricing of contingent claims and optimal positions in asymptotically complete markets}
\author{Michail Anthropelos}
\address{Department of Banking and Financial Management\\
University of Piraeus\\
Piraeus, Greece}
\email{anthropel@unipi.gr}
\thanks{M. Anthropelos is supported in part by the Research Center of the University of Piraeus}
\author{Scott Robertson}
\address{Questrom School of Business\\
Boston University\\
Boston, MA 02215}
\email{scottrob@bu.edu}
\thanks{S. Robertson is supported in part by the National Science Foundation (NSF)
  under grants DMS-1651180 and DMS-1613159.}
\author{Konstantinos Spiliopoulos}
\address{Department of Mathematics \& Statistics\\
Boston University\\
Boston, MA 02215}
\email{kspiliop@math.bu.edu}
\thanks{K. Spiliopoulos is supported in part by the NSF under grant number DMS-1312124 and during revisions of the article by  the NSF
  CAREER award DMS-1550918. }
\date{\today}
\begin{document}

\begin{abstract}
We study utility indifference prices and optimal purchasing
quantities for a contingent claim, in an incomplete
semi-martingale market, in the presence of vanishing hedging
errors and/or risk aversion. Assuming that the average
indifference price converges to a well defined limit, we prove
that optimally taken positions become large in absolute value at a
specific rate. We draw motivation from and make connections to
Large Deviations theory, and in particular, the celebrated
G\"{a}rtner-Ellis theorem. We analyze a series of well studied
examples where this limiting behavior occurs, such as fixed
markets with vanishing risk aversion, the basis risk model with
high correlation, models of large markets with vanishing trading
restrictions and the Black-Scholes-Merton model with either
vanishing default probabilities or vanishing transaction costs.
Lastly, we show that the large claim regime could naturally arise
in partial equilibrium models.
\end{abstract}

\keywords{Indifference Pricing, Incomplete Markets, Utility Functions, Large
  Position Size}
\maketitle



\section{Introduction}\label{S:intro}

The goal of this paper is to study the relationship between utility indifference prices and optimal positions for a contingent claim, in a general incomplete semi-martingale market, under the assumption of vanishing hedging errors. In particular, for an exponential utility investor, we wish to verify the heuristic adage that when purchasing optimal quantities one obtains the delicate relationship
\begin{equation*}
\textrm{position size }\times \textrm{ risk aversion }\times \textrm{incompleteness parameter} \approx \textrm{ constant}.
\end{equation*}

Here, the incompleteness parameter represents the hedging error associated with the claim.  From the above we see that as the market becomes complete (or, at least as the given claim in question becomes asymptotically hedgeable), optimal position sizes tend to become large.  In fact, optimal position sizes may also become large as risk aversion vanishes in a fixed market, and our analysis is robust enough to cover both cases.

The financial motivation for studying this situation is that large
positions are indeed being taken.  For example, the over the
counter derivatives markets now has more than $700$ trillion
notional outstanding (see \cite{BIS_Data}). Other examples include
mortgage backed securities, life insurance contracts and mortality
derivatives.  These products are not completely replicable and a
position on them implies unhedgeable risk. Therefore, it is
natural to study the situation within the framework of utility
based analysis in incomplete markets. Moreover, the observation
that position size is connected to hedging error can be understood
as follows. In a complete market there is only one fair price $d$
for a given claim. Hence, if one is able to purchase claims for
price $p\neq d$ then it is optimal to take an infinite position.
Of course, in reality  one cannot take an infinite position and
complete markets are an ideal situation. However, these
considerations indicate that large positions may arise
endogenously, if the hedging error or risk aversion is small. We
also mention that this is the underlying motivation for the
indifference price approximations in the basis risk models of
\cite{davis1997opi,MR1926237}, which we revisit in the current
paper.

Starting at least from \cite{HN1989}, utility indifference pricing
has attracted a lot of attention, see for example \cite{MR2547456}
for detailed overview. Recently, indifference pricing for large
position sizes has been studied in \cite{BEM_2012, Robertson_2012,
Robertson_Spil_2014}. In \cite{Robertson_Spil_2014} the authors consider a sequence of a particular semi-complete
market indexed by $n$ that becomes
complete as $n\rightarrow\infty$ and, assuming the unhedgeable
component of the non-traded asset vanishes in accordance to a
Large Deviation Principle (LDP), it is shown that optimal purchase
quantities become large at precisely the Large Deviations scaling.

To help motivate our results, let us briefly outline the main
idea. Let $n\in\mathbb{N}$ and consider a semi-martingale market
with available risky assets for investment $S^n$, and an investor
who owns a non-traded contingent claim $B$.  The investor has
exponential utility with risk aversion $a_n>0$, where, in addition
to the assets, we allow the risk aversion to change with $n$ so
that $U_{a_n}(x) = -(1/a_n)e^{-a_nx}, x\in \reals$. Let $\mathcal{A}^n$ be the set of
admissible trading strategies and
$X^{\pi^n}= (\pi^n\cdot S^n)$ be the resultant wealth process, for some $\pi^n\in\mathcal{A}^n$. The
optimal utility that the investor can achieve by trading in $S^n$ with
initial capital $x$ and $q$ units of $B$ is
\begin{equation*}
u^n_{a_n}(x,q) = \sup_{\pi^n\in\mathcal{A}^n}\espalt{}{U_{a_n}(x +X^{\pi^n}_T + qB)};\quad u^n_{a_n}(x) = u^n_{a_n}(x,0).
\end{equation*}
Then, the average bid utility indifference price $p^n_{a_n}(x,q)$ is defined through the balance equation
\begin{equation*}
u^n_{a_n}(x-qp^n_{a_n}(x,q), q) = u^n_{a_n}(x).
\end{equation*}
It is well known that $p^n_{a_n}$ does not depend upon $x$, and writing $p^n_{a_n}(q)$, takes the form
\begin{equation*}
p^n_{a_n}(q) = -\frac{1}{a_nq} \log\left(\espalt{\qprob^n_0}{e^{-a_nq\hat{Y}^n_{a_n}(q)}}\right),
\end{equation*}
where $\qprob^n_0$ is the minimal entropy measure in the $n^{th}$
market and $\hat{Y}^n_{a_n}(q)$ is related to the normalized
residual risk (see \cite{MR2667897, MR2048829} amongst others) of
owning $q$ units of $B$. Thus, $p^n_{a_n}$ can be viewed as a
``generalized'' version of the scaled cummulant generating
function $\Lambda_n(q)/q$, where
$\Lambda_n(q)\dfn\log\left(\espalt{}{e^{qY_n}}\right)$ for a
sequence of random variables $\cbra{Y_n}$ from Large Deviations
theory (see \cite{MR1619036} for a classical manuscript). Taking a
cue from the celebrated G\"{a}rtner-Ellis theorem, which deduces
an LDP for the tail probabilities of $\cbra{Y_n}$ from the
assumption that $\lambda\mapsto (1/r_n) \Lambda_n(\lambda r_n)$
converges to a sufficiently regular function as $r_n\rightarrow
\infty$, we naturally ask what conclusions can be deduced from the
assumption that $\ell\mapsto p^n_{a_n}(\ell r_n)$ converges to a
well defined limit for $\ell\in\reals$ and $r_n\rightarrow\infty$.
Specifically, we assume (see Assumption \ref{A:GE_Opt1}) that
there exist a sequence $\cbra{r_n}$ of positive numbers with
$r_n\rightarrow \infty$ and a $\delta>0$ such that for all
$|\ell|<\delta$ the limit
\begin{equation}\label{E:limit_px_intro}
p^\infty(\ell)= \lim_{n\uparrow\infty} p^n_{a_n}(\ell r_n),
\end{equation}
exists, is finite, and is continuous  at $\ell=0$. The price $p^\infty(0)$ is thus the limiting price ignoring position size, and when the market is asymptotically complete, represents the unique arbitrage free price in the limiting complete market: see Section \ref{SS:opt_strat}.

As a first consequence, we prove (see Theorems \ref{T:opt_pos_lb},
\ref{T:opt_pos_ub}) that large optimal positions arise
endogenously at a rate proportional to $r_n$. Specifically, for
any price $\tilde{p}^n$ which is arbitrage free in the
pre-limiting markets, the optimal position size (as defined in
\cite{MR2212897}) $\hat{q}^{n}=\hat{q}^{n}(\tilde{p}^n)$ is such
that for $n$ large enough
\begin{equation*}
|\hat{q}^{n}|\approx \ell r_{n}, \text{ for some }\ell\in(0,\infty),
\end{equation*}
provided that $\tilde{p}^n\rightarrow\tilde{p}\neq p^\infty(0)$. Namely, we have $|\hat{q}^{n}|\rightarrow\infty$ at the speed of $r_{n}$.

Secondly, in Section \ref{S:PEPQ} we show under which conditions
the large claim regime could arise in an equilibrium setting, with
a particular focus on justifying the assumption that,
asymptotically, one could buy the claim for a price $\tilde{p}\neq
p^\infty(0)$.  Provided that stock market prices are exogenously
given, the equilibrium price of a claim is the one at which the
optimal quantities of the investors sum up to zero, meaning that
the market of the claim is cleared out. If such a (partial)
equilibrium price exists for each $n\in\N$, it is natural to ask
where this sequence converges to, and if the prices induce
investors to enter the large claim regime. Here, we show that if
the investors' random endowments are dominated by $r_n$, then
equilibrium prices converge to $p^{\infty}(0)$; the unique
limiting arbitrage free price. However, if investors' endowments
are growing with rate $r_n$, equilibrium prices may converge to a
limit $\tilde{p}\neq p^\infty(0)$ and hence the large claim regime
of Theorems \ref{T:opt_pos_lb}, \ref{T:opt_pos_ub} occurs. This
happens when one investor already owns large position in $B$, and
yields a family of examples where the large claim regime is in
fact the market's equilibrium. This result helps to explain the
large observed volumes in OTC derivative markets and the
corresponding extreme prices that often appear (see for instance
\cite{AtkEisWeil13, BIS_Data}).

Thirdly, we illustrate through numerous and varied examples that the price convergence in \eqref{E:limit_px_intro} holds, and hence is a natural feature of either asymptotically complete markets or vanishing investor's risk aversion in a fixed market. Moreover, in all of these examples we explicitly identify the speed $r_{n}$ at which optimal positions grow. To be precise, we validate these claims in the following cases: (a) vanishing risk aversion in a fixed market in Section \ref{SS:vanish_ra}, (b) basis risk model with high correlation in Section \ref{SS:br}, (c) large markets with vanishing trading restrictions in Section \ref{SS:SC}, (d) Black-Scholes-Merton model with vanishing default probability in Section \ref{SS:BS_defautl}, and (e) vanishing transaction costs in the Black-Scholes-Merton model in Section \ref{SS:BS_trans}. 

The vanishing transaction costs example of Section
\ref{SS:BS_trans} probably deserves more discussion. The first
interesting point is that our theory unifies frictionless markets
and markets with frictions, such as transaction costs. In
particular, not only do the statements on optimal positions in
frictionless markets carry over, but in both cases, the main
results turn out to be natural outcomes of the same general
statements presented in Appendix \ref{SS:A_technical}. The second
interesting point is that our analysis reveals that the natural
relation between risk aversion, $a_n$, optimal position size,
$\hat{q}_n$, and proportion of the transaction costs, $\lambda_n$
is $a_n \hat{q}_{n}\lambda_{n}^{2}\approx \text{constant}$. Apart
from the conclusion that for fixed risk aversion, this relation
indicates that $r_{n}=\lambda^{-2}_{n}$, i.e.~that
$\hat{q}_{n}\lambda^{2}_{n}\rightarrow\ell\in(0,\infty)$, it also
justifies the appropriateness of the limiting asymptotic regimes,
which were considered previously without justification; for
example, as in \cite{MR1809526, MR3251862}.

 Even though our focus in this paper is on investors with exponential utility, our results are also true within the class of utility functions that decay exponentially for large negative wealths, see Section \ref{SS:opt_pos_gen_util}. In this case, the optimal position is not necessarily unique. However, we prove that optimizers do exist and that under the assumption of convergence of indifference prices with speed $r_{n}$, \emph{for exponential utility}, each optimizer will converge to $\pm \infty$ with speed $r_{n}$.

We conclude the introduction with a discussion on the applicability and usefulness of the results of this paper. First of all, our analysis offers a bridge between complete and incomplete markets.  Complete markets, where computations are often tractable and explicit, are clearly an idealization of reality. However, their more realistic incomplete counterparts are typically intractable when it comes to identifying optimal trading strategies and pricing contingent claims. To connect these two settings, it is thus natural to consider small perturbations away from complete markets. In the case of fixed investor preferences, this paper addresses precisely this situation, and we show that as the perturbation vanishes, large investors may endogenously arise through optimal trading. Secondly, our work also acts as a bridge between risk averse and risk neutral investors.  For example, it is often assumed that market makers are risk neutral, which is of course only approximately true. Our analysis shows, however, that as market makers approach risk neutrality, they will be induced into both taking large positions and offering prices so that other buyers enter into the market in a large way. Thirdly, the equilibrium results of Section \ref{S:PEPQ} show that it takes only one person to be in the large claim regime in order for others to enter that regime by acting optimally. Hence, our results can be also used to both study and justify the emergence of large players in derivative markets, in the setting where players take large positions immediately, as opposed to incrementally increasing their position sizes. Fourthly, our work can help towards correctly pricing claims in the presence of small unheadgable risks (e.g. in the insurance industry), when positions are of significant size.

The rest of the paper is organized as follows. In Section
\ref{S:model} we describe in detail the model and the optimal
investment problem. In Section \ref{S:GE} we lay down our main
assumption on convergence of scaled indifference prices and draw
motivations with and connections to Large Deviations theory. In
Section \ref{S:consequences} we describe the main consequences of
the assumption of convergence of scaled indifference prices.
Namely, we state the theorems on optimal positions and discuss
their consequences. We additionally discuss the limiting behavior
for the optimal wealth process, and justify the interpretation
that the speed $r_{n}$ characterizes the speed at which the market
approaches completion.  Moreover, we prove that the general
results on optimal positions are true for all utility functions in
the class of utility functions that decay exponentially  for large
negative wealths. Section \ref{S:PEPQ} contains the results on the
partial equilibrium model and on its limiting behavior. Section
\ref{S:examples} contains the motivating examples of frictionless
markets that satisfy our assumptions. Section \ref{SS:BS_trans}
contains the example with vanishing transaction costs. Appendices \ref{SS:A_technical}, \ref{S:pf_opt_pos}, and \ref{S:pf_trans} contain most of the proofs.

\section{The Model, Optimal Investment Problem and Indifference Price}\label{S:model}

We fix a horizon $T>0$, probability space $\probtriple$ and
filtration $\filtration$, which is assumed to satisfy the usual
conditions. Additionally, we assume $\F=\F_T$ and zero interest
rates so the risk-free asset is identically equal to $1$.  For
$n\in\mathbb{N}$ we denote by $S^n$ an $\reals^{d_n}$-valued,
locally bounded semi-martingale which represents the risky assets
available for investment.  In the sequel, we consider the
valuation and the optimal position taking in a contingent claim
$B\in\mathbb{L}^0\probtriple$ assumed to satisfy:
\begin{assumption}\label{A:claim}
$\espalt{}{e^{\lambda B}} < \infty$ for all $\lambda\in\reals$.
\end{assumption}

Since the assets are changing with $n$, the class of equivalent
local martingale measures are changing with $n$ as well.  We
denote by $\M^n$ the family of measures $\qprob^n\sim \prob$ on
$\F$ such that $S^{n}$ is a $\qprob^n$ local martingale.  Recall
for two probability measures $\mu\ll\nu$ the relative entropy of
$\mu$ with respect to $\nu$ is given by $\relent{\mu}{\nu} =
\espalt{\nu}{(d\mu/d\nu)\log(d\mu/d\nu)}$. In order to rule out
arbitrage in each market, we make the following standard
assumption as seen in \cite{MR1891730, MR1743972} amongst many
others:
\begin{assumption}\label{A:no_arb_n}
For each $n$, $\tM^n\dfn\cbra{\qprob^n\in\M^n\ :\ \relent{\qprob^n}{\prob} < \infty} \neq \emptyset$.
\end{assumption}

We consider an exponential utility investor with risk aversion $a_n>0$, where, in addition to the assets, we allow the risk aversion to change with $n$. Thus, the investor has utility function
\begin{equation}\label{E:util_funct}
U_{a_n}(x) = -\frac{1}{a_n}e^{-a_nx};\qquad x\in \reals.
\end{equation}

A trading strategy $\pi^n$ is admissible if it is predictable,
$S^n$ integrable, and if the stochastic integral $X^{\pi^n}\dfn
(\pi^n\cdot S^n)$ is a $\qprob^n$ supermartingale for all
$\qprob^n\in\tM^n$. The set of admissible trading strategies for
the $n^{th}$ market is denoted $\mathcal{A}^n$. For an initial
capital $x$ and position $q\in\reals$ in the claim $B$ we define
\begin{equation}\label{E:val_funct_claim}
u^n_{a_n}(x,q) \dfn
\sup_{\pi^n\in\mathcal{A}^n}\espalt{}{U_{a_n}(x + X^{\pi^n}_T +
qB)},
\end{equation}
as the optimal utility an investor can achieve by trading in $S^n$
with initial capital $x$ and $q$ units of $B$. When $q=0$ so that
the investor does not own the claim we denote the value function
by
\begin{equation}\label{E:val_funct_no_claim}
u^n_{a_n}(x) \dfn \sup_{\pi^n\in\mathcal{A}^n}\espalt{}{U_{a_n}(x+X^{\pi^n}_T)}.
\end{equation}

The average (bid) utility indifference price $p^n_{a_n}(x,q)$ for
initial capital $x$ and $q$ units of $B$ is defined through the
balance equation
\begin{equation}\label{E:util_indiff_px}
u^n_{a_n}(x-qp^n_{a_n}(x,q), q) = u^n_{a_n}(x).
\end{equation}
We now summarize a number of well known results regarding the
utility maximization problem for exponential utility under the
current setup and assumptions. For proofs of these facts, see
\cite{MR1891730, MR1743972, MR1920099, MR1891731, MR2152255,
MR2489605}.

Since $u^n_{a_n}(x,q) = e^{-a_n x}u^n_{a_n}(0,q)$ we consider without loss of generality
that $x=0$ throughout.  The value function without $B$,
$u^n_{a_n}(0)$, is attained by an admissible strategy
$\hat{\pi}^n_{a_n}(0)$. Write $\hat{X}^n_{a_n}(0) \dfn
X^{\hat{\pi}^n_{a_n}(0)}$ as the optimal wealth process.
Additionally, denote by $\qprob^n_0\in\tM^n$ the minimal entropy
measure, which exists. Then $\qprob^n_0$ and $\hat{X}^n_{a_n}(0)$
are related by the formula
\begin{equation}\label{E:min_ent_density}
\frac{d\qprob^n_0}{d\prob}\bigg|_{\F_T}  = \frac{e^{-a_n\hat{X}^n_{a_n}(0)_T}}{\espalt{}{e^{-a_n\hat{X}^n_{a_n}(0)_T}}}.
\end{equation}

In a similar fashion, the value function for $q$ units of $B$,
$u^n_{a_n}(0,q)$, is also attained for some admissible trading
strategy $\hat{\pi}^n_{a_n}(q)$ and we write
$\hat{X}^n_{a_n}(q)\dfn X^{\hat{\pi}^n_{a_n}(q)}$ as the resultant
wealth process. The indifference price does not depend upon the
initial capital and we write $p^n_{a_n}(q)$ instead of
$p^n_{a_n}(x,q)$. By its definition, $p^n_{a_n}(q)$ is given by
the abstract formula
\begin{equation}\label{E:indiff_px}
\begin{split}
p^n_{a_n}(q) &= -\frac{1}{a_nq}\log\left(\frac{u^n_{a_n}(0,q)}{u^n_{a_n}(0)}\right),
\end{split}
\end{equation}
and the total price $qp^n_{a_n}(q)$ admits the variational representation
\begin{equation}\label{E:total_price}
qp^n_{a_n}(q) = \inf_{\qprob^n\in\tM^n}\left(q\espalt{\qprob^n}{B}
+ \frac{1}{a_n}\left(\relent{\qprob^n}{\prob} -
\relent{\qprob^n_0}{\prob}\right)\right).
\end{equation}
Note that from \eqref{E:total_price} one can easily deduce that
for $q\in\reals$
\begin{equation}\label{E:ra_switch}
p^n_{a_n}(q) = p^n_1(a_nq).
\end{equation}
Also, using \eqref{E:min_ent_density} and \eqref{E:indiff_px} we
obtain
\begin{equation}\label{E:px_as_cgf}
\begin{split}
p^n_{a_n}(q) &=
-\frac{1}{a_nq}\log\left(\frac{\espalt{}{e^{-a_n\hat{X}^n_{a_n}(q)_T-a_nqB}}}{\espalt{}{e^{-a_n\hat{X}^n_{a_n}(0)_T}}}\right)=-\frac{1}{a_nq}\log\left(\espalt{\qprob^n_0}{e^{-a_nq\hat{Y}^n_{a_n}(q)}}\right),
\end{split}
\end{equation}
where
\begin{equation}\label{E:norm_resid_risk_p}
\hat{Y}^n_{a_n}(q) \dfn \frac{1}{q}\left(\hat{X}^n_{a_n}(q)_T -
\hat{X}^n_{a_n}(0)_T + qB\right).
\end{equation}
$\hat{Y}^n_{a_n}(q)$ is intimately related to the \emph{normalized
residual risk process} of \cite{MR2667897, MR2048829,
stoikov2005optimal} amongst others and can be seen as the per unit
unhedgeable part of the long position on $q$ units of the claim
$B$.

\section{Limiting Prices and Connections to Large Deviations Theory}\label{S:GE}

Equation \eqref{E:px_as_cgf} is the starting point for our analysis. To motivate the result we first make connections
with the Large Deviation Principle (LDP) and G\"{a}rtner-Ellis theorem from Large Deviations, both stated here for the convenience of the
reader, see for example \cite{MR1619036}.
\begin{definition}\label{Def:LDP}
Let $S$ be a Polish space with Borel sigma-algebra $\mathcal{B}(S)$ and $\probtriple$ be a probability space.  We say that a collection of random variables $\cbra{Y_n}_{n\in\N}$ from $\Omega$ to $S$ has a LDP with good rate function $I:S\to [0,\infty]$ and scaling $r_n$ if $r_n\rightarrow \infty$ and
\begin{enumerate}
\item For each $s\ge 0$, the set $\Phi(s) = \left\{ s\in S: I(s)\le s\right\}$
is a compact subset of $S$; in particular, $I$ is lower semi-continuous.
\item For every open $G\subset S$, $\varliminf_{n\uparrow\infty}(1/r_n)\log\left(\prob\bra{ Y_n\in G}\right) \geq -\inf_{s\in G} I(s)$.
\item For every closed $F\subset S$, $\varlimsup_{n\uparrow\infty}(1/r_n)\log\left(\prob\bra{ Y_n\in F}\right)\leq -\inf_{s\in F} I(s)$.
\end{enumerate}
\end{definition}
In this paper we take $S=\mathbb{R}$.

\begin{theorem}[G\"{a}rtner-Ellis] \label{T:GE}
Let $\cbra{Y_n}_{n\in\mathbb{N}}$ be a collection of random
variables on a probability space $\probtriple$.  Let
$\cbra{r_n}_{n\in\mathbb{N}}$ be a sequence of positive reals such
that $\lim_{n\uparrow\infty} r_n = \infty$.  For each $n$ denote
by $\Lambda_n$ the cummulant generating function for $Y_n$
\begin{equation}\label{E:real_cgf}
\Lambda_n(\lambda) \dfn \log\left(\espalt{}{e^{\lambda
Y_n}}\right),\qquad \lambda\in\reals.
\end{equation}
Assume the following regarding $\Lambda_n$:
\begin{enumerate}
\item For all $\lambda\in\reals$ the limit $\Lambda(\lambda) \dfn \lim_{n\uparrow\infty}(1/r_n)\Lambda_n(r_n\lambda)$ exists as an extended real number.
\item $\mathcal{D}^0_\Lambda$, the interior of $\mathcal{D}_\Lambda\dfn \cbra{\lambda: \Lambda(\lambda)<\infty}$, is non-empty with $0\in \mathcal{D}^0_\Lambda$.
\item $\Lambda$ is differentiable throughout $\mathcal{D}^0_\Lambda$ and steep; i.e. $\lim_{\lambda\rightarrow \partial\mathcal{D}_{\Lambda}}|\nabla\Lambda(\lambda)| = \infty$.
\item $\Lambda$ is lower semi-continuous.
\end{enumerate}
Then, the random variables $\cbra{Y_n}_{n\in\mathbb{N}}$ satisfy a LDP with speed $\cbra{r_n}_{n\in\mathbb{N}}$ and good rate function $I(y) = \sup_{\lambda\in\reals}\left(\lambda y - \Lambda(\lambda)\right)$.
\end{theorem}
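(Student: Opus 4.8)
The plan is to follow the classical argument for the G\"{a}rtner--Ellis theorem: first verify that $I\dfn\Lambda^*$ is a good rate function, then prove the large deviation upper bound via an exponential Chebyshev estimate together with exponential tightness, and finally prove the lower bound via an exponential change of measure, extended from ``exposed'' points to all points by a convex-analytic approximation that uses steepness. The goodness of $I$ is immediate from the hypotheses: as a Legendre--Fenchel transform, $I$ is convex and lower semi-continuous, so its sub-level sets are closed; and since $0\in\mathcal{D}^0_\Lambda$ there are $\lambda_-<0<\lambda_+$ with $\Lambda(\lambda_\pm)<\infty$, so $I(y)\ge\max\pare{\lambda_+y-\Lambda(\lambda_+),\ \lambda_-y-\Lambda(\lambda_-)}\to\infty$ as $|y|\to\infty$, whence the sub-level sets are bounded, hence compact. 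This is condition~(1) of Definition~\ref{Def:LDP}, so it remains to establish the two bounds.

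For the upper bound, fix $y\in\reals$. For $\lambda\ge0$, Chebyshev's exponential inequality gives $\probalt{Y_n\ge y}\le e^{-r_n\lambda y}\esp\bra{e^{r_n\lambda Y_n}}$, so by hypothesis~(1), $\varlimsup_{n\upto\infty}\tfrac1{r_n}\log\probalt{Y_n\ge y}\le-\pare{\lambda y-\Lambda(\lambda)}$; optimizing over $\lambda\ge0$, and arguing symmetrically with $\lambda\le0$ for $\probalt{Y_n\le y}$, yields for any $y_-<y_+$ that $\varlimsup_{n\upto\infty}\tfrac1{r_n}\log\probalt{Y_n\notin(y_-,y_+)}\le-\min\pare{I(y_-),I(y_+)}$. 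Choosing $y_\pm=\mp M$ and letting $M\upto\infty$ gives exponential tightness; a finite subcover together with the principle of the largest term then yields $\varlimsup_{n\upto\infty}\tfrac1{r_n}\log\probalt{Y_n\in K}\le-\inf_{y\in K}I(y)$ for compact $K$, and exponential tightness promotes this to all closed sets. Only hypotheses~(1)--(2) are used here.

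The lower bound is the substantive part. It suffices to show that whenever $I(y)<\infty$ and $\delta>0$, $\varliminf_{n\upto\infty}\tfrac1{r_n}\log\probalt{Y_n\in(y-\delta,y+\delta)}\ge-I(y)$. I would treat first the \emph{exposed} points, i.e.\ those $y=\nabla\Lambda(\eta)$ with $\eta\in\mathcal{D}^0_\Lambda$, for which the supremum defining $I(y)$ is attained at $\eta$, so $I(y)=\eta y-\Lambda(\eta)$. Define tilted measures by $\tfrac{d\tprob_n}{d\prob}\dfn e^{r_n\eta Y_n}/\esp\bra{e^{r_n\eta Y_n}}$; then with $A\dfn(y-\delta,y+\delta)$,
\begin{equation*}
\probalt{Y_n\in A}=\esp\bra{e^{r_n\eta Y_n}}\,\espalt{\tprob_n}{e^{-r_n\eta Y_n}\indic_{A}(Y_n)}\ \ge\ \esp\bra{e^{r_n\eta Y_n}}\,e^{-r_n\eta y-r_n|\eta|\delta}\,\tprob_n\bra{Y_n\in A}.
\end{equation*}
Taking $\tfrac1{r_n}\log$, the first factor tends to $\Lambda(\eta)$, while the scaled log-moment generating function of $Y_n$ under $\tprob_n$ tends to $\mu\mapsto\Lambda(\mu+\eta)-\Lambda(\eta)$, which is finite near $0$ and differentiable there with derivative $y$; applying the upper bound already proved to this tilted family shows $\tprob_n\bra{Y_n\in A}\to1$, since both tails $\tprob_n\bra{Y_n\ge y+\delta}$ and $\tprob_n\bra{Y_n\le y-\delta}$ decay exponentially (the tilted rate function is strictly positive at $y\pm\delta$). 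Hence $\varliminf_{n\upto\infty}\tfrac1{r_n}\log\probalt{Y_n\in A}\ge\Lambda(\eta)-\eta y-|\eta|\delta=-I(y)-|\eta|\delta$, and letting $\delta\downarrow0$ gives the estimate for exposed $y$.

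To finish, one must pass from exposed points to an arbitrary $y$ with $I(y)<\infty$, and thereby obtain $\varliminf_{n\upto\infty}\tfrac1{r_n}\log\probalt{Y_n\in G}\ge-\inf_{y\in G}I(y)$ for every open $G$. This is where I expect the main difficulty to lie, and where hypotheses~(2)--(4) are genuinely invoked: one needs the convex-analytic fact that, because $\Lambda$ is differentiable on $\mathcal{D}^0_\Lambda$, steep, and lower semi-continuous, every $y$ with $I(y)<\infty$ is a limit of exposed points $y_k$ with $I(y_k)\to I(y)$ --- intuitively, steepness and lower semi-continuity are precisely what prevent the maximizer in the variational problem defining $I(y)$ from escaping to $\partial\mathcal{D}_\Lambda$. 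Granting this, any open $G$ contains, arbitrarily close to $\inf_G I$, exposed points with $I$-value as small as desired, and applying the exposed-point estimate at such a point completes the lower bound; the remaining measure-theoretic bookkeeping is routine.
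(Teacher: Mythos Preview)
The paper does not supply its own proof of Theorem~\ref{T:GE}: the G\"{a}rtner--Ellis theorem is quoted as a classical result and attributed to \cite{MR1619036}, so there is nothing to compare your argument against within the paper itself. Your sketch is the standard textbook proof (essentially the one in Dembo--Zeitouni): goodness of $I=\Lambda^*$ from $0\in\mathcal{D}^0_\Lambda$, the upper bound via exponential Chebyshev plus exponential tightness, and the lower bound via an exponential tilt at exposed points followed by the convex-analytic density argument that uses essential smoothness (differentiability on $\mathcal{D}^0_\Lambda$ together with steepness). One small point of exposition: in your exposed-point step you write ``letting $\delta\downarrow 0$'', but $\delta$ is the fixed radius of the target ball; the correct bookkeeping is to work with an auxiliary $\delta'<\delta$, obtain $\varliminf\ge -I(y)-|\eta|\delta'$, and then send $\delta'\downarrow 0$ (equivalently, do this at the level of open sets $G$). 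With that adjustment your outline is correct and matches the classical route the paper is implicitly invoking.
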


To connect Theorem \ref{T:GE} with the indifference price in \eqref{E:px_as_cgf}, assume
that the position size $q$ takes the form $q=\ell r_n$ for
$\ell\in\reals$, where $\cbra{r_n}_{n\in\mathbb{N}}$ is a sequence
of positive reals with $\lim_{n\uparrow\infty} r_n = \infty$.  In
this case, using \eqref{E:px_as_cgf} gives
\begin{equation}\label{E:p_gam_rel}
p^n_{a_n}(\ell r_n) = -\frac{1}{a_n\ell r_n}\log\left(\espalt{\qprob^n_0}{e^{-a_n\ell r_n\hat{Y}^n_{a_n}(a_n\ell r_n)}}\right) = -\frac{1}{a_n \ell r_n}\Gamma_n(-a_n\ell r_n),
\end{equation}
where, similarly to $\Lambda_n$ above, we set
\begin{equation}\label{E:gamma_n_def}
\Gamma_n(\lambda) \dfn \log\left(\espalt{\qprob^n_0}{e^{\lambda \hat{Y}^n_{a_n}(-\lambda)}}\right).
\end{equation}
We thus see that convergence of the indifference prices
$p^n_{a_n}(\ell r_n)$ is analogous to the G\"{a}rtner-Ellis
assumption that the scaled cummulant generating functions
$(1/r_n)\Lambda_n(\ell r_n)$ converge. However, besides the
dependence of probability measure on $n$, there is a substantial
difference between $\Gamma_n$ in \eqref{E:gamma_n_def} and
$\Lambda_n$ in \eqref{E:real_cgf}: namely, the random variables
$\hat{Y}^n_{a_n}(\lambda)$ of \eqref{E:gamma_n_def} are changing
with $\lambda$ whereas the random variables $Y_n$ of
\eqref{E:norm_resid_risk_p} are not. Thus, even though convergence
of the scaled indifference prices implies a connection with a LDP
for the random variables $\hat{Y}^n_{a_n}(\lambda)$, we do not
typically expect a LDP from random variables
$\hat{Y}^n_{a_n}(\lambda)$ unless they do not actually depend upon
$\lambda$. An example where this is the case is presented in
Section \ref{SS:SC} below.

We now make the main assumption in an analogous form to the
G\"{a}rtner-Ellis theorem.
\begin{ass}\label{A:GE_Opt1}
There exist a sequence $\cbra{r_n}_{n\in\mathbb{N}}$ of positive
reals with $\lim_{n\uparrow\infty} r_n = \infty$ and a $\delta>0$
such that for all $|\ell|<\delta$ the limit
\begin{equation}\label{E:limit_px}
p^\infty(\ell)\dfn \lim_{n\uparrow\infty} p^n_{a_n}(\ell r_n),
\end{equation}
exists and is finite. In particular, with
\begin{equation}\label{E:d_dn_def}
d_n \dfn p^n_{a_n}(0) = \espalt{\qprob^n_0}{B},\footnote{See
\cite{MR1891730} for a proof of this equivalence.}
\end{equation}
the limit $d\dfn p^\infty(0) =  \lim_{n\uparrow\infty} d_n$
exists. Furthermore, $p^\infty(\ell)$ is continuous at $0$,
i.e.~$\lim_{\ell\rightarrow 0} p^\infty(\ell) = d = p^\infty(0)$.
\end{ass}

\subsection{Discussion}\label{SS:discussion}

\subsubsection{Assumption \ref{A:GE_Opt1} and Vanishing Risk Aversion}\label{SSS:vanish_ra}

The  relation \eqref{E:ra_switch} allows us to vary risk aversion as well as position size. Specifically, Assumption \ref{A:GE_Opt1} takes the form that for
all $|\ell| < \delta$:
\begin{equation}\label{E:limit_px_a}
p^{\infty}(\ell) = \lim_{n\uparrow\infty}p^n_{a_n}(\ell r_n) = \lim_{n\uparrow\infty}p^n_1(\ell a_n r_n).
\end{equation}
From here, it immediately follows that if the market is fixed:
i.e. if $p^n_1(q_n) = p_1(q_n)$ for all $n$ and $q_n$, then if
$a_n\rightarrow 0$ we may set $r_n\dfn a_n^{-1}\rightarrow \infty$
and Assumption \ref{A:GE_Opt1} holds. Indeed, $p_1(\ell a_n r_n) =
p_1(\ell)=: p^\infty(\ell)$ and continuity at $0$ follows from
\cite{MR1891730} which shows that $\lim_{\ell\rightarrow
0}p^\infty(\ell) = d = \espalt{\qprob_0}{B}$. This example is
briefly additionally discussed in Section \ref{SS:vanish_ra}
below, and Theorems \ref{T:opt_pos_lb}, \ref{T:opt_pos_lb} not
withstanding, our focus in the sequel will lie primarily on the
case of fixed risk aversion in a sequence of varying markets.

\subsubsection{Assumption \ref{A:GE_Opt1} and Vanishing Hedging Errors}\label{SSS:vanish_hedge}

Though not explicitly stated, for a fixed risk aversion $a_n\equiv
a$, Assumption \ref{A:GE_Opt1} implies the hedging errors
associated $B$ are vanishing.  This follows both from the
convergence of scaled indifference prices $p^n_a(\ell r_n)$ and,
crucially, from the assumption that $p^\infty$ is continuous at
$0$. To see this latter point, consider again when the market is
fixed so $p^n_a(q_n) = p_a(q_n)$.  Here, for a bounded claim $B$,
as shown in \cite{MR1891730, MR2489605}, we have
\begin{equation*}
\lim_{n\uparrow\infty} p_a(\ell r_n) = \begin{cases}
\underset{\qprob\in\mathcal{\tilde{M}}}{\inf}\espalt{\qprob}{B}, & \ell > 0\\
\espalt{\qprob_0}{B}, & \ell = 0\\
\underset{\qprob\in\mathcal{\tilde{M}}}{\sup}\espalt{\qprob}{B}, &
\ell < 0.\end{cases}
\end{equation*}
Thus, the convergence requirement in Assumption \ref{A:GE_Opt1} holds, but the resultant function $p^\infty$ is not continuous at $0$, so Assumption \ref{A:GE_Opt1} cannot hold in a fixed market (or when there is a limiting market but $B$ is not replicable in this market).

Alternatively, consider when all of Assumption \ref{A:GE_Opt1}
holds. Firstly, \eqref{E:total_price} implies that $q\mapsto
p^n_{a_n}(q)$ is decreasing and $q\mapsto qp^n_{a_n}(q)$ is
concave. Thus, $\ell \mapsto \ell p^n_{a_n}(\ell r_n)$ is concave
as well and, for $|\ell|<\delta$, so is $\ell\mapsto \ell
p^\infty(\ell)$. In particular, $p^\infty(\ell)$ is continuous on
$(-\delta,0)$ and $(0,\delta)$.  Thus, additionally assuming
continuity of $p^\infty$ at $0$ (and hence on all of
$(-\delta,\delta)$), we obtain the useful result:
\begin{equation}\label{E:continuity}
\frac{q_n}{r_n}\rightarrow \ell \in (-\delta,\delta)\Longrightarrow p^n_{a_n}(q_n)\rightarrow p^\infty(\ell).
\end{equation}
Indeed, take $\eps>0$ so that $(\ell-\eps)r_n\leq q_n\leq (\ell+\eps)r_n$ for all $n$ large enough.  Since $p^n_{a_n}(q)$ is decreasing:
\begin{equation*}
p^\infty(\ell + \eps) = \lim_{n\uparrow\infty}
p^n_{a_n}((\ell+\eps)r_n) \leq
\liminf_{n\uparrow\infty}p^n_{a_n}(q_n) \leq
\limsup_{n\uparrow\infty} p^n_{a_n}(q_n) \leq
\lim_{n\uparrow\infty} p^n_{a_n}((\ell-\eps)r_n) =
p^\infty(\ell-\eps).
\end{equation*}
Taking $\eps\downarrow 0$ gives the result. In particular, for all
fixed position sizes $q$ and risk aversions $a$, we have that
$\lim_{n\uparrow\infty} p^n_a(q) = d$, and this essentially
implies the existence of trading strategies
$\pi^n\in\mathcal{A}^n$ which asymptotically hedge $B$.  This
argument is expanded upon, in the case of bounded claims and a
continuous filtration, in Section \ref{SS:opt_strat} below.

\subsubsection{On the strict concavity of $\ell\mapsto \ell p^\infty(\ell)$}\label{SSS:strict_concavity}

Even though $\ell\mapsto \ell p^\infty(\ell)$ is concave under
Assumption \ref{A:GE_Opt1}, as the example in Section
\ref{SS:discussion_2} below shows, it need not be strictly
concave.  However, under the assumption of strict concavity a
number of nice consequences ensue: for example, see Corollary
\ref{C:opt_pos} and the equilibrium results in Section
\ref{S:PEPQ}.

\section{Limiting Scaled Indifference Prices and Consequences}\label{S:consequences}

We now deduce a number of consequences of Assumption \ref{A:GE_Opt1}, the first of which is that the regime where the position size $q=q_n = \ell r_n$ is the appropriate one as $n\uparrow\infty$, if the considered positions are taken optimally. Here, we follow the approach of \cite{MR2212897, Robertson_2012, Robertson_Spil_2014}.

\subsection{Optimal Position Taking}\label{SS:opt_pos}

Define
\begin{equation}\label{E:bar_over_h}
\underline{B}_n \dfn
\inf_{\qprob\in\tM^n}\espalt{\qprob}{B},\qquad \bar{B}_n\dfn
\sup_{\qprob\in\tM^n}\espalt{\qprob}{B}.
\end{equation}
Assume, for all $n$, that $B$ cannot be replicated by trading in
$S^n$, and denote by $I^n$ the range of arbitrage free prices for
$B$: i.e.
\begin{equation}\label{E:arb_free_n}
I^n = (\underline{B}_n,\bar{B}_n).
\end{equation}
For $\tilde{p}^n\in I^n$ the optimal position $\hat{q}_n = \hat{q}_n(\tilde{p}^n)$ is defined as the unique (see \cite{MR2212897}) solution to the equation
\begin{equation}\label{E:opt_q_n}
\sup_{q\in\reals}\left(u^n_{a_n}(-q\tilde{p}^n,q)\right).
\end{equation}
As shown in \cite{MR2212897}, $\hat{q}_n$ satisfies the first order conditions for optimality
\begin{equation}\label{E:dual_measure}
\tilde{p}_n = \espalt{\qprob^{\hat{q}_n(\tilde{p}^n)}}{B},
\end{equation}
where $\qprob^{\hat{q}_n(\tilde{p}^n)}\in\tM^n$ is the dual optimizer for $\hat{q}_n(\tilde{p}^n)$ units of claim $B$ in that it achieves the infimum in \eqref{E:total_price}. To perform the asymptotic analysis we assume consistency (in $n$) between the markets and non-degeneracy in prices as $n\uparrow\infty$. More precisely:

\begin{ass}\label{A:px_range} For $\underline{B}_n,\bar{B}_n$ as in \eqref{E:bar_over_h} we have
\begin{equation}\label{E:asympt_arb_range}
\underline{B}\dfn \limsup_{n\uparrow\infty} \underline{B}_n < \liminf_{n\uparrow\infty}\bar{B}_n\rdfn \bar{B}.
\end{equation}
\end{ass}

\begin{remark} Let Assumption \ref{A:GE_Opt1} hold. Then, since $\underline{B}_n \leq d_n \leq \bar{B}_n$ for all $n$ it follows that $\underline{B}\leq d\leq \bar{B}$ (recall the definitions of $d_n$ and $d$ as given in Assumption \ref{A:GE_Opt1}). Assumption \ref{A:px_range} strengthens this to say that there are $\tilde{p}\neq d$ so that $\tilde{p}$ is arbitrage free for all $n$ large enough.  In particular, there are $I^n\ni \tilde{p}^n \rightarrow \tilde{p}\neq d$. Now, Assumption
\ref{A:px_range} may fail in two ways. First of all, it may be
that $I^n$ is collapsing to the singleton $d$ as
$n\uparrow\infty$. In this case, convergence of limiting prices is
trivial since $p^n_{a_n}(q_n)\rightarrow d$ for all sequences
$\cbra{q_n}$. The second way in which Assumption \ref{A:px_range}
may fail is if there is no consistency between markets in that
there is no price $\tilde{p}\neq d$ such that $\tilde{p}\in I^n$
for all $n$ large. Here, we do not have optimizers (along a subsequence) $\hat{q}_n$.
\end{remark}

Under Assumption \ref{A:px_range}, we present the first main
result, which says that optimal positions are becoming large at a
rate which grows at least like $\ell r_n$ for some $\ell\neq 0$.

\begin{theorem}\label{T:opt_pos_lb}
Let Assumptions \ref{A:claim}, \ref{A:no_arb_n}, \ref{A:GE_Opt1} and \ref{A:px_range} hold. For $I^n\ni \tilde{p}^{n}\rightarrow \tilde{p}$ we have
\begin{itemize}
\item If $\tilde{p} < d$ then
\begin{equation*}
\liminf_{n\uparrow\infty} \frac{\hat{q}_n(\tilde{p}^{n})}{r_n} > 0.
\end{equation*}
\item If $\tilde{p} > d$ then
\begin{equation*}
\liminf_{n\uparrow\infty}\frac{-\hat{q}_n(\tilde{p}^n)}{r_n} > 0.
\end{equation*}
\end{itemize}
\end{theorem}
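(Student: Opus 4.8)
The plan is to argue by contradiction and exploit the concavity of $\ell \mapsto \ell p^n_{a_n}(\ell r_n)$ together with the first‑order condition \eqref{E:dual_measure}. Suppose $\tilde p < d$ but, along a subsequence, $\hat q_n(\tilde p^n)/r_n \to 0$. Write $q_n \dfn \hat q_n(\tilde p^n)$ and $\ell_n \dfn q_n/r_n \to 0$. The key object is the difference quotient of the concave function $q\mapsto q p^n_{a_n}(q)$, which by \eqref{E:total_price} has $p^n_{a_n}$ as a supergradient‑type slope; in fact the first‑order condition for the optimal position problem \eqref{E:opt_q_n} states exactly that $\tilde p^n$ is the marginal price, i.e. $\tilde p^n = \frac{d}{dq}\big(q p^n_{a_n}(q)\big)\big|_{q=q_n}$ (this is the content of \eqref{E:dual_measure}, since the dual optimizer for $q_n$ units achieves the infimum in \eqref{E:total_price}). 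By concavity of $q\mapsto q p^n_{a_n}(q)$, for any fixed $\ell\in(0,\delta)$ and $n$ large (so that $q_n < \ell r_n$), the slope at $q_n$ dominates the slope of the chord from $q_n$ to $\ell r_n$:
\begin{equation*}
\tilde p^n \;\geq\; \frac{\ell r_n\, p^n_{a_n}(\ell r_n) - q_n\, p^n_{a_n}(q_n)}{\ell r_n - q_n}.
\end{equation*}

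Now I let $n\uparrow\infty$. On the right‑hand side, $\ell r_n p^n_{a_n}(\ell r_n) \to \ell p^\infty(\ell)$ by Assumption \ref{A:GE_Opt1}, while $q_n p^n_{a_n}(q_n) = \ell_n r_n p^n_{a_n}(\ell_n r_n)$; since $\ell_n \to 0$ and $p^n_{a_n}(q_n) = \tilde p^n$ is bounded (it converges to $\tilde p$), this term is $\ell_n r_n \tilde p^n = o(r_n)$, and dividing by $\ell r_n - q_n = (\ell - \ell_n) r_n$ the whole chord slope converges to $p^\infty(\ell)$. Hence $\tilde p \geq p^\infty(\ell)$ for every $\ell \in (0,\delta)$. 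Letting $\ell \downarrow 0$ and using continuity of $p^\infty$ at $0$ (the crucial part of Assumption \ref{A:GE_Opt1}) gives $\tilde p \geq p^\infty(0) = d$, contradicting $\tilde p < d$. Therefore $\liminf_n q_n/r_n > 0$; strictly positive rather than merely nonnegative because if $q_n/r_n$ had a subsequence tending to $0$ the same argument applies to that subsequence. The case $\tilde p > d$ is symmetric, using the chord from $q_n$ to $\ell r_n$ with $\ell < 0$ (or equivalently applying the previous case to $-B$), so that $\hat q_n$ is forced to be large and negative.

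I should double check two technical points. First, that the difference quotient inequality is legitimate: $q\mapsto q p^n_{a_n}(q)$ is concave (stated in the excerpt, from \eqref{E:total_price}), so a supergradient at $q_n$ lies above every chord; the first‑order condition \eqref{E:dual_measure} identifies $\tilde p^n$ with such a supergradient (indeed with $p^n_{a_n}$ evaluated appropriately, since $p^n_{a_n}(q_n) = \espalt{\qprob^{q_n}}{B} = \tilde p^n$ is precisely the statement there — I may need the slightly finer statement that $\tilde p^n$ is the \emph{marginal} price $(qp^n_{a_n})'(q_n)$, which is also established in \cite{MR2212897}). Second, that $\ell r_n > q_n$ for $n$ large: this holds because $q_n/r_n \to 0 < \ell$. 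The main obstacle is making the passage from the first‑order optimality condition to the chord inequality rigorous — i.e. correctly invoking the convex‑analytic characterization of $\hat q_n$ from \cite{MR2212897} and being careful about whether we need the average price $p^n_{a_n}(q_n)$ or the marginal price $(q p^n_{a_n})'(q_n)$ at the optimum; once that is pinned down, the limiting argument is a routine application of Assumption \ref{A:GE_Opt1} and continuity at $0$.
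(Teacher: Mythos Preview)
Your approach is essentially the same as the paper's: both argue by contradiction, compare the value at $\hat q_n$ with the value at $\ell r_n$ via concavity of $q\mapsto q p^n_{a_n}(q)$ (the paper uses the minimality inequality $\hat q_n\tilde p^n-\hat q_n p^n_{a_n}(\hat q_n)\le \ell r_n\tilde p^n-\ell r_n p^n_{a_n}(\ell r_n)$ directly, you phrase it as a supergradient/chord inequality), pass to the limit to obtain $\tilde p\ge p^\infty(\ell)$, and then send $\ell\downarrow 0$ using continuity of $p^\infty$ at $0$.

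One correction: your claim that $p^n_{a_n}(q_n)=\tilde p^n$ is wrong --- the first-order condition \eqref{E:dual_measure} identifies $\tilde p^n$ with the \emph{marginal} price $(q p^n_{a_n}(q))'|_{q=q_n}$, not the average price $p^n_{a_n}(q_n)$. What you actually need is only that $p^n_{a_n}(q_n)$ stays bounded, and this does hold: from minimality at $q=0$ one gets $q_n p^n_{a_n}(q_n)\ge q_n\tilde p^n$, hence (using $q_n>0$, which follows since the marginal price $d_n$ at $0$ exceeds $\tilde p^n$ for large $n$) $\tilde p^n\le p^n_{a_n}(q_n)\le p^n_{a_n}(0)=d_n$. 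With this in place your limit $\ell_n p^n_{a_n}(q_n)\to 0$ is justified and the rest goes through; this boundedness step is exactly what the paper isolates in Lemma~\ref{L:one_p_holds} and the early part of the proof of Proposition~\ref{prop: one_p_pos}.
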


The problem of obtaining upper bounds for $\limsup_{n\uparrow\infty} |\hat{q}_n(\tilde{p}^{n})|/r_n$ is more subtle. First of all we need to identify the maximal range where $p^n_{a_n}(\ell r_n)$ converges. To do this, set
\begin{equation}\label{E:delta_p}
\delta^+ \dfn \sup\cbra{k > 0 \ :\ \lim_{n\uparrow\infty} p^n_{a_n}(\ell r_n) = p^\infty(\ell),\ \forall\ 0 < \ell < k} \in [\delta,\infty].
\end{equation}
\begin{equation}\label{E:delta_m}
\delta_{-} \dfn \inf\cbra{k < 0\ :\ \lim_{n\uparrow\infty} p^n_{a_n}(\ell r_n) = p^\infty(\ell),\ \forall\ 0 > \ell > k} \in [-\infty,-\delta].
\end{equation}

As discussed in Section \ref{SS:discussion}, $p^n_{a_n}(q)$ is decreasing in $q$ and hence $p^\infty(\ell)$ is decreasing in $\ell$.  Therefore, the limits
\begin{equation}\label{E:p_infity_lim}
p^\infty(\delta^+)\dfn \lim_{\ell\downarrow \delta_{-}} p^\infty(\ell);\qquad p^\infty(\delta_{-})\dfn \lim_{\ell\uparrow\delta^+} p^\infty(\ell),
\end{equation}
exist.  Furthermore, since $\underline{B}_n< p^n_{a_n}(\ell r_n) < \bar{B}_n$ for all $\ell\in\reals$ we have $\underline{B}\leq p^\infty(\delta^+) \leq p^\infty(\delta_{-}) \leq \bar{B}$, however, as the example in Section \ref{SS:discussion_2} below shows, each of these inequalities may be strict.  In particular, the range of limiting indifference prices along the rate $r_n$ may deviate from the arbitrage free prices.

With this notation, we now provide the corresponding upper bounds for optimal positions.

\begin{theorem}\label{T:opt_pos_ub}
Let Assumptions \ref{A:claim}, \ref{A:no_arb_n}, \ref{A:GE_Opt1} and \ref{A:px_range} hold. Define $\delta^+, \delta_{-}$ as in \eqref{E:delta_p} and \eqref{E:delta_m} respectively. For $I^n\ni\tilde{p}^{n}\rightarrow \tilde{p}$ we have
\begin{itemize}
\item If $p^\infty(\delta^+) < \tilde{p} < d$ then
\begin{equation*}
\limsup_{n\uparrow\infty} \frac{\hat{q}_n(\tilde{p}^n)}{r_n} < \delta^+.
\end{equation*}
\item If $d < \tilde{p} < p^\infty(\delta_{-})$ then
\begin{equation*}
\limsup_{n\uparrow\infty} \frac{-\hat{q}_n(\tilde{p}^n)}{r_n} < -\delta_{-}.
\end{equation*}
\end{itemize}
\end{theorem}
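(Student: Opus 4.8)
The plan is to exploit the first-order condition \eqref{E:dual_measure}, which says that $\tilde{p}^n = \espalt{\qprob^{\hat{q}_n}}{B}$, together with the variational representation \eqref{E:total_price} of the total price. The key link is that $\hat{q}_n$ is the maximizer of the concave map $q\mapsto u^n_{a_n}(-q\tilde{p}^n, q)$, which by \eqref{E:util_indiff_px} and \eqref{E:indiff_px} is equivalent (after taking $-\frac{1}{a_n}\log$ and using $u^n_{a_n}(x,q)=e^{-a_nx}u^n_{a_n}(0,q)$) to saying that $\hat{q}_n$ maximizes $q\mapsto q(p^n_{a_n}(q) - \tilde{p}^n)$, equivalently $q\mapsto q p^n_{a_n}(q) - q\tilde{p}^n$. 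Thus $\hat{q}_n$ is the point where the concave function $q\mapsto qp^n_{a_n}(q)$ has subgradient $\tilde{p}^n$. I would first establish, for the case $p^\infty(\delta^+) < \tilde{p} < d$, that $\hat{q}_n > 0$ for $n$ large (this follows from Theorem \ref{T:opt_pos_lb}, since $\tilde{p}<d$).

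Next, I would argue by contradiction: suppose $\limsup_{n\uparrow\infty} \hat{q}_n/r_n \geq \delta^+$ along a subsequence. Pick $\ell_0$ with $0 < \ell_0 < \delta^+$ and $\hat{q}_n/r_n \geq \ell_0$ along that subsequence (using $\delta^+>0$). Since $q\mapsto q p^n_{a_n}(q) - q\tilde{p}^n$ is maximized at $\hat{q}_n$ and is concave, its value at $\hat{q}_n$ is at least its value at $\ell_0 r_n$, i.e.
\begin{equation*}
\hat{q}_n\bra{p^n_{a_n}(\hat{q}_n) - \tilde{p}^n} \geq \ell_0 r_n\bra{p^n_{a_n}(\ell_0 r_n) - \tilde{p}^n}.
\end{equation*}
On the right, $p^n_{a_n}(\ell_0 r_n) \to p^\infty(\ell_0)$ by definition of $\delta^+$ in \eqref{E:delta_p}, and $p^\infty(\ell_0) \geq p^\infty(\delta^+) > \tilde{p} = \lim \tilde{p}^n$ since $p^\infty$ is decreasing; hence the right side is bounded below by $c\, r_n$ for some $c>0$ and $n$ large, so $\hat{q}_n\bra{p^n_{a_n}(\hat{q}_n) - \tilde{p}^n} \to +\infty$. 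On the other hand, using $p^n_{a_n}(q) < \bar{B}_n$ and concavity of $q\mapsto qp^n_{a_n}(q)$, I want to bound this quantity from above. The cleanest route: since $q\mapsto qp^n_{a_n}(q)$ is concave with $0\cdot p^n_{a_n}(0)=0$, for $q>0$ we have $p^n_{a_n}(q) \leq p^n_{a_n}(\ell_1 r_n)$ whenever $q \geq \ell_1 r_n$ (monotonicity), and I would choose $\ell_1$ slightly larger than $\ell_0$, still below $\delta^+$; then on the event $\hat{q}_n\geq \ell_1 r_n$ we get $p^n_{a_n}(\hat{q}_n) \leq p^\infty(\ell_1) + o(1)$. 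The subtlety is that I also need $\hat{q}_n$ itself not to blow up faster than $r_n$ in order to conclude — but actually the contradiction comes out differently: combine the lower bound $\hat{q}_n(p^n_{a_n}(\hat{q}_n)-\tilde{p}^n)\geq \ell_0 r_n(p^\infty(\ell_0)-\tilde{p}+o(1))$ with the fact that near $\delta^+$, $p^\infty(\ell)\downarrow p^\infty(\delta^+)\geq \tilde{p}$...

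Let me reconsider: the genuinely efficient argument uses the dual/entropic side. From \eqref{E:total_price}, $\hat{q}_n p^n_{a_n}(\hat{q}_n) = \hat{q}_n\espalt{\qprob^{\hat{q}_n}}{B} + \frac{1}{a_n}(\relent{\qprob^{\hat{q}_n}}{\prob} - \relent{\qprob^n_0}{\prob})$, and by \eqref{E:dual_measure} the first term is $\hat{q}_n\tilde{p}^n$, so $\hat{q}_n(p^n_{a_n}(\hat{q}_n) - \tilde{p}^n) = \frac{1}{a_n}(\relent{\qprob^{\hat{q}_n}}{\prob} - \relent{\qprob^n_0}{\prob}) \geq 0$, confirming the left side is nonnegative but, more usefully, giving an exact identity. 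The real input must then be an a priori estimate that $p^n_{a_n}(\hat{q}_n)$ cannot stay bounded away below $\tilde{p}$: indeed if $\hat{q}_n/r_n$ stays $\geq \delta^+$, then by monotonicity and the \emph{definition} of $\delta^+$ as the supremal convergence range, together with concavity of $\ell\mapsto \ell p^\infty(\ell)$ on $(0,\delta)$, the difference quotient $(\hat{q}_n p^n_{a_n}(\hat{q}_n) - \ell_0 r_n p^n_{a_n}(\ell_0 r_n))/(\hat{q}_n - \ell_0 r_n) \leq p^n_{a_n}(\ell_0 r_n) \to p^\infty(\ell_0)$; rearranging and using the lower bound on $\hat{q}_n$ forces $p^n_{a_n}(\hat{q}_n) \leq \tilde{p}^n$ asymptotically (else the maximizer would move further right), contradicting that the subgradient at $\hat{q}_n$ equals $\tilde{p}^n > p^\infty(\delta^+)$.

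I expect the main obstacle to be making the concavity/subgradient comparison fully rigorous \emph{at the level of the pre-limiting functions} $\ell\mapsto \ell p^n_{a_n}(\ell r_n)$ and then passing to the limit only where convergence is guaranteed, i.e. on $(0,\delta^+)$ — in particular handling the possibility that $\hat{q}_n/r_n$ oscillates, and ruling out the degenerate scenario where $\hat{q}_n/r_n \to \delta^+$ exactly (which the strict inequality $\tilde{p} < d$ and $p^\infty(\delta^+) < \tilde{p}$ should exclude via a strict-slope argument). The case $d < \tilde{p} < p^\infty(\delta_{-})$ is symmetric, handled by replacing $B$ with $-B$, $q$ with $-q$, which swaps $\delta^+ \leftrightarrow -\delta_{-}$ and $\underline{B}_n \leftrightarrow \bar{B}_n$.
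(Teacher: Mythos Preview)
Your setup is right --- $\hat{q}_n$ maximizes $q\mapsto q(p^n_{a_n}(q)-\tilde{p}^n)$ and a contradiction from $\limsup \hat{q}_n/r_n \geq \delta^+$ is the correct route --- but your first attempt contains a sign error that derails everything afterwards: you assert $p^\infty(\ell_0)\geq p^\infty(\delta^+) > \tilde{p}$, whereas the hypothesis is $p^\infty(\delta^+) < \tilde{p}$. This is why you then go hunting through the entropic identity and subgradient language without landing a conclusion; none of that machinery is needed.

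The paper's argument (carried out abstractly in Proposition~\ref{prop: one_p_pos}) is a small variant of your comparison, but it compares at a \emph{fixed} $K>0$ rather than at $\ell_0 r_n$. Along a subsequence with $\hat{q}_n/r_n\to k\geq \delta^+$, for any $0<c<\delta^+$ one has $\hat{q}_n\geq cr_n$ eventually, so optimality together with monotonicity of $q\mapsto p^n_{a_n}(q)$ give
\[
K\bigl(\tilde{p}^n - p^n_{a_n}(K)\bigr)\ \geq\ \hat{q}_n\bigl(\tilde{p}^n - p^n_{a_n}(\hat{q}_n)\bigr)\ \geq\ \hat{q}_n\bigl(\tilde{p}^n - p^n_{a_n}(cr_n)\bigr).
\]
Divide by $\hat{q}_n$: the left side tends to $0$ (since $p^n_{a_n}(K)$ stays bounded, being trapped between $p^n_{a_n}(c'r_n)\to p^\infty(c')$ and $d_n\to d$ for any fixed $0<c'<\delta^+$) while the right side tends to $\tilde{p}-p^\infty(c)$. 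Hence $\tilde{p}\leq p^\infty(c)$ for every $c<\delta^+$, and letting $c\uparrow\delta^+$ yields $\tilde{p}\leq p^\infty(\delta^+)$, the desired contradiction.

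Your own two-point idea is in fact salvageable once the signs are sorted: pick $\ell_0$ \emph{small} so that $p^\infty(\ell_0)>\tilde{p}$ (possible since $p^\infty(0)=d>\tilde{p}$ and $p^\infty$ is continuous at $0$) and $\ell_1<\delta^+$ \emph{close to} $\delta^+$ so that $p^\infty(\ell_1)<\tilde{p}$ (possible since $p^\infty(\delta^+)<\tilde{p}$ and $p^\infty$ is continuous on $(0,\delta^+)$). On the subsequence, $\hat{q}_n\geq \ell_1 r_n$ eventually, so $p^n_{a_n}(\hat{q}_n)\leq p^n_{a_n}(\ell_1 r_n)\to p^\infty(\ell_1)<\tilde{p}$, forcing the objective value at $\hat{q}_n$ to be eventually negative; yet the objective value at $\ell_0 r_n$ tends to $+\infty$. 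That already contradicts optimality of $\hat{q}_n$. The ``subtlety'' you flag about $\hat{q}_n$ possibly blowing up faster than $r_n$ is not an obstacle: only the sign of $p^n_{a_n}(\hat{q}_n)-\tilde{p}^n$ is needed, not its magnitude.
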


Note the strict inequality above implies, for example, that when $\delta^+=\infty$ we have $\limsup_{n\uparrow\infty} \hat{q}_n(\tilde{p}^n)/r_n<\infty$. Lastly, let us discuss when one actually has true convergence.  As seen in Section \ref{SS:discussion} the map $\ell\mapsto \ell p^\infty(\ell)$ is concave. Here, we strengthen this by assuming:
\begin{assumption}\label{A:strict_conc}
The function $\ell\mapsto \ell p^\infty(\ell)$ is strictly concave on $(\delta_{-},\delta^+)$.
\end{assumption}

Then, we have the following Corollary which ensures the limit $\hat{q}_n/r_n$ actually exists:
\begin{corollary}\label{C:opt_pos}
Let Assumptions \ref{A:claim}, \ref{A:no_arb_n}, \ref{A:GE_Opt1}, \ref{A:px_range} and \ref{A:strict_conc} hold. Define $\delta^+, \delta_{-}$ as in \eqref{E:delta_p} and \eqref{E:delta_m} respectively. Let $I^n\ni\tilde{p}^{n}\rightarrow \tilde{p} $. If $p^\infty(\delta^+) < \tilde{p}\neq d < p^\infty(\delta_{-})$ then
\begin{equation*}
\lim_{n\uparrow\infty}\frac{\hat{q}_n(\tilde{p}^n)}{r_n} = \ell\in(\delta_{-},\delta^{+})\setminus\{0\}.
\end{equation*}
\end{corollary}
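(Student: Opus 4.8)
The plan is to combine the two one-sided bounds from Theorems \ref{T:opt_pos_lb} and \ref{T:opt_pos_ub} with the first-order condition \eqref{E:dual_measure} and strict concavity to pin down a unique limit point. Assume $\tilde p < d$ (the case $\tilde p > d$ is symmetric). By Theorem \ref{T:opt_pos_lb}, $\liminf_n \hat q_n/r_n =: \underline\ell > 0$, and by Theorem \ref{T:opt_pos_ub} (using $p^\infty(\delta^+) < \tilde p$), $\limsup_n \hat q_n/r_n =: \bar\ell < \delta^+$. Hence along any subsequence, $\hat q_n/r_n$ takes values in a compact subset of $(0,\delta^+)$, so every subsequence has a further subsequence converging to some $\ell \in [\underline\ell,\bar\ell] \subset (0,\delta^+)$. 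It remains to show every such subsequential limit $\ell$ equals the same value, and then the full sequence converges.

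The key step is to characterize the subsequential limit $\ell$ via the optimality condition. First, by \eqref{E:continuity} — which is available because $\hat q_n/r_n \to \ell \in (\delta_-,\delta^+)$ and $p^\infty$ is continuous there (continuity on $(0,\delta^+)$ follows from concavity of $\ell\mapsto\ell p^\infty(\ell)$ as in Section \ref{SS:discussion}) — we get $p^n_{a_n}(\hat q_n) \to p^\infty(\ell)$ along the subsequence. Second, I want to relate $p^n_{a_n}(\hat q_n)$ to $\tilde p^n$. The optimal position $\hat q_n$ maximizes $q \mapsto u^n_{a_n}(-q\tilde p^n, q) = -e^{a_n q \tilde p^n} u^n_{a_n}(0,q)$; taking logs and using $u^n_{a_n}(0,q)/u^n_{a_n}(0) = \exp(-a_n q\, p^n_{a_n}(q))$ from \eqref{E:indiff_px}, maximizing $u^n_{a_n}(-q\tilde p^n,q)$ is equivalent to maximizing $q\,(\tilde p^n - p^n_{a_n}(q)) = q\tilde p^n - q\,p^n_{a_n}(q)$ over $q \in \reals$. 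Scaling $q = \ell' r_n$ and dividing by $r_n$, $\hat q_n/r_n$ maximizes $\ell' \mapsto \ell'\tilde p^n - \ell'\, p^n_{a_n}(\ell' r_n)$. Passing to the limit along the subsequence — justified by concavity of these functions plus pointwise convergence $\ell' p^n_{a_n}(\ell' r_n) \to \ell' p^\infty(\ell')$ on $(\delta_-,\delta^+)$, which upgrades to local uniform convergence and convergence of maximizers for concave functions — the limit $\ell$ must maximize $\ell' \mapsto \ell'\tilde p - \ell' p^\infty(\ell')$ over $(\delta_-,\delta^+)$. Equivalently, $\ell$ solves $\tilde p \in \partial_{\ell'}\big(\ell' p^\infty(\ell')\big)\big|_{\ell'=\ell}$.

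Now strict concavity of $\ell'\mapsto \ell' p^\infty(\ell')$ (Assumption \ref{A:strict_conc}) forces this maximizer to be unique: the superdifferential of a strictly concave function is a strictly decreasing set-valued map, so there is at most one $\ell$ with $\tilde p$ in its superdifferential. Since we have already shown all subsequential limits lie in $(0,\delta^+)$ and each is a maximizer, they all coincide with this unique $\ell$, and moreover $\ell\neq 0$ since $\underline\ell>0$. Therefore $\hat q_n/r_n \to \ell \in (\delta_-,\delta^+)\setminus\{0\}$. The case $\tilde p > d$ gives $\ell \in (\delta_-,0)$ by the symmetric argument. The main obstacle I anticipate is making the "convergence of maximizers" step rigorous: one needs that pointwise convergence of the concave functions $\ell' \mapsto \ell' p^n_{a_n}(\ell' r_n)$ on the open interval $(\delta_-,\delta^+)$ implies local uniform convergence (standard for concave functions) and then that maximizers of the perturbed functions converge to the maximizer of the limit — this requires ruling out escape of mass toward $\delta^\pm$, which is exactly what the strict inequality $\limsup_n \hat q_n/r_n < \delta^+$ from Theorem \ref{T:opt_pos_ub} provides, together with the matching lower bound.
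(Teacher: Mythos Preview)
Your approach is essentially the paper's: use Theorems \ref{T:opt_pos_lb} and \ref{T:opt_pos_ub} to trap all subsequential limits in a compact interval, show each such limit optimizes the limiting objective, and invoke strict concavity for uniqueness. One slip: since $u^n_{a_n}(0)<0$, maximizing $u^n_{a_n}(-q\tilde p^n,q)=u^n_{a_n}(0)\,e^{a_n(q\tilde p^n - q p^n_{a_n}(q))}$ is equivalent to \emph{minimizing} $q\tilde p^n - q p^n_{a_n}(q)$, not maximizing it; the functions $\ell'\mapsto \ell'\tilde p^n - \ell' p^n_{a_n}(\ell' r_n)$ are convex, so your ``convergence of maximizers for concave functions'' sentence should read minimizers of convex functions. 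With that fix the argument goes through.

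The paper's execution is slightly more elementary than yours. Rather than invoking local uniform convergence of concave functions and a general argmin-convergence principle, it simply uses the optimality of $\hat q_n$ against a fixed test point $\tau r_n$ with $\tau\in[\underline\ell,\bar\ell]$: from $\hat q_n\tilde p^n - \hat q_n p^n_{a_n}(\hat q_n)\le \tau r_n\tilde p^n - \tau r_n p^n_{a_n}(\tau r_n)$, divide by $r_n$ and send $n\uparrow\infty$ using \eqref{E:continuity} to get $\ell\tilde p - \ell p^\infty(\ell)\le \tau\tilde p - \tau p^\infty(\tau)$ for every $\tau\in[\underline\ell,\bar\ell]$. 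Strict convexity of $\tau\mapsto \tau\tilde p - \tau p^\infty(\tau)$ then forces all subsequential limits to coincide. Your route is correct but carries a bit more machinery; the paper's direct inequality avoids having to argue that the argmin does not escape to the boundary, since the comparison is only made inside $[\underline\ell,\bar\ell]$.
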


The proofs of Theorems \ref{T:opt_pos_lb}, \ref{T:opt_pos_ub} and of Corollary \ref{C:opt_pos} are in Appendix \ref{S:pf_opt_pos}.

\subsection{Discussion}\label{SS:discussion_2}

Presently, we point out some conclusions and subtleties associated
to the above results.  First, when we put together Theorems
\ref{T:opt_pos_lb}, \ref{T:opt_pos_ub} we see that if the price
$\tilde{p}^n\in I^n$ converges to $\tilde{p}$ where
$p^\infty(\delta^+)<\tilde{p}<p^\infty(\delta_{-}), p\neq d$ then
up to subsequences we have $\hat{q}_n(\tilde{p}^n)/r_n\rightarrow
\ell \in (\delta_{-},\delta^+)\setminus\{0\}$, which by Corollary
\ref{C:opt_pos} becomes true convergence if $\ell\mapsto\ell
p^\infty(\ell)$ is strictly concave. Note also that by
\eqref{E:continuity}, under optimal positions we have convergence
of indifference prices as well, i.e.
$p^n_{a_n}(\hat{q}_n(\tilde{p}^n)) \rightarrow p^\infty(\ell)$.

Second, assume for example that $\delta^+ = \infty$.  Then, another straightforward calculation shows (recall \eqref{E:asympt_arb_range})
\begin{equation*}
\underline{B} < \tilde{p} < \lim_{\ell\uparrow\infty}
p^\infty(\ell) \Longrightarrow \lim_{n\uparrow\infty}
\frac{\hat{q}_n(\tilde{p})}{r_n} = \infty,
\end{equation*}
provided of course such a $\tilde{p}$ exists. This offers a converse to Theorem \ref{T:opt_pos_ub}.

Third, let us briefly discuss the degenerate case where $r_{n}$ is (chosen) such  that $p^{\infty}(\ell)=d$ for all $\ell\in(\delta_{-},\delta^{+})$. In this case,  a range of different phenomena can occur. For illustration purposes, we consider the following example, taken from \cite{Robertson_Spil_2014}.  In the $n^{th}$ market, the claim decomposes into a replicable piece $D_n$ (with replicating
 capital $d_n$) and a piece $Y_n$ which is independent of $S^n$.  Now, assume $Y_n \sim N(0,\gamma_n)$ under $\prob$ and fix the risk aversion $a_n\equiv a$.  Here, the indifference price is
\begin{equation*}
p^n_{a}(q) = d_n -\frac{1}{aq}\log\left(\espalt{}{e^{-aqY_n}}\right) = d_n -\frac{1}{2}aq\gamma^2_n.
\end{equation*}

The range of arbitrage free prices is maximal: i.e. $\underline{B}_n
= -\infty$, $\bar{B}_n = \infty$. For $\tilde{p}^n\in\reals$ the
optimal purchase quantity found by minimizing $q\tilde{p}^n -
qp^n_{a}(q)$ is
\begin{equation*}
\hat{q}_n(\tilde{p}^n)  = -\frac{\tilde{p}^n-d_n}{a\gamma^2_n}.
\end{equation*}

Now, assume that $\gamma_n\rightarrow 0, d_n\rightarrow d$. With
$r_n = \gamma^{-2}_n\rightarrow\infty$, Assumption
\ref{A:GE_Opt1} holds with $p^\infty(\ell) = d-(1/2)a\ell$,
$\delta_{-}=-\infty$ and $\delta^+=\infty$. Note that $\ell p^\infty(\ell) = \ell d - (1/2)a\ell^2$ is strictly concave. Here, if
$\tilde{p}^n\rightarrow\tilde{p}\in\reals$ we have that
\begin{equation*}
\frac{\hat{q}_n(\tilde{p}^n)}{r_n} = -\frac{\tilde{p}^n-d_n}{a} \rightarrow -\frac{\tilde{p}-d}{a}.
\end{equation*}
So, both Theorems \ref{T:opt_pos_lb}, \ref{T:opt_pos_ub} hold.

Now, change $r_n$ so that $r_n = \gamma_n^{-1}\rightarrow \infty$. Then, Assumption
\ref{A:GE_Opt1} still holds with $p^\infty(\ell) = d$,
$\delta_{-}=-\infty$ and $\delta^+=\infty$.  In this instance, however, the map $\ell p^\infty(\ell) = \ell d$ is not strictly concave. Here, if
$\tilde{p}^n\rightarrow\tilde{p}\in\reals$ (which is still
arbitrage free since this property does not depend upon $r_n$) we
have
\begin{equation*}
\frac{\hat{q}_n(\tilde{p}^n)}{r_n} = -\frac{\tilde{p}^n-d_n}{a\gamma_n}.
\end{equation*}
So, if $\tilde{p}<d$ the ratio goes to $\infty$, if $\tilde{p} >
d$ the ratio goes to $-\infty$ and if $\tilde{p}=d$ then a variety
of phenomena can occur depending on the rates at which
$\gamma_n\rightarrow 0, \tilde{p}^n\rightarrow \tilde{p}$ and
$d_n\rightarrow d$.  Even though the behavior is degenerate in
this case, it \emph{does not} contradict either Theorem
\ref{T:opt_pos_lb} or \ref{T:opt_pos_ub}. In particular, Theorem
\ref{T:opt_pos_ub} is vacuous in this case since $p^\infty(\ell) =
d$ for all $\ell$.

The above example is related to the well known fact from Large Deviations that a LDP may hold for the same sequence of random variables with two different rates $\cbra{r_n},\cbra{r'_n}$ with $r_n/r'_n \rightarrow 0$. The resulting rate functions however, in an analogous manner to the resultant limiting indifference prices above, may provide drastically different levels of information.

\subsection{On the Normalized Optimal Wealth Process}\label{SS:opt_strat}

For a given $n$, fixed risk aversion $a$ and position size $q_n$, recall the optimal wealth process $\hat{X}^n_{a}(q_n)$ from Section \ref{S:model}. Heuristically, as $|q_n|\rightarrow\infty$ one expects $\hat{X}^n_a(q_n)$, as well as the optimal strategy $\hat{\pi}^n_a(q_n)$,  to grow on the order of $|q_n|$. However, if we normalize the wealth process by the position size then it is reasonable to ask if some type of convergence takes place. To this end we define the normalized wealth process $\tilde{X}$ via
\begin{equation}\label{E:norm_opt_wealth}
\tilde{X}^n_{a}(q_n) \dfn \frac{1}{q_n}\hat{X}^n_{a}(q_n).
\end{equation}
Note that $\tilde{X}^n_{a}(q_n)$ is in fact a wealth process, obtained from the (acceptable) normalized optimal trading strategy
$\tilde{\pi}^n_{a}(q_n) = (1/q_n)\hat{\pi}^n_{a}(q_n)$. We wish to stress that convergence of the normalized optimal wealth process
is a topic on its own and we do not study it in this paper. However, we mention some interesting and motivating straightforward conclusions.

Let us come back to \eqref{E:indiff_px}, re-written here as $-au^n_a(0)e^{-aq_n p^n_a(q_n)} =\espalt{}{e^{-q_na\left(\tilde{X}^n_a(q_n)_T+B\right)}}$. Since $-au^n_a(0)\leq 1$ we immediately see that
\begin{equation}\label{E:exp_trading_lb}
\espalt{}{e^{-q_na\left(\tilde{X}^n_a(q_n)_T + B - p^n_a(q_n)\right)}} = -au^n_a(0)\leq 1.
\end{equation}
By Markov's inequality we have the elementary estimate:
\begin{equation*}
\prob\bra{\tilde{X}^n_a(q_n)_T + B - p^n(q_n) \leq -\gamma} \leq e^{-q_n a\gamma};\qquad \gamma\in\reals.
\end{equation*}
Thus, we see that for any $q_n\uparrow\infty$ the portfolio obtained by buying one unit of $B$ for $p^n_a(q_n)$ and trading according to the normalized optimal trading strategy provides a super-hedge of $0$ in $\prob-$probability in that for all $\gamma > 0$
\begin{equation}
\lim_{n\uparrow\infty} \prob\bra{\tilde{X}^n_a(q_n)_T + B - p^n_a(q_n) \leq -\gamma} = 0,\label{Eq:SuperHedge}
\end{equation}
and in fact, the convergence to $0$ is exponentially fast.  This result essentially follows because of risk aversion and is valid
under the minimal Assumptions \ref{A:claim} and \ref{A:no_arb_n}. If we consider optimal positions then one can say more and characterize the super-hedge more precisely. We first adapt the set-up of \cite{MR2152255} and enforce the following assumptions on the claim $B$ and filtration $\filt$:
\begin{assumption}\label{A:bdd_claim}
$B$ is bounded: i.e. $\|B\|_{\Lb^{\infty}} < \infty$.
\end{assumption}

\begin{ass}\label{A:filt_cont}
The filtration $\filt$ is continuous.
\end{ass}

Under Assumptions \ref{A:bdd_claim}, \ref{A:filt_cont}, Theorem 13 of \cite{MR2152255}, says that for any $q_n$
\begin{equation}\label{E:indiff_strat_rel}
q_n B = q_n p^n_a(q_n) + \frac{a}{2}\langle \hat{L}^n_a(q_n)\rangle_T - \hat{L}^n_a(q_n)_T - \hat{X}_a^n(q_n)_T + \hat{X}_a^n(0)_T,
\end{equation}
where $\hat{L}^n_a(q_n)$ is a $\qprob^n_0$ martingale strongly orthogonal to $S^n$ under $\qprob^n_0$.  Dividing by $q_n$ and setting $\tilde{L}^n_a(q_n) = (1/q_n)\hat{L}^n_a(q_n)$ as the normalized orthogonal $\qprob^n_0$ martingale we obtain
\begin{equation}\label{E:indiff_strat_norm}
\tilde{X}^n_a(q_n)  + B - p^n_a(q_n) = \frac{aq_n}{2}\langle \tilde{L}^n_a(q_n)\rangle_T - \tilde{L}^n_a(q_n)_T + \frac{1}{q_n}\hat{X}^n_a(0).
\end{equation}
Next, as shown in \cite[Theorem 19]{MR2152255}, $\sup_n\left(q_n\espalt{\qprob^n_0}{\langle\tilde{L}^n_a(q_n)\rangle_T}\right) < \infty$, which implies that $\tilde{L}^n_a(q_n)_T$ goes to $0$ in $\qprob^n_0$-$L^2$ as  $q_n\rightarrow\infty$. Lastly, to evaluate $(1/q_n) \hat{X}^n_a(0)_T$ as $q_n\rightarrow\infty$ we impose the following mild asymptotic no arbitrage condition (see \cite[pp. 9]{Robertson_2012}):
\begin{assumption}\label{A:asympt_no_arb}
$\limsup_{n\uparrow\infty} \relent{\qprob^n_0}{\prob} < \infty$.
\end{assumption}
Assumption \ref{A:asympt_no_arb} implies $(1/q_n)\hat{X}^n_a(0)_T$ goes to $0$ in $\qprob^n_0$ probability as $q_n\rightarrow\infty$.  Indeed, using the first order relation in \eqref{E:min_ent_density} a straight-forward calculation shows that for any $\eps, q_n>0$ that
\begin{equation*}
\begin{split}
\qprob^n_0\bra{\frac{1}{q_n}\hat{X}^n_a(0)_T \geq \eps} &\leq e^{\relent{\qprob^n_0}{\prob} - aq_n\eps};\\
\qprob^n_0\bra{\frac{1}{q_n}\hat{X}^n_a(0)_T \leq -\eps} &\leq \frac{\relent{\qprob^n_0}{\prob} + e^{-1}}{\eps a q_n + \relent{\qprob^n_0}{\prob}},
\end{split}
\end{equation*}
from which the statement immediately follows. With these preparations, now consider when, additionally, Assumptions \ref{A:GE_Opt1} and \ref{A:px_range} hold, and positions are taking optimally: i.e.
$q_n = \hat{q}_n =  \hat{q}_n(\tilde{p}^n)$ where $I^n \ni
\tilde{p}^n \rightarrow \tilde{p}$ with $p^\infty(\delta^+) <
\tilde{p}< p^\infty(\delta_-),p\neq d$. Then, from Theorems
\ref{T:opt_pos_lb}, \ref{T:opt_pos_ub} we have up to subsequences
(or, under the Assumptions of Corollary \ref{C:opt_pos}, for all
subsequences) that $\hat{q}_n/r_n\rightarrow \ell \in
(\delta_{-},\delta^+)\setminus\{0\}$ and that $p^n_{a}(\hat{q}_n)
\rightarrow p^\infty(\ell)$. Thus, we obtain  that in
$\qprob^n_0$-probability
\begin{equation}
\tilde{X}_a^n(\hat{q}_n)_{T}+B-   p^\infty(\ell)-
\frac{a\hat{q}_n}{2}\langle L^n_{a}(\hat{q}_n)\rangle_T \rightarrow 0,\label{Eq:LimitingResidualRisk}
\end{equation}
which implies that the excess hedge is precisely
$a\hat{q}_n\langle \tilde{L}^n_a(q_n)\rangle_T/2$ in
$\qprob^n_0-$probability limit as $n\rightarrow\infty$. Even
though this result is interesting, one would like to have the same
statement under the $\prob$ measure. This is true if the measure
$\prob$ is contiguous with respect to the measure $\qprob^n_0$,
i.e. that $\qprob^n_0(A_{n})\rightarrow 0$ implies
$\prob(A_{n})\rightarrow 0$ for every sequence of measurable sets
$\{A_{n}\}_{n\in\mathbb{N}}$, e.g. Chapter 6 of
\cite{vanderVaart}. The classical Le Cam's first lemma (Lemma 6.4
in \cite{vanderVaart}) provides sufficient and necessary
conditions for contiguity.

Lastly, assume that $q_n = q$ is fixed and come back to \eqref{E:indiff_strat_norm}. Taking expectations yields
\begin{equation*}
d_n - p^n_a(q) = \frac{aq}{2}\espalt{\qprob^n_0}{\langle
\hat{L}^n_a(q)\rangle_T},
\end{equation*}
where we recall that $d_n = \espalt{\qprob^n_0}{B}$. As discussed
in Section \ref{SSS:vanish_hedge}, Assumption \ref{A:GE_Opt1}
implies $p^n_a(q)\rightarrow d$ and hence $\lim_{n\uparrow\infty}
\espalt{\qprob^n_0}{\langle \hat{L}^n_a(q)\rangle_T} = 0$ which in
turn implies that both $\langle \hat{L}^n_a(q)\rangle_T$,
$\hat{L}^n_a(q)_T$ go to zero in $\qprob^n_0$ probability as
$n\rightarrow\infty$. Therefore, for fixed position sizes, we have
in view of \eqref{E:indiff_strat_norm}, that $\tilde{X}^n_a(q)_T - (1/q)\hat{X}^n_a(0)_T +
B - d$ goes to zero in $\qprob^n_0$ probability and hence, under
the additional contiguity assumption, the claim is asymptotically
hedgeable.  This makes precise the connection between Assumption
\ref{A:GE_Opt1} and vanishing hedging errors mentioned in Section
\ref{SSS:vanish_hedge}.
\subsection{On a Characterization of $r_n$}  \label{S:CharacterizationOf_r_n}

As in the previous section, we let Assumptions \ref{A:no_arb_n}, \ref{A:GE_Opt1}, \ref{A:bdd_claim} and \ref{A:filt_cont} hold. Using the results of \cite{MR2152255}, we give a characterization for $r_n$ which in a sense justifies the interpretation of $r_{n}$ as the speed at which the market becomes complete. Recalling \eqref{E:d_dn_def}, (\ref{E:indiff_strat_rel}) and the normalized orthogonal martingale $\tilde{L}^n_a(q_n)$ we get
\begin{equation*}
d_n =  p^n_a(q_n)+ \frac{aq_n}{2} \espalt{\qprob^n_0}{\langle
\tilde{L}^n_a(q_n)\rangle_T}.
\end{equation*}

Now, let $q_{n}=\ell r_{n}$ for some $|\ell|<\delta$ (which, by Corollary \ref{C:opt_pos} and \eqref{E:continuity} essentially includes the case of optimal positions). We thus have
\begin{equation}
\lim_{n\uparrow\infty} \frac{r_{n}}{2}\espalt{\qprob^n_0}{\langle \tilde{L}^n_a(\ell r_n)\rangle_T} = \frac{d - p^\infty(\ell)}{a\ell}.\label{Eq:Limit_r_n}
\end{equation}

This conforms to the ``asymptotically complete'' case. The
normalized hedging error under optimal positions $\hat{q}_n\approx
\ell r_n$ is approximately (up to a multiplicative constant)
$\espalt{\qprob^n_0}{\langle \tilde{L}^n_a(\ell r_n)\rangle_T}$.
If the market is becoming complete we expect that for
$n\rightarrow\infty$
\[
\espalt{\qprob^n_0}{\langle \tilde{L}^n_a(\ell r_n)\rangle_T}\rightarrow 0.
\]
The speed at which it goes to $0$ thus becomes $r_n^{-1}$ and at this scaling we have convergence of prices.

In Sections \ref{S:examples} and \ref{SS:BS_trans} we study a number of examples where $r_{n}$ can be computed explicitly. One would like to have an abstract formula that explicitly characterizes $r_{n}$, as \eqref{Eq:Limit_r_n} contains $r_n$ within the normalized hedging error $\langle \tilde{L}^n_a(\ell r_n)\rangle$.  Notice that (\ref{Eq:Limit_r_n}) holds for all $|\ell| < \delta$. So, one is tempted to take limits as $\ell\rightarrow 0$ on both sides, and, if one can interchange the  $n\uparrow\infty$ limit with the $\ell\rightarrow 0$ limit, pass the latter limit inside the expectation, and if $p^{\infty}(\ell)$ is both strictly decreasing and differentiable at $\ell=0$, then for $n$ large enough
\begin{equation*}
r_{n}\approx -\frac{2\dot{p}^\infty(0)}{a}\times \frac{1}{\espalt{\qprob^n_0}{\langle \tilde{L}^n_a(0)\rangle_T}}.
\end{equation*}
Here, the interpretation of $r_n^{-1}$ as a market incompleteness factor is much more transparent.  Indeed, define $\check{X}^n, \check{L}^n$ through the Kunita-Watanabe decomposition of $-B$ with respect to the subspace of $L^2(\qprob^n_0;\F_T)$ generated by trading in $S^n$ so that $B = \espalt{\qprob^n_0}{B} - \check{L}^n_T - \check{X}^n_T$. Then, as shown in \cite[Section 6.1]{MR2152255} we have the following limits in $L^2(\qprob^n_0;\F_T)$
\begin{equation*}
\lim_{q\downarrow 0} \tilde{L}^n_a(q)_T = \check{L}_T;\qquad \lim_{q\downarrow 0}\left(\tilde{X}^n_a(q)_T - \frac{1}{q}\hat{X}^n_a(0)_T\right) = \check{X}^n_T.
\end{equation*}
In other words, $\tilde{L}^n_a(0)$ describes the hedging error associated to $B$, with size $\espalt{\qprob^n_0}{\langle \tilde{L}^n_a(0)\rangle_T} \propto r_n^{-1}$.  Thus $r_n^{-1}$ acts as the market incompleteness factor, and, as the market becomes complete, we see that $r_n\rightarrow\infty$.

The derivation of this statement is of course heuristic. Rigorous proof of this result seems to be quite hard, but we nevertheless
present the argument as  it provides more intuition into the problem. We choose to leave the rigorous derivation of this result
and further consequences as a future interesting work.

\subsection{Optimal Position Taking for General Utilities}\label{SS:opt_pos_gen_util}

The optimal position taking results in Theorems \ref{T:opt_pos_lb} and \ref{T:opt_pos_ub} readily extend to general utility functions on the real line.  This essentially follows from \cite{Robertson_2012}. Throughout this section we fix the risk aversion at $a>0$. Define $\Ua$ as the class of utility functions on $\reals$ (i.e. $U\in C^2(\reals)$, strictly increasing and strictly concave) satisfying
\begin{itemize}
\item The absolute risk aversion of $U$ is bounded between two positive constants: i.e. for $0 < \underline{a}_U < \bar{a}_U$:
\begin{equation}\label{E:u_bdd_ra}
\underline{a}_U \leq \alpha_u(x)\dfn -\frac{U''(x)}{U'(x)} \leq \bar{a}_U; \qquad x\in\reals.
\end{equation}
\item $U$ decays exponentially with rate $a$ for large negative wealths: i.e.
\begin{equation}\label{E:u_exp_decay}
\lim_{x\downarrow-\infty}-\frac{1}{x}\log(-U(x)) = a.
\end{equation}
\end{itemize}
By \eqref{E:u_bdd_ra} it follows that $U$ is bounded from above on $\reals$ and hence through a normalization we assume $0=U(\infty) = \lim_{x\uparrow\infty} U(x)$. From \cite[Section 2.2]{Robertson_2012} it holds that $U\in\Ua$ satisfies both the Inada conditions $\lim_{x\downarrow-\infty}U'(x) = \infty$, $\lim_{x\uparrow\infty} U'(x) = 0$ and the Reasonable Asymptotic Elasticity conditions $\liminf_{x\downarrow -\infty} xU'(x)/U(x) > 1$, $\limsup_{x\uparrow\infty} xU'(x)/U(x) < 1$. Similarly to \eqref{E:val_funct_claim} and \eqref{E:val_funct_no_claim}, define the value function in the $n^{th}$ market with initial capital $x$ and $q$ units of the claim as $u^n_U(x,q)$, where if $q=0$ we write $u^n_U(x)$.  Analogously to \eqref{E:util_indiff_px}, set  $p^n_U(x,q)$ as the (average, bid) utility indifference price defined through the equation
\begin{equation}\label{E:gen_u_indiff_px}
u^n_U(x-qp^n_U(x,q),q) = u^n_U(x).
\end{equation}
So that $p^n_U(x,q)$ is well defined for $x,q\in\reals$ we assume the claim is bounded: i.e. we enforce Assumption
\ref{A:bdd_claim}. Under Assumptions \ref{A:no_arb_n}, \ref{A:bdd_claim} it follows from \cite{MR2489605} that for $x,q\in\reals$,  $p^n_U(x,q)$ is well defined, arbitrage free, decreasing in $q$ with limits (recall \eqref{E:arb_free_n}) $\lim_{q\downarrow-\infty} p^n(x,q) = \bar{B}_n$, $\lim_{q\uparrow\infty} p^n(x,q) = \underline{B}_n $, for each
$n$.

To connect limiting prices for $U$ with those for the exponential utility we additionally enforce the asymptotic no arbitrage condition in Assumption \ref{A:asympt_no_arb}, and recall that using \cite[Theorem
3.3]{Robertson_2012}, it follows from Assumptions \ref{A:no_arb_n}, \ref{A:GE_Opt1}, \ref{A:bdd_claim} and \ref{A:asympt_no_arb} that for all $x\in\reals$ and $0 < |\ell| < \delta$:
\begin{equation}\label{E:gen_u_conv}
\lim_{n\uparrow\infty} p^n_U(x,\ell r_n) = p^\infty(\ell).
\end{equation}\
As for $\ell = 0$, since Assumption \ref{A:GE_Opt1} implies $p^\infty$ is continuous at $0$, the monotonicity of $p^n_U(x,q)$ yields for $0 < \ell < \delta$ that
\begin{equation*}
p^\infty(\ell) = \lim_{n\uparrow\infty} p^n_U(x,\ell r_n) \leq \liminf_{n\uparrow\infty} p^n_U(x,0) \leq \limsup_{n\uparrow\infty} p^n_U(x,0) \leq \lim_{n\uparrow\infty} p^n_U(x,-\ell r_n) = p^\infty(-\ell),
\end{equation*}
so that taking $\ell\downarrow 0$ we obtain that $p^n_U(x,0)\rightarrow p^\infty(0)$. Now, for a given arbitrage free price $\tilde{p}^n\in I^n$, we consider the optimal purchase problem
\begin{equation}\label{E:opt_q_gen}
\sup_{q\in\reals} \left(u^n_U(x-\tilde{p}^nq,q)\right).
\end{equation}
Unlike for the exponential case when the results of \cite{MR2212897} yield a unique maximizer, here, to the best our our knowledge, there are no known results on existence/uniqueness of optimizers (see \cite{MR3292128} for results with utility functions defined on the positive axis).  However, the main results of Theorems \ref{T:opt_pos_lb} and \ref{T:opt_pos_ub} still hold, as the following theorem shows.

\begin{theorem}\label{T:opt_pos_gen_u}
Let Assumptions \ref{A:no_arb_n}, \ref{A:GE_Opt1}, \ref{A:px_range}, \ref{A:bdd_claim} and \ref{A:asympt_no_arb} hold. Assume that $I^n\ni\tilde{p}^n\rightarrow \tilde{p}$. Let $x\in\reals$ be
fixed and recall $\delta^+,\delta_{-}$ from \eqref{E:delta_p},
\eqref{E:delta_m} respectively. Then
\begin{itemize}
\item For each $n$ there exists an optimizer $\hat{q}_n = \hat{q}_n(x,\tilde{p}^n)$ to \eqref{E:opt_q_gen}.
\item If $p^\infty(\delta^+) < \tilde{p} < d$ then for any sequence of maximizers $\cbra{\hat{q}_n}$:
\begin{equation}\label{E:opt_pos_pos_limits}
0 < \liminf_{n\uparrow\infty}\frac{\hat{q}_n}{r_n} < \limsup_{n\uparrow\infty} \frac{\hat{q}_n}{r_n} < \delta^+.
\end{equation}
\item If $d < \tilde{p} < p^\infty(\delta_{-})$ then for any sequence of maximizers $\cbra{\hat{q}_n}$:
\begin{equation}\label{E:opt_pos_neg_limits}
0 < \liminf_{n\uparrow\infty}\frac{-\hat{q}_n}{r_n} < \limsup_{n\uparrow\infty} \frac{-\hat{q}_n}{r_n} < -\delta_{-}.
\end{equation}
\end{itemize}
\end{theorem}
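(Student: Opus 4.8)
\emph{Setup and existence.} Fix $x\in\reals$ and write $w_n(q)\dfn u^n_U(x-\tilde p^n q,q)$, so that \eqref{E:opt_q_gen} asks for $\sup_{q}w_n(q)$. Because $\A^n$ is convex and, for each admissible $\pi$, $(y,q,\pi)\mapsto\espalt{}{U(y+(\pi\cdot S^n)_T+qB)}$ is jointly concave, $u^n_U$ is jointly concave in $(y,q)$; it is finite on all of $\reals^2$ (since $B$ is bounded, $U$ is finite, and $\pi\equiv 0$ is admissible), hence continuous, so $w_n$ is finite, continuous and concave in $q$, and $u^n_U(y,q)\in(-\infty,0)$ under Assumption~\ref{A:no_arb_n}. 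Recall from \cite{MR2489605} that $q\mapsto p^n_U(x,q)$ is decreasing with $p^n_U(x,q)\downarrow\underline{B}_n$ as $q\uparrow\infty$, $p^n_U(x,q)\uparrow\bar{B}_n$ as $q\downarrow-\infty$, and that $\tilde p^n\in(\underline{B}_n,\bar{B}_n)$. Comparing $w_n(q)$ with the identity $u^n_U(x-p^n_U(x,q)q,q)=u^n_U(x)$ and using monotonicity of $u^n_U$ in its first argument one reads off that, for any $q\ne 0$, the quantity $w_n(q)-u^n_U(x)$ has the same sign as $q\,(p^n_U(x,q)-\tilde p^n)$; in particular $w_n(q)<u^n_U(x)=w_n(0)$ for all $|q|$ large, so by continuity $\sup_q w_n$ is attained --- this is the first bullet.

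\emph{The basic estimate.} Normalizing $U(\infty)=0$, \eqref{E:u_bdd_ra} is equivalent to $e^{-\bar{a}_U h}(-U(y))\le -U(y+h)\le e^{-\underline{a}_U h}(-U(y))$ for $y\in\reals$, $h\ge 0$. Substituting $y+(\pi\cdot S^n)_T+qB$ for $y$, taking expectations and then suprema over $\pi\in\A^n$, the squeeze passes to the value function: $e^{-\underline{a}_U h}u^n_U(y,q)\le u^n_U(y+h,q)\le e^{-\bar{a}_U h}u^n_U(y,q)$, all three negative. Applied with $y=x-p^n_U(x,q)q$ whenever $(p^n_U(x,q)-\tilde p^n)q\ge 0$, setting $h\dfn(p^n_U(x,q)-\tilde p^n)q\ge 0$ and using $u^n_U(x-p^n_U(x,q)q,q)=u^n_U(x)$, this gives
\[
e^{-\underline{a}_U h}\,u^n_U(x)\ \le\ w_n(q)\ \le\ e^{-\bar{a}_U h}\,u^n_U(x)\qquad(\text{all negative}).
\]

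\emph{The two bounds.} Treat $p^\infty(\delta^+)<\tilde p<d$; the case $d<\tilde p<p^\infty(\delta_-)$ is identical after the obvious sign changes. Exactly as \eqref{E:gen_u_conv} was obtained from \cite[Theorem 3.3]{Robertson_2012}, and since $\delta^+$ is by \eqref{E:delta_p} the threshold for convergence of the exponential prices, one has $p^n_U(x,\ell r_n)\to p^\infty(\ell)$ for all $0<\ell<\delta^+$ and $p^n_U(x,0)\to d$. \emph{Upper bound:} since $p^\infty$ is decreasing on $(0,\delta^+)$ with $\lim_{\ell\uparrow\delta^+}p^\infty(\ell)=p^\infty(\delta^+)<\tilde p$, fix $\ell_1\in(0,\delta^+)$ with $p^\infty(\ell_1)<\tilde p$; then $p^n_U(x,\ell_1 r_n)<\tilde p^n$ for $n$ large, so by the sign rule $w_n(\ell_1 r_n)<w_n(0)$, and concavity of $w_n$ forces every maximizer to satisfy $\hat q_n<\ell_1 r_n$, i.e.\ $\limsup_n\hat q_n/r_n\le\ell_1<\delta^+$. \emph{Lower bound:} pick $\ell_0\in(0,\delta)$ with $p^\infty(\ell_0)>\tilde p$ (possible by continuity of $p^\infty$ at $0$ and $p^\infty(0)=d>\tilde p$) and put $\kappa\dfn\ell_0(p^\infty(\ell_0)-\tilde p)>0$; since $p^\infty(\epsilon)\to d$ as $\epsilon\downarrow 0$, fix $\epsilon\in(0,\ell_0)$ with $\bar{a}_U\,\epsilon\,(p^\infty(\epsilon)-\tilde p)<\underline{a}_U\,\kappa$ (note $p^\infty(\epsilon)\ge p^\infty(\ell_0)>\tilde p$). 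For $n$ large, $p^n_U(x,\ell_0 r_n)>\tilde p^n$ and $p^n_U(x,\epsilon r_n)>\tilde p^n$, so the basic estimate gives $w_n(\ell_0 r_n)\ge e^{-\underline{a}_U\kappa_n}u^n_U(x)$ and $w_n(\epsilon r_n)\le e^{-\bar{a}_U\mu_n}u^n_U(x)$, where $\kappa_n\dfn(p^n_U(x,\ell_0 r_n)-\tilde p^n)\ell_0 r_n$ (so $\kappa_n/r_n\to\kappa$) and $\mu_n\dfn(p^n_U(x,\epsilon r_n)-\tilde p^n)\epsilon r_n$ (so $\mu_n/r_n\to\epsilon(p^\infty(\epsilon)-\tilde p)$). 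By the choice of $\epsilon$, $\bar{a}_U\mu_n<\underline{a}_U\kappa_n$ for $n$ large, and since $u^n_U(x)<0$ this yields $w_n(\ell_0 r_n)>w_n(\epsilon r_n)$ with $\epsilon r_n<\ell_0 r_n$; concavity of $w_n$ forces every maximizer to satisfy $\hat q_n>\epsilon r_n$, so $\liminf_n\hat q_n/r_n\ge\epsilon>0$. This proves \eqref{E:opt_pos_pos_limits}; \eqref{E:opt_pos_neg_limits} follows symmetrically (with $-q$ in place of $q$, and $\ell_0,\ell_1\in(\delta_-,0)$).

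\emph{Where the difficulty lies.} Existence is routine once joint concavity of $u^n_U$ and the boundary behaviour of $p^n_U(x,\cdot)$ are in hand; the content is the lower bound. For exponential utility, \eqref{E:opt_q_gen} reduces to minimizing the explicit convex map $q\mapsto q(\tilde p^n-p^n_a(q))$ and one tracks its rescaled minimizer directly (cf.\ the proof of Theorem~\ref{T:opt_pos_lb}); for $U\in\Ua$ that multiplicative structure is gone, and \eqref{E:u_bdd_ra} yields only the two-sided estimate above, which loses a factor $\bar{a}_U/\underline{a}_U$. The key point is that this loss is harmless: the discounted benefit $\ell(p^n_U(x,\ell r_n)-\tilde p^n)\to\ell(p^\infty(\ell)-\tilde p)$ is bounded away from $0$ at a fixed $\ell_0$ but is only $O(\epsilon)$ at $\ell=\epsilon$ --- precisely because $p^\infty$ is continuous at $0$ with $p^\infty(0)=d$, the hypothesis of Assumption~\ref{A:GE_Opt1} --- so for $\epsilon$ small it cannot compete at $\epsilon r_n$ with its value at $\ell_0 r_n$ even after the $\bar{a}_U/\underline{a}_U$ inflation, and concavity of $w_n$ then pushes the optimum past $\epsilon r_n$. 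The only inputs from outside this excerpt are the price convergence \eqref{E:gen_u_conv} of \cite{Robertson_2012} and its extension to the full range $(\delta_-,\delta^+)$.
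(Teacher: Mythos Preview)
Your proof is correct and takes a genuinely different, more elementary route than the paper's. The paper sandwiches $u^n_U$ between two exponential value functions via the F\"ollmer--Schied transforms $U=F(U_{\underline a_U})$ and $U_{\bar a_U}=\hat F(U)$, obtains concavity of $w_n$ from the dual representation \eqref{E:main_u_n_def}, proves the upper bound through a somewhat involved duality computation establishing $p^n_U(x,\hat q_n)\ge\tilde p^n$, and proves the lower bound by contradiction, dividing by $\ell r_n$ and invoking Assumption~\ref{A:asympt_no_arb} directly to control $-(1/r_n)\log(-au^n_a(0))$. You instead deduce concavity of $w_n$ straight from convexity of $\A^n$; replace the F\"ollmer--Schied sandwich by the pointwise bound $U'(x)/(-U(x))\in[\underline a_U,\bar a_U]$ (a clean consequence of \eqref{E:u_bdd_ra} and $U(\infty)=0$, obtained by writing $-U(x)=\int_x^\infty U'(s)\,ds$ and using $U'(s)=U'(x)e^{-\int_x^s\alpha_U}$); dispatch the upper bound in two lines via the sign rule plus concavity; and handle the lower bound by comparing $w_n$ at the two scaled points $\eps r_n$ and $\ell_0 r_n$, exploiting that the benefit $\ell(p^\infty(\ell)-\tilde p)$ is $O(\eps)$ near $0$ so the $\bar a_U/\underline a_U$ loss in the two-sided estimate is harmless.

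What each approach buys: yours is shorter, avoids the dual representation and the explicit F\"ollmer--Schied machinery, and uses Assumption~\ref{A:asympt_no_arb} only indirectly through \eqref{E:gen_u_conv}. The paper's route yields the extra first-order information $p^n_U(x,\hat q_n)\ge\tilde p^n$ (inequality \eqref{E:indiff_px_new_lb}), which is of independent interest and parallels the exponential first-order condition \eqref{E:dual_measure}. One minor wording point: your basic estimate is a \emph{consequence} of \eqref{E:u_bdd_ra} together with $U(\infty)=0$, not literally equivalent to \eqref{E:u_bdd_ra} alone.
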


\begin{remark} As with the exponential case, a sufficient condition for the limits to exist in \eqref{E:opt_pos_pos_limits} and \eqref{E:opt_pos_neg_limits} is Assumption \ref{A:strict_conc}.
\end{remark}

\section{On Partial Equilibrium Price Quantity and its Limiting Behavior}\label{S:PEPQ}

The concept of indifference pricing has a subjective nature, in the sense that the
indifference price of an investor is a way she values
unhedgeable positions, and whether or not there is a counter-party
to offset a transaction is a different question. In particular, so
far we have assumed that a sequence of prices $\tilde{p}^n\in I^n$
converges to $\tilde{p}$, without mentioning whether such prices
equilibrate any transactions among different investors. In this
section, we address this issue and we justify that such sequence
of prices could indeed be the equilibrium prices of the given
claim $B$ among (two) investors.

For this, we adapt the notion of the partial equilibrium price
quantity (PEPQ). Provided that the stock dynamics are exogenously
specified, the equilibrium price of a claim $B$ is the one at
which the investors' optimal quantities of the claim sum up to
zero, meaning that the market of the claim is cleared out (the
word partial refers to the fact the investors specify the
equilibrium of the claim and not the stock market). Essentially,
the main motivation of this section is to study under Assumption
\ref{A:GE_Opt1} when our main optimal position taking results
could arise in an equilibrium setting whether all investors act
optimally and the price $\tilde{p}^n$ is the equilibrium price in the
$n^{th}$ market of a given claim $B$. In short, the analysis of this
section prove that if the investors' risky
exposures (random endowments) are dominated by $r_n$, then
$\tilde{p}^n\rightarrow d$. However, if investors' endowments are
growing like $r_n$, equilibrium prices $\tilde{p}^n$ could
converge to a limit different than $d$ and the results of Theorems
\ref{T:opt_pos_lb}, \ref{T:opt_pos_ub} occur. The latter
situation, which happens when at least one investor has an already
undertaken large position in $B$, means that there are cases where
the large regime is in fact the market's equilibrium, and even
more interestingly the equilibrium prices converge to a price
different than the unique limiting arbitrage free price.

In the setting of a locally bounded semi-martingale stock market,
bounded claims, and exponential utility maximizers, the PEPQ is
analyzed in \cite{MR2667897}. Specified to the current setup of
Section \ref{S:model}, we assume, for each $n$, there is a group
of $I$ investors such that each investor $i$ is endowed with a
exogenously given \textit{random endowment}, denoted by $\EN_n^i$.
For a given bounded claim $B$, the investors also wish to trade
$B$ amongst themselves in such a way that acting optimally (in
terms of utility maximization) the market for the claim clears.

For simplicity, we consider the presence of two investors,
although we should point out that the results of this section can
be generalized for markets with more investors. Recall that $I^n$
from \eqref{E:arb_free_n} denotes the (non-empty) range of
arbitrage free prices for $B$ and let $a^i_n>0$ denote the risk
aversion coefficient for investor $i$. Before we give the exact
definition of the PEPQ for a claim $B$, we need to introduce the
notation for the indirect utility and the indifference pricing
under the presence of random endowment. Namely, for the random
endowment $\EN_n^i$ and position size $q$ in $B$, define, in a
similar manner to \eqref{E:val_funct_claim}, the value function
for investor $i$ by
\begin{equation}\label{E:val_funct_claim_end}
u^n_{a^i_n}(x,q|\EN_i^n) \dfn \sup_{\pi^n\in\mathcal{A}^n}\espalt{}{U_{a^i_n}(x + X^{\pi^n}_T + qB+\EN_n^i)};\quad i=1,2.
\end{equation}
Similarly to \eqref{E:util_indiff_px}, the average (bid) indifference price of the investor $i$ with random endowment $\EN_i^n$ at the $n^{th}$ market is denoted by $p^n_{a^i_n}(q|\EN_n^i)$ and is given as the solution of
\begin{equation}\label{E:util_indiff_pEN}
u^n_{a^i_n}(x-q p^n_{a^i_n}(q|\EN_n^i), q|\EN_n^i) = u^n_{a^i_n}(x|\EN_n^i);\quad i=1,2.
\end{equation}
Note that the indifference  price's independence on the (constant)
initial wealth still holds under the presence of the random
endowment, which means that we can again assume $x=0$. Next, for a
given $p^n\in I^n$, consider the optimal purchase quantity problem
for investor $i$ defined by identifying (compare with
\eqref{E:opt_q_n}):
\begin{equation}\label{E:opt_q_n_end}
\hat{q}^i_n(p^n)=\underset{q\in\reals}{\text{argmax}}\left(u^n_{a^i_n}
(-qp^n,q|\EN_n^i)\right);\quad i=1,2.
\end{equation}
As shown in Proposition 5.5 in \cite{MR2667897}, the optimization  problem \eqref{E:opt_q_n_end} admits a representation similar to the
corresponding problem without random endowment (see \eqref{E:optimization}). Namely, we have that
\begin{equation}\label{E:optimization_end}
\hat{q}^i_n(p^n)\in\text{argmin}_{q\in\mathbb{R}}\left(q\tilde{p}^{n}-q
p^n_{a^i_n}(q|\EN_n^i)\right).
\end{equation}

A PEPQ is then defined as a pair $(\epn,\eqn)\in I^n\times\R$ such
that
\[\eqn=\hat{q}^1_n(\epn)\quad\text{and}\quad
-\eqn=\hat{q}^2_n(\epn).\]
In other words, at price $\epn$ it is
optimal for investor $1$ to buy $\eqn$ and investor $2$ to sell
$\eqn$ units of $B$, thus the market clears out. Taking representation \eqref{E:optimization_end} into account, it is
then a matter of simple calculations to get the following
condition for the PEPQ for each $n$ (see also Proposition 5.6
and Corollary 5.7 in \cite{MR2667897}):
\begin{equation}\label{eq:PEQ condition}
\eqn=\underset{q\in\R}{\text{argmax}}\left(q\left(p^n_{a^1_n}(q|\EN_1^n)+p^n_{a^2_n}(-q|\EN_2^n)\right)\right).
\end{equation}
The equilibrium price $\epn$ is then given by
\begin{equation}\label{E:PEPQprice}
\epn=\espalt{\qprob_1^n(\eqn)}{B}=\espalt{\qprob_2^n(-\eqn)}{B},
\end{equation}
where $\qprob_i^n(q)$ denotes the dual optimizer in $\tilde{\M}^n$
for the position $qB+\EN^n_i$ and risk aversion $a_n^i$ (recall
the first order condition \eqref{E:dual_measure} without random
endowment)\footnote{Note that $\qprob_i^n(0)$ is not necesarily $\qprob^n_0$ due to the presence of $\EN^n_i$.}. According to Theorem 5.8 in \cite{MR2667897}, for a non-replicable bounded claim $B$ (i.e. satisfying Assumption \ref{A:bdd_claim}) a PEPQ $(\epn,
\eqn)\in I^n\times\R$ always exists for each $n\in\N$, and it is unique with $\eqn\neq 0$ if and only if $a^1_n\EN^n_1-a^2_n\EN^n_2$ is non-replicable.

Now, consider when $n\uparrow\infty$ and Assumption \ref{A:GE_Opt1} holds for each sequence $\{a^i_n\}_{n\in\N}$. The questions that naturally arise are where the sequence of the equilibrium prices converges to and under which conditions the regime of Theorems \ref{T:opt_pos_lb}, \ref{T:opt_pos_ub} occurs.
As $n\uparrow\infty$, if one ignores the position size and has non-vanishing risk aversion, the
hedging error of positions in $B$ approaches zero and hence it is
expected that equilibrium prices converge to price $d$. It turns
out that this is the case provided however that the size of the
investors' endowments is dominated by the ``market incompleteness''
parameter $r_n$ from Assumption \ref{A:GE_Opt1}. When at least one
of the endowments increases with $n$ sufficiently fast, the
equilibrium prices may converge to a limit different than $d$,
which implies a situation similar to the regime of Theorems
\ref{T:opt_pos_lb}, \ref{T:opt_pos_ub}. In the sequel we provide a
family of such examples where the endowment of one of the investor
is an increasing position on the claim $B$.

Before, we present the precise arguments we should clarify how
Assumption \ref{A:GE_Opt1} works in the case of two investors,
$i=1,2$. The statement that Assumption \ref{A:GE_Opt1} holds for
function $p^n_{a^i_n}:\R\mapsto I^n$ (defined in
\eqref{E:util_indiff_px}), means that there exist a sequence
$\cbra{r^i_n}_{n\in\mathbb{N}}$ of positive reals with
$r^i_n\nearrow\infty$ and a constant $\delta_i>0$ such that for
all $|\ell|<\delta_i$ the limit $p_i^\infty(\ell)\dfn
\lim_{n\uparrow\infty} p^n_{a^i_n}(\ell r^i_n)$ exists, is finite
and $\lim_{\ell\rightarrow 0} p_i^\infty(\ell) = d $. Note that it
readily follows from the relation $p_{a_n^2}^n(q)=p_{a_n^1}^n(q
a_n^2/a_n^1)$ (which holds for each $n$) that if Assumption
\ref{A:GE_Opt1} holds for function $p_{a_n^1}^n$, it will also
hold for function $p_{a_n^2}^n$ provided that the sequence
$\{a_n^2/a_n^1\}_{n\in\N}$ is bounded away from zero and infinity.
For this, we could set $r_n^2\dfn r_n^1a_n^2/a_n^1$ (possibly
going to an increasing subsequence), $p_2^{\infty}=p_1^{\infty}$
and $\delta_2=\delta_1$.

\smallskip
For the proofs of this section we need to introduce the notion of
the (bid) indifference price for every arbitrary bounded payoff
$C\in\Lb^{\infty}$ under risk aversion $a_n>0$ in the $n^{th}$
market, denoted by $\pr^n_{a_n}(C)$ and defined as the solution of
the following equation
\begin{equation}\label{E:util_indiff}
\sup_{\pi^n\in\mathcal{A}^n}\espalt{}{U_{a_n}(x + X^{\pi^n}_T + C-\pr^n_{a_n}(C))}=\sup_{\pi^n\in\mathcal{A}^n}\espalt{}{U_{a_n}(x + X^{\pi^n}_T)};\quad i=1,2.
\end{equation}
Note that under this notation $qp^n_{a_n}(q)=\pr^n_{a_n}(qB)$, for
all $q\in\R$ with $p^n_{a_n}$ defined in \eqref{E:util_indiff_px}.
The following Lemma generalizes the findings of Theorems
\ref{T:opt_pos_lb} and \ref{T:opt_pos_ub} under the presence of
random endowment provided that the endowment is dominated by the
associated $r_n$.

\begin{lemma}\label{L:endowments}
Let Assumptions \ref{A:no_arb_n}, \ref{A:px_range},
\ref{A:bdd_claim} hold and impose Assumption \ref{A:GE_Opt1} for
function $p^n_{a^i_n}:\R\mapsto I^n$. If for $i=1,2$, $\EN_i^n\in
\Lb^{\infty}$, for each $n$ and
$||\EN_i^{n}||_{\Lb^{\infty}}/r^i_n\rightarrow 0$, then the
statements of Theorems \ref{T:opt_pos_lb} and \ref{T:opt_pos_ub}
hold also for the function $p^n_{a^i_n}(\cdot|\EN_i^n):\R\mapsto
I^n$.
\end{lemma}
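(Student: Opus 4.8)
The plan is to reduce the statement to Theorems \ref{T:opt_pos_lb} and \ref{T:opt_pos_ub} by proving that the endowment--adjusted price $p^n_{a^i_n}(\cdot\,|\EN_i^n)$ satisfies Assumption \ref{A:GE_Opt1} with the \emph{same} scaling $\cbra{r^i_n}$, the \emph{same} constant $\delta_i$, and the \emph{same} limiting function $p^\infty$ as the plain price $p^n_{a^i_n}(\cdot)$. Once this is done, the optimal quantity $\hat q^i_n$ is, by \eqref{E:optimization_end}, a minimizer of $q\mapsto q\tilde p^{n}-q p^n_{a^i_n}(q\,|\EN_i^n)$ --- structurally identical to the no--endowment problem --- and $q\mapsto q p^n_{a^i_n}(q\,|\EN_i^n)$ is still concave with value $0$ at $q=0$ (hence $q\mapsto p^n_{a^i_n}(q\,|\EN_i^n)$ is decreasing), by the variational representation of \cite[Proposition 5.5]{MR2667897}, while $I^n$ and Assumption \ref{A:px_range} do not involve $\EN_i^n$ at all. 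Consequently the proofs of Appendix \ref{S:pf_opt_pos} apply verbatim with $p^n_{a^i_n}(\cdot\,|\EN_i^n)$ in place of $p^n_{a^i_n}(\cdot)$.

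The first step is the elementary identity, valid for exponential utility,
\begin{equation*}
q\, p^n_{a^i_n}(q\,|\EN_i^n) \;=\; \pr^n_{a^i_n}(qB + \EN_i^n) - \pr^n_{a^i_n}(\EN_i^n),
\end{equation*}
which I would obtain directly from the defining equations \eqref{E:util_indiff_pEN} and \eqref{E:util_indiff} together with the homotheticity $u^n_{a}(x,q\,|\EN) = e^{-ax}u^n_{a}(0,q\,|\EN)$ of the exponential value function; taking $\EN_i^n\equiv 0$ recovers $q\,p^n_{a^i_n}(q) = \pr^n_{a^i_n}(qB)$ and $\pr^n_{a^i_n}(0)=0$. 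The second step is to observe that $C\mapsto\pr^n_{a^i_n}(C)$ is monotone and cash--translation--equivariant on $\Lb^{\infty}$ --- both immediate from \eqref{E:util_indiff} --- hence $1$--Lipschitz for $\norm{\cdot}_{\Lb^{\infty}}$, i.e.\ $\abs{\pr^n_{a^i_n}(C_1)-\pr^n_{a^i_n}(C_2)}\le\norm{C_1-C_2}_{\Lb^{\infty}}$. Combining the two steps yields, \emph{for every} $q\in\reals$, the bound $\abs{q\, p^n_{a^i_n}(q\,|\EN_i^n) - q\, p^n_{a^i_n}(q)}\le 2\norm{\EN_i^n}_{\Lb^{\infty}}$.

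Dividing this by $\abs{q}$ with $q=\ell r^i_n$ and using the hypothesis $\norm{\EN_i^n}_{\Lb^{\infty}}/r^i_n\to0$, I get $p^n_{a^i_n}(\ell r^i_n\,|\EN_i^n)-p^n_{a^i_n}(\ell r^i_n)\to0$ for every $\ell\neq0$; since the bound is uniform in $q$, this shows that $p^n_{a^i_n}(\ell r^i_n\,|\EN_i^n)$ converges \emph{if and only if} $p^n_{a^i_n}(\ell r^i_n)$ does, and then to the same limit --- so the whole range of convergence, the quantities $\delta^+,\delta_-$, and the values $p^\infty(\delta^{\pm})$ are literally unchanged. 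In particular $\lim_{n\upto\infty}p^n_{a^i_n}(\ell r^i_n\,|\EN_i^n)=p^\infty(\ell)$ for $0<\abs{\ell}<\delta_i$. The case $\ell=0$ I would handle exactly as in Section \ref{SS:opt_pos_gen_util}: monotonicity of $q\mapsto p^n_{a^i_n}(q\,|\EN_i^n)$ squeezes $\liminf_n$ and $\limsup_n$ of $p^n_{a^i_n}(0\,|\EN_i^n)$ between $p^\infty(\ell)$ and $p^\infty(-\ell)$ for $0<\ell<\delta_i$, and letting $\ell\downarrow0$ with the continuity of $p^\infty$ at $0$ gives $p^n_{a^i_n}(0\,|\EN_i^n)\to d$, with continuity of the new limit at $0$ inherited from $p^\infty$. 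This verifies Assumption \ref{A:GE_Opt1} for $p^n_{a^i_n}(\cdot\,|\EN_i^n)$, and the conclusion follows by applying Theorems \ref{T:opt_pos_lb}, \ref{T:opt_pos_ub}.

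The estimate above is routine; the only point that needs genuine care --- the ``main obstacle'', such as it is --- is the bookkeeping of which structural facts the Appendix \ref{S:pf_opt_pos} proofs of Theorems \ref{T:opt_pos_lb}, \ref{T:opt_pos_ub} actually invoke, and checking each survives the addition of $\EN_i^n$: concavity of $q\mapsto q p^n_{a^i_n}(q\,|\EN_i^n)$, monotonicity of $q\mapsto p^n_{a^i_n}(q\,|\EN_i^n)$, finiteness of $I^n$, and the first--order/duality characterization of $\hat q^i_n$ in \eqref{E:optimization_end}. All of these are provided for the endowment case by \cite[Propositions 5.5 and 5.6]{MR2667897}, so no new argument beyond the Lipschitz reduction is required.
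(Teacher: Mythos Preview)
Your proposal is correct and takes essentially the same approach as the paper: both rely on the identity $q\,p^n_{a^i_n}(q\,|\EN_i^n)=\pr^n_{a^i_n}(qB+\EN_i^n)-\pr^n_{a^i_n}(\EN_i^n)$ together with the $2\norm{\EN_i^n}_{\Lb^\infty}$ bound to transfer the scaled--price convergence from $p^n_{a^i_n}(\cdot)$ to $p^n_{a^i_n}(\cdot\,|\EN_i^n)$, and then invoke the Appendix~\ref{S:pf_opt_pos} machinery. The only cosmetic difference is that the paper verifies Assumption~\ref{A:one_p} directly (including the limiting behaviour $p^n_{a^i_n}(\infty\,|\EN_i^n)=p^n_{a^i_n}(\infty)$ needed for Proposition~\ref{prop: one_p}), whereas you phrase it as re--establishing Assumption~\ref{A:GE_Opt1} and then observing that the appendix proofs go through verbatim; your Lipschitz formulation of the key estimate is slightly cleaner, but the content is the same.
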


\begin{proof}
In view of the proof of Theorem \ref{T:opt_pos_lb} and under the imposed assumptions, we first have to show that function $p^n_{a^i_n}(\cdot|\EN_i^n):\R\mapsto I^n$ satisfies Assumption \ref{A:one_p}. Indeed, the first bullet point follows
by a simple change of measure $d\prob_i^n/d\prob:=c^n_ie^{-a^i_n\EN_i^n}$, for some constant
$c^n_i$ and the corresponding variational representation of the
indifference price \eqref{E:total_price} considered under measure
$\prob_i^n$; while the second bullet point readily follows by the
boundedness of claim $B$. For the third and forth items, it is enough to show that for all $|\ell|<\delta_i$, $\lim_{n\rightarrow\infty}p^n_{a^i_n}(\ell r^i_n|\EN_i^n)=p^{\infty}_i(\ell)$. For this, we note that the indifference price of an exponential utility maximizer under some random endowment can be written as the difference of two indifference prices without endowments (see among others, Appendix of \cite{MR2667897} and recall definition \eqref{E:util_indiff}):
\begin{equation}\label{Eq: conditional indifference price}
qp^n_{a^i_n}(q|\EN_i^n)=\pr^n_{a^i_n}(qB+\EN^n_i)-\pr^n_{a^i_n}(\EN^n_i),\quad\forall q\in\reals,
\end{equation}
Hence, for any $|\ell|<\delta_i$
\[p_{a_n^i}^n(\ell r^i_n|\EN^n_i)=\frac{\pr^n_{a_n^i}(\ell r^i_nB+\EN^n_i)-\pr^n_{a_n^i}(\EN^n_i)}{\ell r^i_n}
    \leq p^n_{a_n^i}(\ell r^i_n)+2\frac{||\EN_i^{n}||_{\Lb^{\infty}}}{|\ell| r^i_n}
    \rightarrow p_i^{\infty}(\ell),
\]
 where the limiting argument follows by the imposed assumptions on function $p^n_{a_n^i}$ and $\EN_i^n$. We similarly show that $p_{a_n^i}^n(\ell r^i_n|\EN^n_i)\geq p^n_{a_n^i}(\ell r^i_n)-2\frac{||\EN_i^{n}||_{\Lb^{\infty}}}{|\ell| r^i_n}\rightarrow p_i^{\infty}(\ell)$, which finishes the proof that function $q\mapsto
p^n_{a^i_n}(q|\EN_i^n)$ satisfies Assumption \ref{A:one_p}. We
then observe that requirements of Proposition \ref{prop: one_p}
are also met for function $p^n_{a^i_n}(\cdot|\EN_i^n):\R\mapsto
I^n$, since by \eqref{Eq: conditional indifference price} it
readily follows that
$p^n_{a^i_n}(\infty|\EN_i^n)=p^n_{a^i_n}(\infty)$. Hence, the rest
of the proof follows the same argument lines as the ones in proofs
of Theorems \ref{T:opt_pos_lb}, \ref{T:opt_pos_ub}.
\end{proof}

Returning to the PEPQ, we exclude trivial cases for each $n\in\N$
by imposing the following assumption.
\begin{assumption}\label{A:endowments}
For each $n$, $\EN_i^n\in \Lb^{\infty}$ for both $i=1,2$ and
$a_n^1\EN^n_1-a_n^2\EN^n_2$ is non-replicable.
\end{assumption}

As mentioned above, this assumption guarantees the existence and
the uniqueness of the PEPQ $(\epn,\eqn)$ for each $n$ with
$\eqn\neq 0$. Imposing Assumption \ref{A:GE_Opt1} for indifference
prices of both investors, we first address the conditions that
give the convergence of the equilibrium prices to $d$.

\begin{proposition}\label{P:PEPQ_bounded_endowments}
Let Assumptions \ref{A:no_arb_n}, \ref{A:px_range},
\ref{A:bdd_claim}, \ref{A:endowments} hold, and impose Assumption
\ref{A:GE_Opt1} for function $p_{a_n^1}^n(q)$ and Assumption
\ref{A:strict_conc} for function $qp_1^{\infty}(q)$. If we further
assume that $||\EN_i^{n}||_{\Lb^{\infty}}/r^1_n\rightarrow 0$, for
both $i=1,2$ and the sequence $\{a_n^2/a_n^1\}_{n\in\N}$ is
bounded away from zero and infinity, the sequence of the partial
equilibrium prices $\epn$ of claim $B$ converges to $d$.
\end{proposition}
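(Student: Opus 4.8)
The plan is to argue by contradiction, exploiting the market-clearing identity together with Lemma \ref{L:endowments}. The underlying picture is that at any limiting price $\tilde p\neq d$ the two (small) random endowments cannot push the investors onto opposite sides of the trade: each investor $i$ would want a position of order $r^i_n$ \emph{with the same sign} (both buyers if $\tilde p<d$, both sellers if $\tilde p>d$), so the clearing condition $\eqn=\hat q^1_n(\epn)=-\hat q^2_n(\epn)$ would be violated. Hence $\tilde p$ must equal $d$.

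Concretely, I would first note that $B\in\Lb^\infty$ forces $\epn\in I^n\subset[-\norm{B}_{\Lb^\infty},\norm{B}_{\Lb^\infty}]$, so $\{\epn\}$ is bounded and it suffices to show that every convergent subsequence has limit $d$; so suppose, along a subsequence which I do not relabel, $\epn\to\tilde p\neq d$. The next step is to run Assumption \ref{A:GE_Opt1} for \emph{both} investors against a common limiting price: it holds for $p^n_{a^1_n}$ with rate $r^1_n$ and limit $p^\infty_1$, $p^\infty_1(0)=d$, by hypothesis, and, as noted in the discussion preceding Lemma \ref{L:endowments}, boundedness of $\{a^2_n/a^1_n\}$ away from $0$ and $\infty$ transfers it to $p^n_{a^2_n}$ with rate $r^2_n:=r^1_n a^2_n/a^1_n\to\infty$ and $p^\infty_2=p^\infty_1$, so that $p^\infty_1(0)=p^\infty_2(0)=d$ (the threshold $d=\lim d_n=\lim\espalt{\qprob^n_0}{B}$ is risk-aversion-free, hence shared). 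Moreover $\norm{\EN^n_2}_{\Lb^\infty}/r^2_n=(\norm{\EN^n_2}_{\Lb^\infty}/r^1_n)(a^1_n/a^2_n)\to0$ while $\norm{\EN^n_1}_{\Lb^\infty}/r^1_n\to0$ by hypothesis; together with Assumptions \ref{A:no_arb_n}, \ref{A:px_range}, \ref{A:bdd_claim} this puts us in the setting of Lemma \ref{L:endowments}, which I invoke for $i=1$ and $i=2$, so the conclusions of Theorem \ref{T:opt_pos_lb} hold for the optimal-position maps $p^n\mapsto\hat q^i_n(p^n)$ of \eqref{E:opt_q_n_end}.

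Then I would apply the transferred Theorem \ref{T:opt_pos_lb} with $\tilde p^n:=\epn$, using $\hat q^1_n(\epn)=\eqn$, $\hat q^2_n(\epn)=-\eqn$ and the common threshold $d$: in the case $\tilde p<d$ this yields $\liminf_n\eqn/r^1_n>0$ and $\liminf_n(-\eqn)/r^2_n>0$, and in the case $\tilde p>d$ the mirror statements. Writing $\eqn/r^2_n=(a^1_n/a^2_n)(\eqn/r^1_n)$ with $a^1_n/a^2_n$ bounded between two positive constants, the first pair is contradictory (a quantity eventually bounded below by a positive constant cannot, after multiplication by a uniformly positive factor, be eventually bounded above by a negative one), and likewise in the other case; this forces $\tilde p=d$, hence $\epn\to d$. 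Assumption \ref{A:strict_conc} is not actually needed for this contradiction, but it can be used (via Corollary \ref{C:opt_pos}) to upgrade the one-sided bounds to genuine convergence of $\eqn/r^1_n$ whenever $\tilde p$ lies in the interior of the range of $p^\infty_1$.

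The main obstacle is the second step: Assumption \ref{A:GE_Opt1} is stated for a single risk-aversion sequence, and for the sign clash to be genuine one needs both investors governed by the \emph{same} limiting price $d$ and by comparable rates $r^1_n,r^2_n$. This is precisely what the hypothesis on $\{a^2_n/a^1_n\}$ provides, through the scaling relation $p^n_{a^2_n}(q)=p^n_{a^1_n}(qa^2_n/a^1_n)$; once both investors' optimal positions are known to diverge on the same side at the same order, the market-clearing identity finishes the argument with no further input.
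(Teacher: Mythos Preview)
Your argument is correct and in fact more streamlined than the paper's. Both proofs proceed by contradiction from a subsequential limit $\tilde p\neq d$, and both invoke Lemma \ref{L:endowments}. The difference is in how the contradiction is reached. The paper applies Lemma \ref{L:endowments} only to investor $1$ (obtaining, along a further subsequence, $\eqn/r^1_n\to\ell>0$ when $\tilde p<d$), and then shows by a direct variational estimate---plugging $\qprob^n_0$ into \eqref{E:total_price} and using $\norm{\EN^n_2}_{\Lb^\infty}/\eqn\to 0$---that $-\eqn$ cannot be optimal for investor $2$ at price $\epn$. You instead apply Lemma \ref{L:endowments} symmetrically to \emph{both} investors, which immediately yields the sign clash $\liminf \eqn/r^1_n>0$ and $\liminf(-\eqn)/r^2_n>0$; no additional computation is needed.

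Your route is shorter and makes transparent why the hypothesis on $\{a^2_n/a^1_n\}$ is there: it is exactly what is needed to transfer Assumption \ref{A:GE_Opt1} (and the vanishing-endowment condition) to investor $2$ so that Lemma \ref{L:endowments} applies on that side as well. You are also right that Assumption \ref{A:strict_conc} plays no role in the contradiction; the paper invokes it only implicitly when passing to a subsequence with $\eqn/r^1_n\to\ell$, but a positive $\liminf$ already suffices for its estimate, and your argument avoids the issue entirely. One minor simplification: once you have $\liminf \eqn/r^1_n>0$ and $\liminf(-\eqn)/r^2_n>0$ you can stop---the signs alone contradict each other, and the comparability of $r^1_n,r^2_n$ is not needed at that final step (though it was essential earlier to set up Lemma \ref{L:endowments} for $i=2$).
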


\begin{proof}
Let $\epn$ denote an arbitrarily chosen convergent subsequence of
the equilibrium prices of $B$ with limit $\hat{p}$ (note that
$B\in\Lb^{\infty}$ guarantees the existence of such subsequence)
and assume that $\hat{p}\neq d$, and in particular $\hat{p}< d$.

Under Assumptions \ref{A:bdd_claim} and \ref{A:endowments}, it
follows by Theorem 5.1 of \cite{MR2212897} that the map $q\mapsto
qp_{a_n^i}^n(q|\EN^n_i)$ is strictly concave for each $i=1,2$, and
also that
\begin{equation}\label{E:strict conca}
    \espalt{\qprob_i^{n}(q)}{B}=\frac{\partial}{\partial
q}qp_{a_n^i}^n(q|\EN^n_i).
\end{equation}
Now, that $\espalt{\qprob_1^{n}(0)}{B}\neq\espalt{\qprob_2^{n}(0)}{B}$ holds due to Assumption \ref{A:endowments}. Thus, first assume for some subsequence (still labeled $n$) that $\espalt{\qprob_1^{n}(0)}{B}>\espalt{\qprob_2^{n}(0)}{B}$, for sufficiently large $n$. Then $\eqn>0$ and in fact
$\espalt{\qprob_1^{n}(0)}{B}>\epn>\espalt{\qprob_2^{n}(0)}{B}$. In view of Theorem
\ref{T:opt_pos_lb} and Lemma \ref{L:endowments}, we have that the
inequality $\hat{p}< d$ implies the existence of a further subsequence of
$\eqn$ (still labeled $n$) such that
$\lim_{n\rightarrow\infty}\eqn/r_n^1 = \ell > 0$. We reach then a
contradiction if we show that for sufficiently large $n$, the
position $-\eqn$ is not optimal for investor 2. Since $\hat{p}<
d$, we get from Assumption \ref{A:GE_Opt1} that there exists $c>0$
such that for any sufficiently large $n$,
$\epn<\espalt{\qprob^{n}_0}{B}-c$. This implies that
\[0\leq (-\eqn)\left(p_{a_n^2}^n(-\eqn|\EN^n_2) -\epn\right)<
(-\eqn)\left(p_{a_n^2}^n(-\eqn|\EN^n_2)
-\espalt{\qprob^{n}_0}{B}+c\right),\] where the first inequality
holds because the position $-\eqn$ is optimal for investor 2 at
price $\epn$, for each $n$. Using the relation \eqref{Eq:
conditional indifference price} and the representation
\eqref{E:total_price} we get that (recall definition
\eqref{E:util_indiff_pEN})
\begin{align*}
0 & < \frac{\pr_{a_n^2}^{n}(-\eqn B+\EN_2^{n})}{\eqn}-\frac{\pr_{a_n^2}^{n}(\EN_2^{n})}{\eqn}+\espalt{\qprob^{n}_0}{B}-c\\
&=
\inf_{\qprob\in\tM^{n}}\left\{\espalt{\qprob}{-B+\frac{\EN_2^{n}}{\eqn}}
+ \frac{1}{a_n^2\eqn}\left(\relent{\qprob}{\prob} -
\relent{\qprob^{n}_0}{\prob}\right)\right\}-\frac{\pr_{a_n^2}^{n}(\EN_2^{n})}{\eqn}+\espalt{\qprob^{n}_0}{B}-c\\
& \leq
\espalt{\qprob^{n}_0}{\frac{\EN_2^{n}}{\eqn}}-\frac{\pr_{a_n^2}^{n}(\EN_2^{n})}{\eqn}-c\leq
2\frac{||\EN_2^{n}||_{\Lb^{\infty}}}{\eqn}-c = 2\frac{||\EN_2^n||_{\Lb^{\infty}}}{r^1_n}\frac{r^1_n}{\eqn} - c.
\end{align*}

Since $||\EN_2^n||_{\Lb^{\infty}}/r^1_n\rightarrow 0$ and $r^1_n/\eqn\rightarrow 1/\ell$ it follows that $c\leq 0$, a contradiction since $c>0$. Similarly, when $\espalt{\qprob_1^{n}(0)}{B}<\espalt{\qprob_2^{n}(0)}{B}$, for sufficiently large $n$, then $\eqn<0$ and up to a subsequence $\eqn/r^2_n\rightarrow -\ell < 0$. In this case, we follow the same arguments to show that the position $-\eqn$ could not be optimal for the investor 1 for sufficiently large $n$. Finally, the case where $\hat{p}>d$ is symmetric to the analysis above and hence omitted.
\end{proof}

Withdrawing however the assumption
$||\EN_i^{n}||_{\Lb^{\infty}}/r_n\rightarrow 0$ could give the
interesting cases where the equilibrium prices converge to a price
different than the unique arbitrage free price of the limiting
market and the regime of Theorems \ref{T:opt_pos_lb},
\ref{T:opt_pos_ub} occurs. A family of such examples are presented
in the following Proposition.

\begin{proposition}\label{P:PEPQ_example}
Let Assumptions \ref{A:no_arb_n}, \ref{A:px_range} and
\ref{A:bdd_claim} hold. Impose also Assumption \ref{A:GE_Opt1} for
function $p_{1}^n(p)$ with constant risk aversion equal to 1 and
Assumption \ref{A:strict_conc} for the corresponding function
$qp^{\infty}(q)$. If for each $n\in\N$ and $i=1,2$, $a_n^i\equiv
a_i$ and $\EN_i^n\equiv b_i^n B$, for some $a_i>0$ and
$b_i^n\in\R$, the following statements hold:
\begin{itemize}
\item [i.] For each market $n\in\N$, the unique PEPQ pair
$(\epn,\eqn)$ is given by $q_n^*=(a_2b_2^n-a_1b_1^n)/(a_1+a_2)$
and $\epn=\espalt{\qprob^{-ab^n}}{B}$, with $1/a:=1/a_1+1/a_2$ and
$b^n:=b_1^n+b_2^n$. \item [ii.] Letting for each $n\in\N$,
$b_2^n=\kappa r_n$, for some $\kappa\in (0,\delta_+/a)$ and
$b_1^n=b_1\in\R$, we get that
$\lim_{n\rightarrow\infty}\eqn/r_n=\ell>0$ and $\epn\rightarrow
\hat{p}<d$.
\end{itemize}
\end{proposition}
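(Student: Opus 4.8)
The plan is to reduce Proposition~\ref{P:PEPQ_example} to a one-dimensional statement about the scalar function $g^n(\mu):=\pr^n_1(\mu B)=\mu\,p^n_1(\mu)$. This function is concave, and --- whenever $B$ is non-replicable in the $n$-th market, which holds for all large $n$ by Assumption~\ref{A:px_range} --- strictly concave and differentiable, with $g^{n\prime}(\mu)$ equal to the expectation of $B$ under the dual optimizer attached to the position $\mu$ (the envelope relation for the infimum in \eqref{E:total_price}, cf. \eqref{E:dual_measure}). Two identities drive the reduction: the decomposition \eqref{Eq: conditional indifference price}, which for $\EN_i^n=b_i^nB$ reads $q\,p^n_{a_i}(q\,|\,b_i^nB)=\pr^n_{a_i}((q+b_i^n)B)-\pr^n_{a_i}(b_i^nB)$, and the risk-aversion scaling $\pr^n_a(cB)=c\,p^n_a(c)=a^{-1}g^n(ac)$ coming from \eqref{E:ra_switch}. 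Combining them gives $q\,p^n_{a_i}(q\,|\,b_i^nB)=a_i^{-1}\big(g^n(a_i(q+b_i^n))-g^n(a_ib_i^n)\big)$, whose derivative in $q$ is $g^{n\prime}(a_i(q+b_i^n))$.

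For part (i), rather than working from \eqref{eq:PEQ condition} I would use the investors' first-order conditions directly. By \eqref{E:PEPQprice} together with \eqref{E:strict conca} --- available because $q\mapsto q\,p^n_{a_i}(q\,|\,\EN_i^n)$ is strictly concave and differentiable (Theorem~5.1 of \cite{MR2212897}) --- the PEPQ satisfies $\epn=\partial_q[q\,p^n_{a_1}(q\,|\,b_1^nB)]_{q=\eqn}$ and $\epn=\partial_q[q\,p^n_{a_2}(q\,|\,b_2^nB)]_{q=-\eqn}$, i.e., by the reduction above,
\[
g^{n\prime}\big(a_1(\eqn+b_1^n)\big)=\epn=g^{n\prime}\big(a_2(b_2^n-\eqn)\big).
\]
Strict concavity of $g^n$ makes $g^{n\prime}$ strictly decreasing, hence injective, forcing $a_1(\eqn+b_1^n)=a_2(b_2^n-\eqn)$; solving this linear equation gives $\eqn=(a_2b_2^n-a_1b_1^n)/(a_1+a_2)=q_n^*$. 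A short computation shows the common value of the two arguments is $ab^n$, with $1/a=1/a_1+1/a_2$ and $b^n=b_1^n+b_2^n$, so that $\epn=g^{n\prime}(ab^n)=\espalt{\qprob^{-ab^n}}{B}$ by the envelope relation and \eqref{E:PEPQprice} ($\qprob_1^n(\eqn)$ being the dual optimizer for $(\eqn+b_1^n)B$ at risk aversion $a_1$, equivalently for $ab^n$ units of $B$ at unit risk aversion).

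For part (ii), substituting $b_2^n=\kappa r_n$ and $b_1^n\equiv b_1$ into the formula of part (i) gives $\eqn/r_n=(a_2\kappa r_n-a_1b_1)/((a_1+a_2)r_n)\to a_2\kappa/(a_1+a_2)=:\ell>0$; moreover $a_1b_1^n-a_2b_2^n\to-\infty$, so Assumption~\ref{A:endowments} holds for all large $n$ and the PEPQ is unique with $\eqn\neq0$. For the price, $\epn=g^{n\prime}(ab^n)$ with $ab^n=ab_1+a\kappa r_n$, hence $\ell_n:=ab^n/r_n\to a\kappa\in(0,\delta^+)$ (using $\kappa<\delta^+/a$). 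Set $\phi_n(\mu):=r_n^{-1}g^n(\mu r_n)=\mu\,p^n_1(\mu r_n)$; Assumption~\ref{A:GE_Opt1} for $p^n_1$ and the definition \eqref{E:delta_p} of $\delta^+$ give $\phi_n(\mu)\to G(\mu):=\mu\,p^\infty(\mu)$ for $0<\mu<\delta^+$, and $G$ is strictly concave by Assumption~\ref{A:strict_conc}. Since $\phi_n'(\mu)=g^{n\prime}(\mu r_n)$ we have $\epn=\phi_n'(\ell_n)$, and the classical fact that for pointwise-convergent concave functions $G_+'(\mu)\le\liminf_n\phi_n'(\mu)\le\limsup_n\phi_n'(\mu)\le G_-'(\mu)$ --- together with the monotonicity of the $\phi_n'$ and one-sided continuity of $G_\pm'$ --- yields $\phi_n'(\ell_n)\to G'(a\kappa)$ at every differentiability point $a\kappa$ of $G$ (all but countably many $\kappa$, by strict concavity). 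Thus $\epn\to\hat p:=G'(a\kappa)$, and since continuity of $p^\infty$ at $0$ gives $G'(0)=\lim_{\mu\to0}G(\mu)/\mu=\lim_{\mu\to0}p^\infty(\mu)=d$ while strict concavity forces $G'(a\kappa)<G'(0)$, we obtain $\hat p<d$.

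The step I expect to be the main obstacle is the convergence $g^{n\prime}(ab^n)\to\hat p$ in part (ii): one has only pointwise convergence of the concave functions $\phi_n$ and must upgrade it to convergence of their derivatives along the moving points $\ell_n\to a\kappa$ --- this is precisely where the convex-analysis estimate above, and the differentiability of $\mu\mapsto\mu\,p^\infty(\mu)$ at $a\kappa$ granted by Assumption~\ref{A:strict_conc} away from a countable set, carry the weight. Part (i), by contrast, is routine bookkeeping once the first-order conditions and the reduction to $g^n$ are in place; the only care needed is to invoke strict concavity/differentiability of $g^n$ so that the clearing relations determine $(\epn,\eqn)$ uniquely.
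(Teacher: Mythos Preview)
Your proposal is correct and follows a route that differs from the paper's in both parts.

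For part~(i), the paper proceeds via an inf-convolution identity: it shows directly that $q\bigl(p^n_{a_1}(q|\EN_1^n)+p^n_{a_2}(-q|\EN_2^n)\bigr)\le b^n p^n_a(b^n)$ for all $q$ (citing \cite{ElBarr05}), and then checks that equality holds at the announced $q_n^*$, so that $q_n^*$ solves \eqref{eq:PEQ condition}. You instead equate the two first-order conditions $g^{n\prime}(a_1(\eqn+b_1^n))=g^{n\prime}(a_2(b_2^n-\eqn))$ and use strict monotonicity of $g^{n\prime}$ to pin down $\eqn$. Both are valid; yours is more elementary and self-contained, avoiding the external reference.

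For part~(ii), the paper argues more loosely: it bounds $\epn\le p^n_{a_1}(\eqn\,|\,b_1B)\to p^\infty(a\kappa)$ (via the reasoning of Lemma~\ref{L:endowments}) and then asserts that $\epn$ is decreasing in $n$ so as to obtain a limit --- a monotonicity claim that is not fully justified, since the function $q\mapsto qp^n_{a_1}(q|b_1B)$ itself varies with $n$. Your convex-analysis route (pointwise convergence of the scaled concave functions $\phi_n\to G$ forcing convergence of their derivatives at the moving points $\ell_n\to a\kappa$) is sharper: it actually identifies $\hat p=G'(a\kappa)$. Your caveat about differentiability of $G$ at $a\kappa$ is honest; note, however, that even at a kink one still has $\limsup_n\epn\le G_-'(a\kappa)<G_+'(0)=d$ by strict concavity, so the conclusion that every accumulation point of $\epn$ lies strictly below $d$ survives without the differentiability assumption --- only the assertion of \emph{actual} convergence of $\epn$ needs it, and the paper's argument does not close that gap either.
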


\begin{proof}
The proof of the first item i.~is based on  standard arguments
of the related literature (see for example Theorem 3.2 in
\cite{ElBarr05}). We recall that the equilibrium quantity is the
solution of the  optimization problem \eqref{eq:PEQ condition} and
thanks to the strict concavity of the function $q\mapsto
qp^n_{a_i}(q|\EN_i^n)$ we get that for any $q\in\R$ and every
$n\in\N$,
\[q\left(p^n_{a_1}(q|\EN_1^n)+p^n_{a_2}(-q|\EN_2^n)\right)\leq b^np^n_{a}(b^n).\]
We then observe that in fact
$b^np^n_{a}(b^n)=\eqn\left(p^n_{a_1}(\eqn|\EN_1^n)+p^n_{a_2}(-\eqn|\EN_2^n)\right)$,
which means that $\eqn$ is indeed the equilibrium quantity. The
fact that equilibrium price $\epn$ equals to
$\espalt{\qprob^{-ab^n}}{B}$  readily follows by
\eqref{E:PEPQprice}.

For the second item, we have that $q_n^*/r_n=(a_2\kappa
r_n-a_1b_1)/(a_1+a_2)\rightarrow a_2\kappa/(a_1+a_2)>0$. Since
$\epn$ is the equilibrium price for each $n$, we have that $\epn<
p^n_1(\eqn|\EN_1^n)$, since $\eqn$ is optimal position for
investor 1 at price $\epn$. Then by using the representation
\eqref{Eq: conditional indifference price} as in the proof of
Lemma \ref{L:endowments}, we get that
\[\lim_{n\rightarrow\infty}p^n_{a_1}(q_n^*|\EN_1^n)=\lim_{n\rightarrow\infty}p^n_{a_1}(a_2\kappa r_n/(a_1+a_2))=p^{\infty}(a\kappa).\]

Recall that $\epn=\espalt{\qprob^{-ab^n}}{B}$ and note that strict
concavity of the function $q\mapsto qp^n_{a_1}(q|\EN_1^n)$ and
equation \eqref{E:strict conca} give that $\epn$ is decreasing in
$n$ and hence it has a limiting point $\hat{p}$. Thus, we have
that $\lim_{n\rightarrow\infty}p^*_n=\hat{p}\leq
p^{\infty}(a\kappa)<p^{\infty}(0)=d$, where the last strict
inequality follows by Assumption \ref{A:strict_conc}.
\end{proof}

Proposition \ref{P:PEPQ_example} indicates that there are cases
where the equilibrium quantity increases to infinity at the same
time where the equilibrium price is different than the limiting
arbitrage free price. It is important to point out here that both
investors act optimally at that equilibrium prices even though the
limiting price is different than $d$. The essential element is of
course that one of the investor is endowed with a large position
on the claim and she is willing to sell portion of her position at
a price which induces the other investor acting optimally to enter
to a large claim regime too. In other words, Proposition
\ref{P:PEPQ_example} justifies the large volume of some OTC
derivative markets and the corresponding extreme prices as long as
some of the participants in the market are already exposed to a
risk that is highly correlated with the payoff of the tradeable
derivatives. This situation fits to the observed extreme volumes
and prices for example in the Mortgage Backed Securities market in
the recent years.

\begin{remark}
The proof of Proposition \ref{P:PEPQ_example} can easily be
generalized in the case where the endowments are of the form
$\EN_i^n=b_i^nB+E^n_i,$ with the choices of $b_i^n$ as in the
Proposition \ref{P:PEPQ_example} and $E^n_i$ being bounded random
endowments such that $||E_i^{n}||_{\Lb^{\infty}}/r_n\rightarrow
0$.
\end{remark}

\section{Examples where the limiting scaled indifference price exist}\label{S:examples}

The power of Assumption \ref{A:GE_Opt1} is its validity in a wide
variety of models. In this section we give four well studied
market model examples. Then, in the next section we pay particular
attention to an example with transactions costs. Remarkably, even
though the standard duality results no longer apply, a version of
Assumption \ref{A:GE_Opt1} still holds and more importantly, so do
the conclusions of Theorems \ref{T:opt_pos_lb} and
\ref{T:opt_pos_ub}.

\subsection{Vanishing Risk Aversion in a Fixed Market}\label{SS:vanish_ra}

As shown Section \ref{SSS:vanish_ra} for a fixed market, if the risk aversion vanishes (i.e. $a_n\rightarrow 0$) then Assumption \ref{A:GE_Opt1} holds with $r_n = a_n^{-1}$ and $p^\infty(\ell) = p_1(\ell)$.  In addition, as the class of acceptable trading strategies $\mathcal{A}$ is a cone it follows for any $q_n$ that $\hat{\pi}_{a_n}(q_n) = (1/a_n)\hat{\pi}_1(a_nq_n)$. So, for $q_n = \ell r_n = \ell/a_n$, not only do indifference prices trivially converge, but the optimal trading strategy is explicitly known, i.e. it is $(1/a_n)\hat{\pi}_1(\ell) = r_n\hat{\pi}_1(\ell) = (q_n/\ell)\hat{\pi}_1(\ell)$.  Note that in this instance the normalized optimal trading strategy trivially converges but does not necessarily provide a super hedge.

\subsection{Basis Risk Model with High Correlation}\label{SS:br}

This example is considered in detail in \cite{davis1997opi, MR1926237, Robertson_2012, MR2094149} amongst others. Here, we have for each $n$ one risky asset $S^n$ which evolves according to
\begin{equation*}
\begin{split}
\frac{dS^n_t}{S^n_t} =& \mu(Y_t)dt + \sigma(Y_t)\left(\rho_n dW_t + \sqrt{1-\rho_n^2}d\tilde{W}_t\right),\\
dY_t =& b(Y_t)dt + a(Y_t)dW_t,
\end{split}
\end{equation*}
where $W$ and $\tilde{W}$ are two independent Brownian motions. The filtered probability space is the standard two-dimensional augmented Wiener space. The coefficients $a,b$ have appropriate regularity and are such that $Y$ has a unique strong solution taking values in an open subset $E$ of $\reals$. Set $\lambda \dfn \mu/\sigma$ as the market price of risk and assume that $\sigma^2(y)>0,y\in E$ and that $\lambda$ is bounded on $E$. $B = B(Y_T)$ for some continuous bounded function $B$ on $E$. As shown in \cite[Section 5.3]{Robertson_2012}, $\underline{B}_n = \underline{B} = \inf_{y\in E}B(y)$ and $\bar{B}_n = \bar{B}= \sup_{y\in E}B(y)$ for all $n$. Set $r_n = (1-\rho^2_n)^{-1}$. As shown in \cite{MR2094149} (see also \cite{Robertson_2012}), for a fixed risk aversion $a>0$ and $\ell\in\reals, \ell\neq 0$:
\begin{equation*}
p^n_a\left(\ell r_n\right) =
-\frac{1}{a\ell}\log\left(\frac{\espalt{}{e^{-\rho_n\int_0^T\lambda(Y_t)dW_t
- \frac{1}{2}\int_0^T\lambda^2(Y_t)dt - a\ell
B(Y_T)}}}{\espalt{}{e^{-\rho_n\int_0^T\lambda(Y_t)dW_t -
\frac{1}{2}\int_0^T\lambda^2(Y_t)dt}}}\right).
\end{equation*}
For $\ell = 0$ one has
\begin{equation*}
d_n = p^n_a(0) = \espalt{\qprob^n_0}{B(Y_T)} = \frac{\espalt{}{e^{-\rho_n\int_0^T\lambda(Y_t)dW_t -
\frac{1}{2}\int_0^T\lambda^2(Y_t)dt}B(Y_T)}}{\espalt{}{e^{-\rho_n\int_0^T\lambda(Y_t)dW_t -
\frac{1}{2}\int_0^T\lambda^2(Y_t)dt}}}.
\end{equation*}
Thus, if $\rho_n\rightarrow 1$ (limit of high correlation) then $r_n\rightarrow\infty$ and
\begin{equation*}
\begin{split}
\lim_{n\uparrow\infty} p^n_a(\ell r_n) &= p^\infty(\ell) =
-\frac{1}{a\ell}\log\left(\espalt{\qprob}{e^{-a\ell
B(Y_T)}}\right);\qquad \ell\neq 0;\\
\lim_{n\uparrow\infty} p^n_a(0) &= p^\infty(0) = \espalt{\qprob}{B(Y_T)},
\end{split}
\end{equation*}
where $\qprob$ is the unique martingale measure in the $\rho=1$ market where the filtration is restricted to $\mathbb{F}^W$. Furthermore, using l'Hopital's rule one obtains $\lim_{\ell\rightarrow 0}p^\infty(\ell) = \espalt{\qprob}{B(Y_T)} = p^\infty(0)$ so that Assumption \ref{A:GE_Opt1} is satisfied with $\delta= \infty$.

\subsection{Large Markets with Vanishing Trading Restrictions}\label{SS:SC}

The next example is simplified version of the general semi-complete setup considered in \cite{Robertson_Spil_2014}.  Here, $\probtriple$ is assumed to support a sequence of independent Brownian motions $W^1,W^2,...$. The filtration is the augmented version of $\filt^{W^1,W^2,...}$.  There is a sequence of (potentially tradeable) assets $S^1,S^2,...$ with dynamics
\begin{equation*}
\frac{dS^i_t}{S^i_t} = \mu^i dt + \sum_{j=1}^i\sigma^{ij} dW^j_t;\qquad i = 1,2,3,...,
\end{equation*}
where $\mu = (\mu^1,\mu^2,...)$ satisfies $\sum_{i=1}^\infty (\mu^i)^2 < \infty$ and $\sigma$ is the lower triangular square root of the symmetric matrix $\Sigma = \cbra{\Sigma^{ij}}_{i,j=1,2,...}$, assumed positive definite so that for some $\lambda > 0$ and all $\xi = (\xi^1,\xi^2,...)$ with $\sum_{i=1}^\infty (\xi^i)^2<\infty$, we have $\xi'\Sigma\xi \geq \lambda \xi'\xi$.

The claim (as is typical in life insurance markets) is given as the sum of independent, $\filt^{W^i}$ adapted claims $B^i$: $B = \sum_{i=1}^\infty B^i$. To make $B$ well defined and amenable to large claim analysis we assume $\espalt{}{e^{\lambda B^i}} < \infty, i=1,2,...$ and  $\sum_{i=1}^\infty \log\left(\espalt{}{e^{\lambda B^i}}\right) < \infty$ for all $\lambda\in\reals$.

For $n=1,2,...$ we construct the $n^{th}$ market by restricting trading to the first $n$ assets.  Thus, as $n\uparrow\infty$ the claim is asymptotically hedgeable, though for each $n$ the market
is incomplete. As shown in \cite{Robertson_Spil_2014}, $\underline{B}_n = d^n + \essinf{\prob}{Y_n}$ and $\bar{B}_n = d^n + \esssup{\prob}{Y_n}$ where $d^n$ is the unique replicating capital for $\sum_{i=1}^n B^i$ and $Y_n\dfn \sum_{i=n+1}^\infty B^i$. Under Assumption \ref{A:GE_Opt1}, $d^n\rightarrow d = \espalt{\qprob_0}{B}$ where $\qprob_0$ is the unique martingale measure in the limiting complete market.

Since $\sum_{i=1}^\infty \log\left(\espalt{}{e^{\lambda B^i}}\right) < \infty$ for all $\lambda\in\reals$, we know that $\lim_{n\uparrow\infty} \espalt{}{Y_n^2} = 0$. Assume furthermore that $Y_n$ is converging to $0$ sufficiently fast so that it satisfies a LDP with scaling $r_n\rightarrow\infty$ and good rate function $I$ such that $\cbra{I=0} = \cbra{0}$. Lastly, assume that for some $\delta>0$, $|\lambda|<\delta$ implies
\begin{equation}\label{E:LC_1}
\limsup_{n\uparrow\infty}
\frac{1}{r_n}\sum_{i=n}^\infty\log\left(\espalt{}{e^{\lambda r_n
B^i}}\right) < \infty.
\end{equation}
For example, this will hold if $B^i\sim N(0,\delta_i^2)$, with $\sum_{i=1}^\infty \delta_i^2 < \infty$.  Fix the risk aversion $a_n = a>0$. As shown in \cite{Robertson_Spil_2014}, at $\ell = 0$ we have $\lim_{n\uparrow\infty} p^n_a(0) = d = p^\infty(0)$. Furthermore, for $0<|\ell|<\delta/a$
\begin{equation*}
\lim_{n\uparrow\infty} p^n_{a}(\ell r_n) = p^{\infty}(\ell) =  d -\frac{1}{a\ell}\sup_{y\in\reals}(-\ell a y - I(y)).
\end{equation*}
Additionally, as can be deduced from $I(y)=0 \leftrightarrow y=0$, \eqref{E:LC_1} and the lower-semicontinuity of $I$, it follows that
\begin{equation*}
\lim_{\ell\rightarrow 0} \frac{1}{a\ell}\sup_{y\in\reals}(-\ell a y - I(y)) = 0,
\end{equation*}
so that $p^\infty(\ell)\rightarrow d = p^\infty(0)$ as $\ell\rightarrow 0$.  Thus, Assumption \ref{A:GE_Opt1} holds. Lastly, it is also shown in \cite{Robertson_Spil_2014} that for all $q\in\reals$ the normalized residual risk process $\hat{Y}^n_a(q)$ of \eqref{E:norm_resid_risk_p} is precisely $Y_n$ and, as such, does not depend upon $q$.

\subsection{Black-Scholes-Merton Model with Vanishing Default Probability}\label{SS:BS_defautl}

This example is taken from \cite{Ishikawa_Robertson_2015} and the setup is similar to that considered in \cite{MR2931345}. Here, we consider
the Black-Scholes-Merton model, except that the stock may default
at the first jump time of an independent Poisson process. The
claim is a defaultable bond paying $1$ if the stock has not
defaulted by time $T$. The owner of the bond wishes to hedge the
claim by trading in $S^n$, but needs to take into account the event
of default, since the stock is stuck at $0$ after default occurs.

Fix $n$ and let $\lambda_n > 0$. For each $n$, the probability
space is assumed to support a Brownian motion $W$ as well as an
independent Poisson process $N^n$ with intensity $\lambda_n$.
Denote by $\tilde{N}^n$ the compensated Poisson process so that
$\tilde{N}^n_t = N^n_t - \lambda_n(\tau_n\wedge t)$, where $\tau_n
= \inf\cbra{t\geq 0: N^n=1}$. The filtration is that generated by
$N^n$ and $W$, augmented so that it satisfies the usual
conditions.  The (single) risky asset $S^n$ evolves according to
\begin{equation*}
\begin{split}
\frac{dS^n_t}{S^n_{t-}}&=1_{t\leq \tau_n}\left(\mu dt + \sigma dW_t\right) - dN^n_t,\\
&=1_{t\leq \tau_n}\left((\mu+\lambda_n)dt + \sigma dW_t - d\tilde{N}^n_t\right).
\end{split}
\end{equation*}
The claim is a defaultable bond which pays $1$ if $S^n$ defaults
before $T$: i.e. $B= 1_{\tau^n\leq T}$\footnote{As the claim depends upon $n$ here it does not fit precisely into the setup of Section \ref{S:model}. However, as inspection of the Propositions in Appendix \ref{SS:A_technical} shows, the results of Theorems \ref{T:opt_pos_lb}, \ref{T:opt_pos_ub} readily extend to a sequence of claims $B_n$ if they are uniformly bounded.}. Here, $\underline{B}_n =
0$ and $\bar{B}_n = 1$, this is because we can equivalently
change the default intensity to take any positive value. Thus,
Assumption \ref{A:px_range} holds even though $d=1$ and hence
$d\not\in I^n$ for all $n$.

 As shown in \cite{Ishikawa_Robertson_2015}, $u^n_a(0,q) = -\frac{1}{a}F^n(0;q)$ where $F^n(\cdot;q)$ solves the ODE
\begin{equation*}
\begin{split}
\dot{F}^n(t;q)& - \lambda F^n(t;q) - \frac{\mu^2}{2\sigma^2}F^n(t;q) + \min_{\phi}\left(\frac{1}{2}\sigma^2\phi^2F^n(t;q) + \lambda_n e^{\tfrac{\mu}{\sigma^2} - \phi}\right) = 0;\qquad t\leq T,\\
F^n(T;q) &= e^{-aq}.
\end{split}
\end{equation*}
It is easy to see that the optimal $\hat{\phi}^n$ in the above minimization satisfies $\hat{\phi}^n(t;q)e^{\hat{\phi}^n(t;q)} = \lambda_n (F^n(t;q))^{-1}e^{\tfrac{\mu}{\sigma^2}}$, where one can show that $F^n(t;q)>0$. Now, let $\lambda_n\downarrow 0$ (vanishing default probabilities) and set $r_n = -\log(\lambda_n)$. With $q_n = \ell r_n$, one can show that for $\ell <1/a$:
\begin{equation*}
\lim_{n\uparrow\infty} p^n_a(\ell r_n) = \lim_{n\uparrow\infty} -\frac{1}{\ell a r_n}\log\left(\frac{F^n(0;\ell r_n)}{F^n(0;0)}\right) = p^\infty_a(\ell)  = 1.
\end{equation*}
Since
\begin{equation*}
\lim_{\ell\rightarrow 0} p^\infty_a(\ell) = 1 = \lim_{n\uparrow\infty} p^n_a(0),
\end{equation*}
we see that Assumption \ref{A:GE_Opt1} is satisfied, though the map $\ell\mapsto \ell p^\infty(\ell) = \ell$ is not strictly concave.

\section{Vanishing Transaction Costs in the Black-Scholes-Merton Model}\label{SS:BS_trans}

In this section we show that the existence of limiting
indifference prices and the resultant statements about optimal
position taking even extend to models with frictions, where the
standard duality results used in Section \ref{S:model} are not as
fully developed (see \cite{czichowsky2014duality} for a recent
treatment of the topic).  As such, this example is given its own
section.

We consider the Black-Scholes-Merton model with proportional
transactions costs, as studied in \cite{MR1809526, MR2968041,
constantinides1986capital, MR1205985, MR3183924, MR3251862,
MR2676941, kallsen2013option, MR1284980} amongst many others.  We
take the approach of \cite{constantinides1986capital} and
especially \cite{MR1809526, MR3251862}. Using the notation of
\cite{MR1809526}, the stock $S$ evolves according to a geometric
Brownian motion
\begin{equation}\label{E:trans_GBM}
  \frac{dS_t}{S_t} = \mu dt + \sigma dW_t; \qquad t\leq T.
\end{equation}
Here, the filtered probability space is the standard one-dimensional Wiener space. Now, fix a time $t\leq T$ and $s>0$ and assume $S_t = s$. Denote by $X$ and $Y$ respectively the processes of dollar holdings in the money market and shares of stock owned associated to a trading strategy $L,M$ where $L_t=M_t=0$ and $L$ represents the cumulative transfers (in shares of stock) from the money market to the stock and $M$ represents the cumulative transfers from the stock to the money market. We denote by $\mathcal{A}_t$ the set of $(L,M)$ where $L,M$ are adapted, non-decreasing and left-continuous with $L_t = M_t = 0$. There is a proportional transaction cost $\transpm \in (0,1)$ by trading. In other words, for a given initial position $(x,y)$ where $x\in\reals$ is the initial capital and $y\in\reals$ the initial shares held in $S$ the corresponding processes evolve according to
\begin{equation}\label{E:wealth_dynamics}
\begin{split}
X_\tau &= X^{L,M,x,t}_\tau = x - \int_t^\tau S_u(1+\transpm)dL_u + \int_t^\tau S_u(1-\transpm)dM_u;\qquad t\leq \tau\leq T,\\
Y_\tau &= Y^{L,M,y,t}_\tau = y + L_\tau - M_\tau;\qquad t\leq \tau \leq T.
\end{split}
\end{equation}

The claim $B$ is a European call option on $S$: i.e. $B =
(S_T-K)^+$, and suppose that the investor is considering selling the call.
For an exponential investor with fixed risk aversion $a>0$ the
value function without the claim is given by
\begin{equation}\label{E:trans_vf_no_claim}
u_a(x,y;s,t,\transpm) = \sup_{L,M\in\mathcal{A}_t}\espaltm{}{s,t}{U_a(X_T + Y_TS_T)}.
\end{equation}
Here, $\espaltm{}{s,t}{\cdot}$ refers to conditioning on time $t$ given $S_t = s$. The value function for $q$ units of the call is
\begin{equation}\label{E:trans_vf_claim}
u_a(x,y,q;s,t,\transpm) = \sup_{L,M\in\mathcal{A}_t}\espaltm{}{s,t}{U_a(X_T + Y_TS_T - q(S_T-K)^+)}.
\end{equation}
The indifference price $p_a(x,y,q;s,t,\transpm)$ is then defined through the balance equation
\begin{equation}\label{E:trans_px}
u_a(x+qp_a(x,y,q;s,t,\transpm),y,q;s,t,\transpm) = u_a(x,y;s,t,\transpm).
\end{equation}

\begin{remark}\label{R:ask_price} $p_a(x,y,q;s,t,\transpm)$ is thus the average \emph{ask} indifference price, as opposed to the average \emph{bid} indifference price defined in Section \ref{S:model}.  However, using the arguments of Section \ref{S:model} and definition \eqref{E:util_indiff} for a general claim $B$, the bid and ask prices are related by $p^{\textrm{ask}}_a(q;B)=-p^{\textrm{bid}}_a(q;-B)$, where $p^{\textrm{bid}}_a(q;B)$ denotes the average bid price $(1/q)\pr^{\textrm{bid}}_{a}(qB)$.
\end{remark}

Though the results in \cite{MR1809526} are stated in the joint
limit of vanishing transactions costs (i.e.~$\transpm_n\rightarrow
0$) and infinite risk aversion (i.e. $a=a_n\rightarrow\infty$),
they easily (as the authors therein mention) translate into
asymptotics in the joint limit that $\transpm_n\rightarrow 0$ and
$q=q_n\rightarrow\infty$ for a fixed risk aversion $a$.  This
translation is made precise in the following proposition.

\begin{proposition}\label{prop:bs_conv_price}

Fix $s>0, 0 \leq t\leq T$, $x\in\reals$, $y\in\reals$, $\transpm\in(0,1)$ and $a>0$. The (ask) indifference price $p_a$  is independent of $x$ and hence write $p_a = p_a(y,q;s,t,\transpm)$. Now, let $\transpm_n\rightarrow 0$ and set $r_n \dfn \transpm_n^{-2}$. For $\ell > 0$ and $q_n = \ell r_n = \ell\transpm_n^{-2}$ we have for all $y_n$ such that $\lim_{n\uparrow\infty} \transpm_n^3|y_n| = 0$:
\begin{equation*}
\lim_{n\uparrow\infty} p_a(y_n,q_n;s,t,\transpm_n) = p^\infty_a(\ell;s,t) \dfn \Psi(s,t;\sqrt{a\ell}),
\end{equation*}
where for $b>0$, $\Psi(;b) :(0,\infty)\times [0,T]\mapsto\reals$ is the unique continuous viscosity solution to the non-linear Black-Scholes PDE
\begin{equation}\label{Eq:NonlinearBlackScholesEquation}
\begin{split}
\Psi_t + \frac{1}{2}\sigma^2 s^2 \Psi_{ss}\left(1+S(b^2 s^2\Psi_{ss})\right) = 0&;\qquad (s,t)\in (0,\infty)\times (0,T);\\
\Psi(s,T) = (s-K)^+&;\qquad s\in (0,\infty);\\
\lim_{s\uparrow\infty}\frac{\Psi(s,t)}{s} = 1&;\qquad t\leq T\textrm{  uniformly in } t.
\end{split}
\end{equation}
Here, $S:\reals\mapsto (-1,\infty)$ satisfies
\begin{equation*}
\dot{S}(A) = \frac{1+S(A)}{2\sqrt{AS(A)}-A};\quad S(0) = 0;\quad \lim_{A\downarrow-\infty}S(A)=-1;\quad \lim_{A\uparrow\infty} S(A)/A = 1.
\end{equation*}
\end{proposition}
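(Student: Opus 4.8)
The plan is to reduce the joint limit ``$\transpm_n\to 0$, $q_n\to\infty$, risk aversion fixed at $a$'' to the joint limit ``transaction cost $\to 0$, risk aversion $\to\infty$, with the product of risk aversion and the square of the transaction cost held fixed'' treated in \cite{MR1809526}, and then to quote their convergence theorem. The reduction rests on a homogeneity of the exponential utility problem with proportional costs. Since each $\mathcal{A}_t$ is a cone, the map $(L,M)\mapsto(qL,qM)$ is a bijection of $\mathcal{A}_t$ for $q>0$, and by \eqref{E:wealth_dynamics} one has $X^{qL,qM,0,t}=q\,X^{L,M,0,t}$ and $Y^{qL,qM,y,t}=q\,Y^{L,M,y/q,t}$. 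Combining this with the identity $U_a(qw)=q\,U_{aq}(w)$ in \eqref{E:trans_vf_no_claim}--\eqref{E:trans_vf_claim} gives, for every $q>0$,
\begin{equation*}
u_a(0,y,q;s,t,\transpm)=q\,u_{aq}(0,y/q,1;s,t,\transpm),\qquad u_a(0,y;s,t,\transpm)=q\,u_{aq}(0,y/q;s,t,\transpm),
\end{equation*}
and the same scaling applied inside the balance equation \eqref{E:trans_px} (using $u_a(x,\cdot)=e^{-ax}u_a(0,\cdot)$, which is also why $p_a$ is independent of $x$) yields the exact identity $p_a(y,q;s,t,\transpm)=p_{aq}(y/q,1;s,t,\transpm)$.

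Taking $q_n=\ell\transpm_n^{-2}$ rewrites the object of interest as $p_{a_n}(y_n',1;s,t,\transpm_n)$ with $a_n\dfn aq_n=a\ell\transpm_n^{-2}\to\infty$, with $a_n\transpm_n^2\equiv a\ell$ for all $n$, and with rescaled initial stock holding $y_n'\dfn y_n/q_n=\transpm_n^2 y_n/\ell$ satisfying $\transpm_n y_n'=\transpm_n^3 y_n/\ell\to0$ exactly under the hypothesis $\transpm_n^3|y_n|\to0$. I would then eliminate the residual dependence on $y_n'$: starting from a position holding $y'$ shares, the suboptimal strategy that immediately liquidates the stock (selling if $y'>0$, closing the short if $y'<0$) and then plays any strategy admissible from the zero-stock position shows that $u_{a_n}(\cdot,y',1;s,t,\transpm_n)$ and $u_{a_n}(\cdot,0,1;s,t,\transpm_n)$ differ by a capital shift of magnitude at most $(1+\transpm_n)s|y'|$; since $x\mapsto u_a(x,\cdot)$ is continuous and strictly increasing, a standard super-/sub-replication comparison transfers this to $|p_{a_n}(y_n',1;s,t,\transpm_n)-p_{a_n}(0,1;s,t,\transpm_n)|\le(1+\transpm_n)s|y_n'|=O(\transpm_n^3|y_n|)\to0$. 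Hence it suffices to prove $\lim_{n\uparrow\infty}p_{a_n}(0,1;s,t,\transpm_n)=\Psi(s,t;\sqrt{a\ell})$.

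This last convergence is the content of \cite{MR1809526}: in the joint limit where the transaction cost vanishes and the risk aversion diverges with the product of risk aversion and squared transaction cost tending to a constant $c$, the writer's (ask) price of a European call converges, locally uniformly in $(s,t)$, to the unique continuous viscosity solution of the nonlinear Black--Scholes equation \eqref{Eq:NonlinearBlackScholesEquation}, the nonlinearity entering through the same universal function $S(\cdot)$, with $b^2=c$. Here $c=\lim_n a_n\transpm_n^2=a\ell$, so $b=\sqrt{a\ell}$; the linear-growth condition $\Psi(s,t)/s\to1$ as $s\uparrow\infty$ (uniform in $t$) together with the comparison principle of \cite{MR1809526} identify the limit uniquely. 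The main obstacle, and essentially the only nonroutine step, is importing this theorem cleanly: \cite{MR1809526} is phrased for the value function with its own normalization of initial data and in the regime ``risk aversion $\to\infty$'', so one must check that (i) their value-function convergence entails convergence of the indifference price defined by \eqref{E:trans_px}, and (ii) their analysis is insensitive to the $O(\transpm_n^2|y_n|)$ perturbation of the initial stock holding --- which is precisely what the scaling identity and the liquidation estimate above arrange. Granting these two points, viscosity-solution stability and uniqueness pin the limit down as $\Psi(s,t;\sqrt{a\ell})$, completing the proof.
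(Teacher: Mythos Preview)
Your approach is the paper's: use the cone property of $\mathcal{A}_t$ to derive the scaling identity $p_a(y,q;s,t,\transpm)=p_{aq}(y/q,1;s,t,\transpm)$ (the paper's Lemma~\ref{L:Indiff_Px_Scale}), thereby converting ``fixed $a$, $q_n\to\infty$'' into the regime $a_n\to\infty$, $\transpm_n\to 0$ with $a_n\transpm_n^2\equiv a\ell$ of \cite{MR1809526}, and then invoke their convergence theorem. The paper carries this out through the auxiliary functions $z^\eps,z^{\eps,f}$ of \cite{MR1809526} and the exact identity $p_a=z^\eps-z^{\eps,f}$, with $z^{\eps_n}\to\Psi(\cdot\,;\sqrt{a\ell})$ and $z^{\eps_n,f}\to 0$ supplied directly by \cite[Theorem~3.1]{MR1809526}; but the substance is the same as what you outline.

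The gap is in your Lipschitz estimate for the $y$-dependence. Your liquidation argument correctly shows that both value functions (with and without the claim) satisfy $u_{a_n}(0,y',\cdot)=u_{a_n}(\delta,0,\cdot)$ for some $\delta\in[sy'(1-\transpm_n),\,sy'(1+\transpm_n)]$. Writing $\delta_0,\delta_1$ for the two shifts, an exact computation gives $p_{a_n}(y',1)-p_{a_n}(0,1)=\delta_0-\delta_1$, and since both $\delta_0,\delta_1$ lie in an interval of length $2\transpm_n s|y'|$, the relevant bound is $|p_{a_n}(y',1)-p_{a_n}(0,1)|\le 2\transpm_n s|y'|$: the bulk term $sy'$ cancels between the with-claim and no-claim certainty equivalents. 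Your stated bound $(1+\transpm_n)s|y_n'|$ is true but not sharp, and your claim that it equals $O(\transpm_n^3|y_n|)$ is an arithmetic slip---in fact $(1+\transpm_n)s|y_n'|=(1+\transpm_n)s\transpm_n^2|y_n|/\ell=O(\transpm_n^2|y_n|)$, which need \emph{not} vanish under the hypothesis $\transpm_n^3|y_n|\to 0$ (take $|y_n|=\transpm_n^{-5/2}$). You had already computed the correct quantity $\transpm_n y_n'=\transpm_n^3 y_n/\ell\to 0$ one line earlier; the sharp bound $2\transpm_n s|y_n'|=2s\transpm_n^3|y_n|/\ell$ is exactly what is needed. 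The paper obtains the same estimate by quoting $|z^{\eps}(y,\cdot)-z^{\eps}(0,\cdot)|\le \transpm s|y|$ from \cite[pp.~389]{MR1809526}, where the $+sy$ offset built into the definition of $z^\eps$ encodes precisely this cancellation.
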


\begin{remark} The above result allows for $y_n$ to vary since intuitively a position size of $q_n$ in the call would be associated to an initial position of $q_ny$ in the stock for some $y\in\reals$.  Note that for $y_n = q_n y = \ell y \transpm_n^{-2}$ we have $\transpm_n^3|y_n|\rightarrow 0$.
\end{remark}

To obtain the optimal position taking results analogous to
Theorems \ref{T:opt_pos_lb}, \ref{T:opt_pos_ub}, it is first
necessary to identify the range of limiting prices
$p^\infty_a(\ell;s,t)$ in Proposition \ref{prop:bs_conv_price} as
$\ell$ varies between $0$ and $\infty$. In other words, we must
consider asymptotics for $\Psi(;b)$ for small and large $b$.

As $b\downarrow 0$, Theorem \ref{T:VanishingTransactionsCost} below proves continuity in that $\Psi(s,t;b)\rightarrow \Psi(s,t;0)$. But, for $b=0$, \eqref{Eq:NonlinearBlackScholesEquation} is just the regular Black-Scholes PDE which admits a unique (explicit) classical solution. Thus, as $\ell\downarrow 0$, the limiting indifference price converges to the unique price in complete, $\transpm_n = 0$ market given $S_t =s$.

\begin{theorem}\label{T:VanishingTransactionsCost}
Let $\Psi(;b):(0,\infty)\times [0,T]\mapsto\reals$ be the unique,
continuous, viscosity solution to the non-linear Black-Scholes PDE
equation (\ref{Eq:NonlinearBlackScholesEquation}). Then as
$b\rightarrow 0$, we have locally uniformly that
$\Psi(;b)\rightarrow\Psi(;0)$, where $\Psi(;0)$ is the unique
continuous solution to the linear Black-Scholes PDE.
\end{theorem}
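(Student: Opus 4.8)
The plan is to prove Theorem~\ref{T:VanishingTransactionsCost} by the method of half-relaxed limits of Barles--Perthame. Write the nonlinear operator in \eqref{Eq:NonlinearBlackScholesEquation} as
\[
F_b(s,X)\dfn \tfrac12\sigma^2 s^2 X\left(1+S(b^2 s^2 X)\right),\qquad (s,X)\in(0,\infty)\times\reals,
\]
so that $\Psi(\cdot;b)$ is the continuous viscosity solution of $\Psi_t+F_b(s,\Psi_{ss})=0$ on $(0,\infty)\times(0,T)$ with terminal datum $(s-K)^+$ and the stated linear growth at $s=\infty$. Since $S$ is continuous with $S(0)=0$, one has $F_b(s,X)=\tfrac12\sigma^2 s^2 X\,(1+S(b^2 s^2 X))\to \tfrac12\sigma^2 s^2 X$ as $b\downarrow0$, locally uniformly in $(s,X)$; thus the operators $F_b$ converge locally uniformly to the degenerate elliptic linear Black--Scholes operator $F_0(s,X)=\tfrac12\sigma^2 s^2 X$, which after the substitution $s=e^x$ has constant coefficients and hence satisfies a comparison principle on $(0,\infty)\times[0,T]$ within the class of functions with linear growth.

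The first step is to obtain bounds on $\Psi(\cdot;b)$ that are uniform in $b\in(0,1]$, namely $C(s,t)\le\Psi(s,t;b)\le s$, where $C$ is the classical Black--Scholes price of the call with volatility $\sigma$. The linear function $s$ solves \eqref{Eq:NonlinearBlackScholesEquation} (both of its spatial derivatives vanish) and dominates both the terminal datum and the growth behaviour, so $\Psi(\cdot;b)\le s$ by comparison for the nonlinear equation; and $C$ is convex in $s$, so $b^2 s^2 C_{ss}\ge0$ and, since $S\ge0$ on $[0,\infty)$, $C_t+\tfrac12\sigma^2 s^2 C_{ss}(1+S(b^2 s^2 C_{ss}))=\tfrac12\sigma^2 s^2 C_{ss}\,S(b^2 s^2 C_{ss})\ge 0$, i.e.\ $C$ is a subsolution of \eqref{Eq:NonlinearBlackScholesEquation} with the correct terminal and growth data, whence $C\le\Psi(\cdot;b)$. (Alternatively, these bounds follow from the interpretation of $\Psi(\cdot;b)$ as an indifference ask price, lying between the complete-market price and the super-replication price $s$.) In particular the family $\{\Psi(\cdot;b)\}_{0<b\le1}$ is locally uniformly bounded, and since $0\le C(s,t)\le\Psi(s,t;b)\le s$ with $C(s,t)/s\to1$ uniformly in $t$ as $s\uparrow\infty$ and $s\downarrow0$ forcing $\Psi(\cdot;b)\downarrow0$, the growth condition holds with bounds uniform in $b$.

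Next I would introduce the upper and lower half-relaxed limits $\overline\Psi(s,t)\dfn\limsup_{b\downarrow0,\,(s',t')\to(s,t)}\Psi(s',t';b)$ and $\underline\Psi(s,t)\dfn\liminf_{b\downarrow0,\,(s',t')\to(s,t)}\Psi(s',t';b)$, which are finite by the uniform bounds and satisfy $\underline\Psi\le\overline\Psi$. By the Barles--Perthame stability theorem, using that $F_b\to F_0$ locally uniformly and that each $\Psi(\cdot;b)$ is both a sub- and a supersolution of $\Psi_t+F_b(s,\Psi_{ss})=0$, the function $\overline\Psi$ is a viscosity subsolution and $\underline\Psi$ a viscosity supersolution of the linear Black--Scholes equation on $(0,\infty)\times(0,T)$. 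The side conditions pass to the limit: the uniform bounds give $\overline\Psi(s,t)/s,\ \underline\Psi(s,t)/s\to1$ uniformly in $t$ as $s\uparrow\infty$ and $\overline\Psi,\underline\Psi\to0$ as $s\downarrow0$. For the terminal layer, fix $\eps>0$ and pick a smooth convex $\phi_\eps$ with $(s-K)^+\le\phi_\eps\le(s-K)^++\eps$ that coincides with $s-K$ for $s$ large; then $s^2\phi_\eps''$ is bounded, so $b^2 s^2\phi_\eps''\to0$ uniformly in $s$ and $S(b^2 s^2\phi_\eps'')$ remains bounded uniformly in $b\in(0,1]$, and hence for a constant $M_\eps$ independent of $b$ the function $w_b^\eps(s,t)\dfn\phi_\eps(s)+M_\eps(T-t)$ is a supersolution of \eqref{Eq:NonlinearBlackScholesEquation} dominating the data of $\Psi(\cdot;b)$; comparison gives $\Psi(s,t;b)\le\phi_\eps(s)+M_\eps(T-t)$, so $\overline\Psi(s,T)\le(s-K)^++\eps$, and letting $\eps\downarrow0$ yields $\overline\Psi(s,T)\le(s-K)^+$. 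The reverse inequality $\underline\Psi(s,T)\ge(s-K)^+$ follows from $\Psi(\cdot;b)\ge C$ and the continuity of $C$ up to $t=T$. Finally, the comparison principle for the linear Black--Scholes equation with these terminal and growth conditions gives $\overline\Psi\le\underline\Psi$; combined with $\underline\Psi\le\overline\Psi$ this forces equality, so the common function is continuous, equals by uniqueness the classical solution $\Psi(\cdot;0)$, and the coincidence of the half-relaxed limits with a continuous function is exactly the assertion that $\Psi(\cdot;b)\to\Psi(\cdot;0)$ locally uniformly as $b\to0$.

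The main obstacle is the uniform control in the terminal and far-field layers: one must construct barriers for the nonlinear equation whose constants do not depend on $b$ --- achieved above via the boundedness of $s^2\phi_\eps''$ together with $S(0)=0$ --- and one needs a clean comparison principle for the nonlinear PDE \eqref{Eq:NonlinearBlackScholesEquation} on the unbounded domain within the class of linearly growing functions, which is precisely the ingredient underlying the uniqueness asserted in Proposition~\ref{prop:bs_conv_price} and its sources. Everything else --- the locally uniform convergence $F_b\to F_0$, Barles--Perthame stability, and comparison for the linear limit after a logarithmic change of variable --- is routine once these two ingredients are in place.
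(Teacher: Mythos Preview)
Your proposal is correct and follows essentially the same Barles--Perthame half-relaxed limits argument as the paper: define the upper and lower relaxed limits, show they are respectively a viscosity sub- and supersolution of the linear Black--Scholes equation, and conclude by comparison. The only cosmetic difference is that you handle the terminal layer by constructing explicit smooth convex barriers $\phi_\eps$, whereas the paper verifies the terminal condition directly in the viscosity sense as part of the sub/supersolution check; both routes feed into the same comparison step.
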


Next, we identify the limit of $\Psi(;b)$ as $b\uparrow\infty$. Here, we are guided by the intuition that, thought of as a function of the stock volatility, the Black-Scholes price for a call option converges to the initial price as the volatility becomes large.  In fact, a similar phenomenon occurs here as $b\uparrow\infty$, as the following shows:

\begin{theorem}\label{T:Psi_a_large}

For fixed $s>0,0\leq t\leq T$ the map $b\mapsto\Psi(s,t;b)$ is increasing with
\begin{equation}\label{E:Psi_a_large}
\lim_{b\uparrow\infty} \Psi(s,t;b) = \begin{cases} (s-K)^+ & t = T\\ s & 0\leq t < T\end{cases}.
\end{equation}
\end{theorem}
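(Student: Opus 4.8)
The plan is to prove three things in turn: monotonicity in $b$, the upper bound $\Psi(s,t;b)\le s$, and the lower bound $\Psi(s,t;b)\to s$ for $t<T$; the case $t=T$ is immediate since $\Psi(s,T;b)=(s-K)^+$ for every $b$. I would use throughout the comparison principle for viscosity sub/supersolutions of \eqref{Eq:NonlinearBlackScholesEquation} on $(0,\infty)\times[0,T]$ (adapted to the terminal data and the linear growth at $s=\infty$), and the fact that, for the convex payoff $(s-K)^+$, the solution $\Psi(\cdot,\cdot;b)$ is convex in $s$; both belong to the Barles--Soner well-posedness theory behind Proposition \ref{prop:bs_conv_price}. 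I also record that $S$ is non-decreasing with $S(0)=0$, is invertible on $(-1,\infty)$, and satisfies $S(A)/A\to 1$ as $A\to\infty$, so $S^{-1}(y)=O(y)$.

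For monotonicity, fix $b_1<b_2$. Since $\Psi(\cdot,\cdot;b_1)$ is convex in $s$, any smooth function touching it from above has non-negative second $s$-derivative at the contact point, and there, $S$ being non-decreasing, the $b_2$-operator dominates the $b_1$-operator. Hence $\Psi(\cdot,\cdot;b_1)$ is a viscosity subsolution of the $b_2$-equation with the same terminal/growth data, and comparison gives $\Psi(\cdot,\cdot;b_1)\le\Psi(\cdot,\cdot;b_2)$. (Alternatively, Proposition \ref{prop:bs_conv_price} exhibits $\Psi(s,t;\sqrt{a\ell})$ as the $n\to\infty$ limit of the per-unit ask indifference price for $\ell\transpm_n^{-2}$ calls, which is non-decreasing in the quantity.) For the upper bound, $(s,t)\mapsto s$ is a classical solution of \eqref{Eq:NonlinearBlackScholesEquation} with terminal value $\ge (s-K)^+$, so comparison gives $\Psi(s,t;b)\le s$.

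The heart is the lower bound for $t<T$. I would fix $\eps\in(0,1)$ and take as candidate barrier $\underline w_b:=v^{\eps,\Sigma_b}-\gamma_b(T-t)$, where $v^{\eps,\Sigma}$ is the classical Black--Scholes price (zero rate, volatility $\Sigma$) of $((1-\eps)s-K)^+$ --- which is convex in $s$, lies in $[0,(1-\eps)s]$, and converges to $(1-\eps)s$ as $\Sigma\to\infty$ for $t<T$ --- and $\Sigma_b\to\infty$, $\gamma_b\to 0$ are to be chosen. Substituting the linear Black--Scholes equation into the nonlinear operator $F_b[u]:=u_t+\tfrac12\sigma^2 s^2 u_{ss}(1+S(b^2 s^2 u_{ss}))$ gives $F_b[v^{\eps,\Sigma}]=\tfrac12 s^2(v^{\eps,\Sigma})_{ss}\big(\sigma^2(1+S(b^2 s^2(v^{\eps,\Sigma})_{ss}))-\Sigma^2\big)$, which is $\ge 0$ where $b^2 s^2(v^{\eps,\Sigma})_{ss}\ge c_\ast(\Sigma):=S^{-1}(\Sigma^2/\sigma^2-1)$ and, on the complement, bounded below by $-\tfrac12 c_\ast(\Sigma)\Sigma^2 b^{-2}$. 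Since $c_\ast(\Sigma)=O(\Sigma^2)$, choosing e.g.\ $\Sigma_b=b^{1/8}$ makes this lower bound $O(b^{-3/2})=:-\gamma_b$; then $F_b[\underline w_b]=F_b[v^{\eps,\Sigma_b}]+\gamma_b\ge 0$ everywhere (subtracting $\gamma_b(T-t)$ leaves $u_{ss}$ unchanged and only adds $\gamma_b$ to $F_b$), while $\underline w_b(\cdot,T)\le(s-K)^+$ and $\underline w_b$ has strictly sublinear growth relative to $\Psi$. Comparison then yields $\Psi(s,t;b)\ge v^{\eps,\Sigma_b}(s,t)-\gamma_b(T-t)$; letting $b\to\infty$ gives $\liminf_b\Psi(s,t;b)\ge(1-\eps)s$, and $\eps\downarrow 0$ combined with $\Psi\le s$ gives $\Psi(s,t;b)\to s$.

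The main difficulty is exactly this construction: a Black--Scholes price with exploding volatility is the only natural barrier that rises toward $s$, yet it is a genuine subsolution of the nonlinear equation only where its gamma is not too small (it fails in the spatial tails and, off the modified strike, near $t=T$), so the violation must be estimated uniformly in $(s,t)$. What rescues the argument is that the violation occurs precisely on $\{b^2 s^2(v^{\eps,\Sigma_b})_{ss}<c_\ast(\Sigma_b)\}$, where $|F_b[v^{\eps,\Sigma_b}]|$ is therefore bounded by $\tfrac12 c_\ast(\Sigma_b)\Sigma_b^2 b^{-2}$ --- a bound independent of $(s,t)$ that can be driven to $0$ by letting $\Sigma_b$ grow slowly in $b$ --- and this vanishing defect is absorbed by the linear-in-time correction $\gamma_b(T-t)$. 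The remaining inputs (convexity of $\Psi(\cdot,\cdot;b)$ in $s$, and a comparison principle for the degenerate nonlinear equation on the half-strip compatible with the behaviour at $s=0$ and $s=\infty$) are borrowed from the Barles--Soner theory already used for Proposition \ref{prop:bs_conv_price}.
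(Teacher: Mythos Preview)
Your proposal is correct and follows essentially the same strategy as the paper: monotonicity in $b$, the upper bound $\Psi\le s$, and for $t<T$ a lower bound obtained by comparing $\Psi$ with a Black--Scholes price at an exploding volatility, corrected by a vanishing linear-in-time term. The differences are cosmetic. For monotonicity the paper uses the pre-limit approximation $z^{\eps}(y,s,t;b\sqrt{\eps})$ (your stated alternative) rather than a direct viscosity comparison. For the barrier, the paper keeps the unmodified payoff $(s-K)^+$, so the subsolution has the same linear growth at $s=\infty$ as $\Psi$ and comparison is invoked exactly as in Barles--Soner; your scaling to $((1-\eps)s-K)^+$ with a final $\eps\downarrow 0$ is therefore unnecessary, though harmless. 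For the defect estimate, the paper uses $S(A)\ge(1-\gamma)A$ for $A\ge A_\gamma$ and then minimizes the resulting quadratic form in $s^2\psi_{ss}$ to determine the correction $M$, taking $C=b^{1/4}$; your threshold argument via $c_\ast(\Sigma)=S^{-1}(\Sigma^2/\sigma^2-1)$ with $\Sigma_b=b^{1/8}$ is a slightly more direct route to the same vanishing bound.
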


\begin{remark} An inspection of the proof of Theorem \ref{T:VanishingTransactionsCost} below shows that $\Psi(s,t;b)$ is continuously increasing in $b$. Thus, if $q_n = \ell_n r_n$ where $\ell_n \rightarrow \ell\geq 0$ then the indifference prices converge to $\Psi(s,t;\sqrt{a\ell})$.
\end{remark}

With the above asymptotics for $p^\infty_a(\ell;s,t)$ in place, we now consider the optimal sale quantity problem in the $n^{th}$ market with transactions cost $\transpm_n$. In order to simplify the presentation, we assume that given $S_t = s$ the investor has the opportunity to sell call options at a price $\tilde{p}^n$ in the $n^{th}$ market. To finance this sale, the investor cashes out her initial position in the stock, receiving $ys(1-\transpm_n)$ for the sale of $y$ shares.   Then, with $x + ys(1-\transpm_n)$ in cash, she identifies the optimal number of options to sell by solving the problem
\begin{equation}\label{E:trans_opt_q}
\sup_{q> 0} u_a(x+ys(1-\transpm_n) + q\tilde{p}^n,0,q;s,t,\transpm_n).
\end{equation}

In the frictionless case, if $\tilde{p}^n$ is arbitrage free in
the $n^{th}$ market, then (see \cite{MR2212897}), an optimal
$\hat{q}_n$ exists and is unique. When considering transactions
costs, rather than identifying the arbitrage free prices in each
market, we use the small and large $\ell$ asymptotics for
$p^\infty_a(\ell;s,t)$ obtained in Theorems
\ref{T:VanishingTransactionsCost}, \ref{T:Psi_a_large} to identify
a maximal range of reasonable prices $\tilde{p}^n$ for which one
can sell the option. Indeed, from the above theorems
\begin{equation*}
\lim_{\ell\downarrow 0}p^\infty_a(\ell;s,t) = \Psi(s,t;0);\qquad \lim_{\ell\uparrow\infty}p^\infty_a(\ell;s,t) = s.
\end{equation*}
It is well known that $\Psi(s,t;0) < s$.  Furthermore, if one is
going to sell options, the effect of the transactions costs is
that the ask price should a) be at least as large as $\Psi(s,t;0)$
and b) be no higher than $p$ since no-one would buy at this
price\footnote{Technically: no one would buy at a price at or
above $p(1+\transpm_n)$ because it would then be preferable to buy
the stock and not trade.  For this to hold as
$\transpm_n\downarrow 0$, we require $\tilde{p}^n\leq p$. Our
results are valid for $\tilde{p}^n < p$.}.  Thus, the only range
of reasonable prices to sell at is $(\Psi(s,t;0),s)$.  With this
motivation we have:

\begin{theorem}\label{T:opt_quant_trans}
Let $\tilde{p}^n \in (\Psi(s,t;0), s)$ for each $n$ with $\tilde{p}^n\rightarrow \tilde{p}$ where $\tilde{p}\in (\Psi(s,t;0),s)$. Let $\transpm_n\rightarrow 0$.  For each $n$ there exists a maximizer $\hat{q}_n >  0$ to \eqref{E:trans_opt_q}. Additionally, for any sequence $\cbra{\hat{q}_n}_{n\in\mathbb{N}}$ of maximizers:
\begin{equation}\label{E:trans_rates}
\liminf_{n\uparrow\infty} \frac{\hat{q}_n}{r_n} > 0;\qquad \limsup_{n\uparrow\infty} \frac{\hat{q}_n}{r_n} < \infty.
\end{equation}
Thus, up to subsequences, $\hat{q}_n/r_n \rightarrow \ell$ and hence for any sequence $\cbra{y_n}_{n\in\mathbb{N}}$ such that $\lambda_n^3|y_n|\rightarrow 0$:
\begin{equation*}
\lim_{n\uparrow\infty} p_a(y_n,\hat{q}_n;s,t,\transpm_n) = p^\infty_a(\ell;s,t) = \Psi(s,t;\sqrt{a\ell}).
\end{equation*}
\end{theorem}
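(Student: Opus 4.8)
The plan is to recast \eqref{E:trans_opt_q} as a one-dimensional maximization and then run, essentially verbatim, the comparison arguments behind Theorems \ref{T:opt_pos_lb} and \ref{T:opt_pos_ub} (equivalently, to check that $q\mapsto p_a(0,q;s,t,\transpm_n)$ fits the abstract framework of Appendix \ref{SS:A_technical}); we take $0\le t<T$, since for $t=T$ the claim is deterministic and the statement requires the obvious modification. Since $U_a$ has constant absolute risk aversion, $u_a(x',\cdot,\cdot;s,t,\transpm)=e^{-ax'}u_a(0,\cdot,\cdot;s,t,\transpm)$, and combining this with the balance equation \eqref{E:trans_px} at $(x,y)=(0,0)$ gives, for $q>0$,
\begin{equation*}
u_a\left(x+ys(1-\transpm_n)+q\tilde{p}^n,0,q;s,t,\transpm_n\right)=e^{-a(x+ys(1-\transpm_n))}\,u_a(0,0;s,t,\transpm_n)\,e^{-aq(\tilde{p}^n-p_a(0,q;s,t,\transpm_n))}.
\end{equation*}
As $u_a(0,0;s,t,\transpm_n)\in(-\infty,0)$, problem \eqref{E:trans_opt_q} is equivalent to maximizing $g_n(q)\dfn q\,(\tilde{p}^n-p_a(0,q;s,t,\transpm_n))$ over $q>0$.

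The second step is to record the structural properties of $q\mapsto p_a(0,q;s,t,\transpm_n)$ needed below: it is continuous, nonnegative (the ask price of the nonnegative claim $B$ is nonnegative), nondecreasing in $q$, and $\lim_{q\uparrow\infty}p_a(0,q;s,t,\transpm_n)=\pi^{\mathrm{sup}}_n$, the cost of superreplicating the call under transaction cost $\transpm_n$, which lies in $[s,s(1+\transpm_n)]$ (buying and holding one share superreplicates). These are either in the cited transaction-cost literature or follow from elementary dominance arguments. Consequently $g_n$ is continuous on $(0,\infty)$ with $g_n(0+)=0$, and since $\tilde{p}^n<s\le\pi^{\mathrm{sup}}_n$ and $p_a(0,q;s,t,\transpm_n)\uparrow\pi^{\mathrm{sup}}_n$ as $q\uparrow\infty$, we have $g_n(q)\to-\infty$ as $q\uparrow\infty$; hence a maximizer $\hat q_n$ exists. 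Also, by Theorem \ref{T:VanishingTransactionsCost}, $\lim_{\ell\downarrow0}p^\infty_a(\ell;s,t)=\Psi(s,t;0)<\tilde{p}$, so fix $\ell_0>0$ with $p^\infty_a(\ell_0;s,t)<\tilde{p}$; then by Proposition \ref{prop:bs_conv_price} (with $y_n=0$), $p_a(0,\ell_0 r_n;s,t,\transpm_n)\to p^\infty_a(\ell_0;s,t)$, so $g_n(\ell_0 r_n)\ge c\,\ell_0 r_n$ for some $c>0$ and all $n$ large. In particular $g_n(\hat q_n)\ge c\,\ell_0 r_n>0=g_n(0+)$, so $\hat q_n>0$.

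For the lower bound, combine $g_n(\hat q_n)\ge c\,\ell_0 r_n$ with the crude estimate $g_n(\hat q_n)=\hat q_n(\tilde{p}^n-p_a(0,\hat q_n;s,t,\transpm_n))\le\hat q_n\,\tilde{p}^n\le s\,\hat q_n$ (nonnegativity of $p_a$), giving $\hat q_n/r_n\ge c\ell_0/s$ for $n$ large, hence $\liminf_n\hat q_n/r_n>0$. For the upper bound, suppose $\hat q_n/r_n\to\infty$ along a subsequence. By monotonicity, $p_a(0,\hat q_n;s,t,\transpm_n)\ge p_a(0,Kr_n;s,t,\transpm_n)$ for every fixed $K>0$ and $n$ large, and the right side converges to $\Psi(s,t;\sqrt{aK})$ by Proposition \ref{prop:bs_conv_price}; so $\liminf_n p_a(0,\hat q_n;s,t,\transpm_n)\ge\Psi(s,t;\sqrt{aK})$ for all $K$, and letting $K\uparrow\infty$ and using Theorem \ref{T:Psi_a_large} gives $\liminf_n p_a(0,\hat q_n;s,t,\transpm_n)\ge s$. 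But $g_n(\hat q_n)>0$ forces $p_a(0,\hat q_n;s,t,\transpm_n)<\tilde{p}^n\to\tilde{p}<s$, a contradiction; thus $\limsup_n\hat q_n/r_n<\infty$. Finally, along any subsequence $\hat q_n/r_n\to\ell\in(0,\infty)$, and feeding $q_n=\hat q_n$ together with the given $\{y_n\}$, $\transpm_n^3|y_n|\to0$, into Proposition \ref{prop:bs_conv_price} in the form noted in the remark after Theorem \ref{T:Psi_a_large} (allowing $q_n=\ell_n r_n$ with $\ell_n\to\ell$), yields $p_a(y_n,\hat q_n;s,t,\transpm_n)\to\Psi(s,t;\sqrt{a\ell})$.

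The main obstacle is the second step: in the proportional-transaction-cost model the duality theory of Section \ref{S:model} is not available in the same strength, so the structural properties of the ask-side indifference price used above --- continuity and monotonicity in $q$, existence of optimizers, and the boundary behavior $p_a(0,q;s,t,\transpm_n)\uparrow\pi^{\mathrm{sup}}_n\in[s,s(1+\transpm_n)]$ as $q\uparrow\infty$ --- must be established carefully from superreplication/dominance arguments or from the cited literature. A secondary point is that Proposition \ref{prop:bs_conv_price} yields convergence only along the single scaling $r_n=\transpm_n^{-2}$, so, exactly as in the abstract proofs of Appendix \ref{SS:A_technical}, every comparison of $\hat q_n$ against a reference position must be made at multiples of $r_n$.
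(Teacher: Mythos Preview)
Your plan is correct and follows essentially the same route as the paper: reduce \eqref{E:trans_opt_q} to a one–dimensional problem and then run the comparison machinery of Appendix \ref{SS:A_technical}. The only substantive difference is the choice of the function playing the role of $p^n(q)$. You take $p^n(q)=p_a(0,q;s,t,\transpm_n)$ and then need, as you correctly flag, monotonicity and continuity in $q$ together with the limiting behavior $p_a(0,q)\uparrow\pi^{\mathrm{sup}}_n\ge s$. The paper instead sets $p^n(q)=-z^{1/(qa)}(0,s,t;\transpm_n)$, the Barles--Soner certainty equivalent from \eqref{E:z_zf_def}. Since $q\,z^{1/(qa),f}(0,s,t;\transpm_n)=\tfrac{1}{a}\log(-a\,u_a(0,0;s,t,\transpm_n))$ does not depend on $q$, the two objective functions differ by a constant, so the optimizers and the asymptotics of $\hat q_n/r_n$ are identical; but the structural properties you need are available for $z^\eps$ directly from Lemma \ref{L:z_zf_prop}: $\eps\mapsto z^\eps$ is decreasing (H\"older) and continuous (convexity of the value function in $1/\eps$), and the explicit bounds \eqref{E:estimates} supply bullet two of Assumption \ref{A:one_p_pos}. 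This bypasses entirely any duality or super-replication input; in particular your appeal to $\pi^{\mathrm{sup}}_n\ge s$ (a nontrivial transaction-cost result) is not needed, since the role of that inequality is played by Theorem \ref{T:Psi_a_large} together with the sufficient condition after Assumption \ref{A:one_p_pos}.

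Your direct comparison arguments for \eqref{E:trans_rates} are fine and in fact slightly more transparent than invoking Proposition \ref{prop: one_p_pos} wholesale; the paper simply verifies Assumption \ref{A:one_p_pos} and cites that proposition. If you want to close the gap without switching to $z^\eps$, note that the very same H\"older/convexity argument (applied to the infimum defining $-u_a(0,0,q)$ over the convex cone $\mathcal{A}_t$) shows $q\mapsto\log(-u_a(0,0,q))$ is convex, hence $q\mapsto q\,p_a(0,q)$ is convex with value $0$ at $q=0$, which already gives monotonicity and continuity of $p_a(0,q)$ on $(0,\infty)$.
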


\bigskip

\appendix

\section{Technical Supporting Results}\label{SS:A_technical}

The following propositions provide the main technical tools to prove the optimal position taking results in both the frictionless and transactions cost cases. To seamlessly integrate with the transaction costs case, results are separated into long and short positions.

\subsection{Long Positions}

Assume:
\begin{assumption}\label{A:one_p_pos}
$\cbra{p^n}$ is a family of functions defined on $(0,\infty)$ such that
\begin{itemize}
\item For each $n$, $p^n$ is non-increasing and continuous.
\item There exists a $\gamma>0$ such that $\limsup_{n\uparrow\infty}\sup_{q\leq \gamma} q|p^n(q)| = C(\gamma) < \infty$.
\item There exists $r_n\rightarrow\infty$ and $\delta>0$ such that for $0<\ell < \delta$ we have $\lim_{n\uparrow\infty}p^n(\ell r_n) = p^\infty(\ell)$.
\item With $p^\infty_+(0)\dfn \lim_{\ell\downarrow 0}p^\infty(\ell)$ and $p^n(\infty)\dfn \lim_{q\uparrow\infty} p^n(q)$ we have $\limsup_{n\uparrow\infty} p^n(\infty) < p^\infty_{+}(0)$.
\end{itemize}
\end{assumption}

To find the maximal upper bound of convergence, set
\begin{equation}\label{E:delta_plus}
\begin{split}
\delta^+ &\dfn\sup\cbra{k>0\ |\ \lim_{n\uparrow\infty}p^n(\ell r_n) = p^\infty(\ell),\ \forall\ 0\leq \ell < k}\in [\delta,\infty].
\end{split}
\end{equation}
Note that for $0 < \ell < \delta^+$ we have $p^n(\infty)\leq p^n(\ell r_n)$ so that $\limsup_{n\uparrow\infty} p^n(\infty) \leq p^\infty(\ell) \leq p^\infty_{+}(0)$. As such, a sufficient condition for bullet point four in Assumption \ref{A:one_p_pos} to hold is that $p^\infty(\ell) < p^\infty_{+}(0)$ for some $0<\ell < \delta^+$.

Under Assumption \ref{A:one_p_pos} we have the following result for positive position sizes:

\begin{proposition}\label{prop: one_p_pos}

Let Assumption \ref{A:one_p_pos} hold. Let $\tilde{p}^n\rightarrow \tilde{p}$.
\begin{itemize}
\item If $\limsup_{n\uparrow\infty}p^n(\infty) < \tilde{p} < p^\infty_+(0)$ then for $n$ large enough the optimization problem
\begin{equation}\label{E:pos_opt_prob}
\inf_{q>0}\left(q\tilde{p}^n - qp^n(q)\right),
\end{equation}
admits a minimizer $\hat{q}_n>0$.
\item If $\limsup_{n\uparrow\infty} p^n(\infty) < \tilde{p} < p^\infty_{+}(0)$ then for any sequence of minimizers $\cbra{\hat{q}_n}$:
\begin{equation}\label{E:pos_opt_lim_lb}
0 < \liminf_{n\uparrow\infty}\frac{\hat{q}_n}{r_n}.
\end{equation}
\item If additionally $\lim_{\ell\uparrow\delta^+} p^\infty(\ell) < \tilde{p} < p^\infty_+(0)$ then for any sequence $\cbra{\hat{q}_n}$ of minimizers:
\begin{equation}\label{E:pos_opt_lim_ub}
\limsup_{n\uparrow\infty}\frac{\hat{q}_n}{r_n} < \delta^+.
\end{equation}
\end{itemize}
\end{proposition}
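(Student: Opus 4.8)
The plan is to work throughout with $\phi_n(q)\dfn q\tilde{p}^n-qp^n(q)$ on $(0,\infty)$, so that \eqref{E:pos_opt_prob} reads $\inf_{q>0}\phi_n(q)$, and to base all three claims on one observation. Since $p^\infty$ is non-increasing and $p^\infty(\ell)\to p^\infty_+(0)>\tilde{p}$ as $\ell\downarrow 0$, one can fix $\ell\in(0,\delta)$ with $p^\infty(\ell)>\tilde{p}$, and then the third bullet of Assumption \ref{A:one_p_pos} gives
\begin{equation*}
\frac{1}{r_n}\,\phi_n(\ell r_n)=\ell\tilde{p}^n-\ell p^n(\ell r_n)\longrightarrow \ell\pare{\tilde{p}-p^\infty(\ell)}<0,
\end{equation*}
i.e.\ $\phi_n(\ell r_n)\to-\infty$ at the linear rate $r_n$. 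This one very negative ``test value'' is the lever behind existence, the lower bound, and (in a dual form) the upper bound.

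For the existence claim, I would first record that $\phi_n$ is continuous on $(0,\infty)$; that it is bounded on $(0,\gamma]$ by $M_n\dfn\gamma|\tilde{p}^n|+\sup_{q\le\gamma}q|p^n(q)|$, with $\limsup_n M_n<\infty$ by the second bullet of Assumption \ref{A:one_p_pos}; and that it is coercive at $+\infty$, since $\limsup_n p^n(\infty)<\tilde{p}$ makes $\tilde{p}^n-p^n(q)$ bounded below by a positive constant for all large $q$ once $n$ is large, whence $\phi_n(q)\to+\infty$. Because $\phi_n(\ell r_n)\le-\tfrac12\ell r_n\,|\tilde{p}-p^\infty(\ell)|$ for $n$ large and $r_n\to\infty$, this value eventually drops below $-M_n$, hence below $\inf_{(0,\gamma]}\phi_n$; as $\ell r_n\ge\gamma$ for $n$ large, the infimum over $(0,\infty)$ then coincides with the infimum over $[\gamma,\infty)$, which is attained by continuity and coercivity at some $\hat{q}_n\ge\gamma$.

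For the lower bound I would argue by contradiction. If $\hat{q}_n/r_n\to 0$ along a subsequence, dividing $\phi_n(\hat{q}_n)\le\phi_n(\ell r_n)$ by $r_n$ and using $(\hat{q}_n/r_n)\tilde{p}^n\to 0$ on the left together with the displayed limit on the right yields $\hat{q}_np^n(\hat{q}_n)\ge\kappa r_n\to\infty$ for some $\kappa>0$ and all large $n$. Splitting on $\hat{q}_n$: if $\hat{q}_n\le\gamma$ then $\hat{q}_np^n(\hat{q}_n)\le\sup_{q\le\gamma}q|p^n(q)|\le C(\gamma)+1$, impossible for $n$ large; if $\hat{q}_n>\gamma$ then $0<p^n(\hat{q}_n)\le p^n(\gamma)\le(C(\gamma)+1)/\gamma$ by monotonicity, forcing $\hat{q}_n\ge\kappa\gamma r_n/(C(\gamma)+1)$ and contradicting $\hat{q}_n/r_n\to 0$. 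Hence $\liminf_n\hat{q}_n/r_n>0$. The subtle point here, and the main obstacle, is that a minimizer may drift to $+\infty$ strictly more slowly than $r_n$, so no uniform bound on $\hat{q}_np^n(\hat{q}_n)$ rules this out directly; the fix is to combine monotonicity of $p^n$ with the only near-origin control available --- boundedness of the \emph{product} $q|p^n(q)|$, since $p^n$ itself may blow up as $q\downarrow 0$ --- to bound $p^n(\hat{q}_n)$ by $p^n(\gamma)=O(1/\gamma)$ whenever $\hat{q}_n>\gamma$.

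For the upper bound, now also assuming $\lim_{\ell\uparrow\delta^+}p^\infty(\ell)<\tilde{p}$, I would suppose $\limsup_n\hat{q}_n/r_n\ge\delta^+$ and pass to a subsequence with $\hat{q}_n/r_n\to L\in[\delta^+,\infty]$. For each $\ell\in(0,\delta^+)$ we have $\hat{q}_n>\ell r_n$ eventually, so monotonicity and the convergence $p^n(\ell r_n)\to p^\infty(\ell)$ (valid since $\ell<\delta^+$) give $\limsup_n p^n(\hat{q}_n)\le p^\infty(\ell)$; taking the infimum over $\ell\in(0,\delta^+)$ and using that $p^\infty$ is non-increasing, $\limsup_n p^n(\hat{q}_n)\le\lim_{\ell\uparrow\delta^+}p^\infty(\ell)<\tilde{p}$. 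On the other hand, fixing $\ell^*\in(0,\delta^+)$ with $p^\infty(\ell^*)<\tilde{p}$ (available by the new hypothesis), optimality $\phi_n(\hat{q}_n)\le\phi_n(\ell^* r_n)\to-\infty$ forces $\phi_n(\hat{q}_n)<0$ for $n$ large, hence $p^n(\hat{q}_n)>\tilde{p}^n$, so $\liminf_n p^n(\hat{q}_n)\ge\tilde{p}$ --- a contradiction. Therefore $\limsup_n\hat{q}_n/r_n<\delta^+$.
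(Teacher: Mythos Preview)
Your overall strategy matches the paper's: the existence and lower-bound arguments are correct and are essentially reorganizations of the paper's proof, pivoting on the same test point $\ell r_n$ with $p^\infty(\ell)>\tilde{p}$ and the same near-origin control $\sup_{q\le\gamma}q|p^n(q)|\le C(\gamma)$.

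There is, however, a genuine slip in your upper-bound argument. You fix $\ell^*\in(0,\delta^+)$ with $p^\infty(\ell^*)<\tilde{p}$ and then claim $\phi_n(\ell^* r_n)\to-\infty$. This is the wrong sign: with that choice of $\ell^*$,
\[
\frac{1}{r_n}\phi_n(\ell^* r_n)=\ell^*\bigl(\tilde{p}^n-p^n(\ell^* r_n)\bigr)\longrightarrow \ell^*\bigl(\tilde{p}-p^\infty(\ell^*)\bigr)>0,
\]
so $\phi_n(\ell^* r_n)\to+\infty$, and the displayed inequality $\phi_n(\hat{q}_n)\le\phi_n(\ell^* r_n)$ yields nothing. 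The fix is immediate: reuse the \emph{small} $\ell$ from your setup, the one with $p^\infty(\ell)>\tilde{p}$, for which you already established $\phi_n(\ell r_n)\to-\infty$. Optimality then gives $\phi_n(\hat{q}_n)\le\phi_n(\ell r_n)<0$ eventually, hence $p^n(\hat{q}_n)>\tilde{p}^n$ and $\liminf_n p^n(\hat{q}_n)\ge\tilde{p}$, and your contradiction with $\limsup_n p^n(\hat{q}_n)\le\lim_{\ell\uparrow\delta^+}p^\infty(\ell)<\tilde{p}$ goes through.

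Once corrected, your upper-bound route --- sandwiching $p^n(\hat{q}_n)$ between $\tilde{p}$ and $\lim_{\ell\uparrow\delta^+}p^\infty(\ell)$ --- is a clean alternative to the paper's, which instead divides the optimality inequality $K\tilde{p}^n-Kp^n(K)\ge\hat{q}_n(\tilde{p}^n-p^n(cr_n))$ by $\hat{q}_n$ and passes to the limit. Both rely on the same monotonicity and convergence ingredients; yours is slightly more direct.
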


\begin{proof}[Proof of Proposition \ref{prop: one_p_pos}]

First consider the minimization problem in \eqref{E:pos_opt_prob}.  Since $\tilde{p}^n\rightarrow \tilde{p}$ there is some $\eps>0$ and $N_{\eps}$ so that $n\geq N_{\eps}$ implies $\limsup_{n\uparrow\infty} p^n(\infty) +\eps < \tilde{p}^n < p^\infty_{+}(0) - \eps$.  Next, choose $\ell>0$ small enough so that $\tilde{p}^n < p^\infty(\ell) - \eps/2$. By enlarging $N_\eps$ we know for $n\geq N_\eps$ that $p^n(\infty)\leq \limsup_{n\uparrow\infty}p^n(\infty)+\eps/2$ and $p^\infty(\ell) < p^n(\ell r_n) + \eps/4$ and hence
\begin{equation}\label{E:temp_px_ellp}
p^n(\infty) + \eps/2 \leq \tilde{p}^n \leq p^n(\ell r_n) - \eps/4.
\end{equation}
For a fixed $n$, note that
$\lim_{q\uparrow\infty}(\tilde{p}^n-p^n(q)) =
\tilde{p}^n-p^n(\infty)\geq \eps/2$. Thus, if
$\cbra{\hat{q}^m_n}_{m\in\N}$ is a minimizing sequence for
\eqref{E:pos_opt_prob}, then $\cbra{\hat{q}^m_n}$ is bounded and
hence has an accumulation point $\hat{q}_n$.  We now show that
$\hat{q}_n\neq 0$, which combined with the continuity of $qp^n(q)$
proves $\hat{q}_n>0$ is a minimizer. To see that $\hat{q}_n\neq 0$  we use a contradiction argument. Note
 that with the $\gamma$ from Assumption \ref{A:one_p_pos}:
\begin{equation*}
\liminf_{q\downarrow 0} (q\tilde{p}^n - qp^n(q)) = -\limsup_{q\downarrow 0} qp^n(q) \geq -\sup_{q\leq\gamma}q|p^n(q)|.
\end{equation*}
For the given $\eps$, by enlarging $N_\eps$ we may assume that for $n\geq N_\eps$
\begin{equation*}
\liminf_{q\downarrow 0} (q\tilde{p}^n - qp^n(q)) \geq -\limsup_{n\uparrow\infty}\sup_{q\leq\gamma}q|p^n(q)| - \eps = -C(\gamma) - \eps.
\end{equation*}
But, for the $\ell$ from \eqref{E:temp_px_ellp}:
\begin{equation}\label{E:temp_lb_ellp}
\ell r_n\tilde{p}^n - \ell r_np^n(\ell r_n) \leq -\ell r_n\eps/4.
\end{equation}
Combining the last two displays we get that for the chosen $n$, we have
\begin{equation*}
-\ell r_n \eps/4 \geq -C(\gamma)-\eps.
\end{equation*}
However, by potentially enlarging $N_\eps$, and since $r_{n}\rightarrow\infty$, we can always arrange things so that $-\ell r_n \eps/4 <
-C(\gamma)-\eps$. This leads to a contradiction, proving that $\hat{q}_n\neq 0$.

Now, let $\cbra{\hat{q}_n}$ be a sequence of minimizers. We first claim that $\liminf_{n\uparrow\infty}\hat{q}_n>0$. Indeed, assume there is a subsequence (still labeled $n$) so that $\lim_{n\uparrow\infty} \hat{q}_n = 0$. We then have, using the $\gamma$ of Assumption \ref{A:one_p_pos} that
\begin{equation*}
\begin{split}
\liminf_{n\uparrow\infty}(\hat{q}_n\tilde{p}^n -
\hat{q}_np^n(\hat{q}_n)) & =
-\limsup_{n\uparrow\infty}\hat{q}_np^n(\hat{q}_n) \geq
-\limsup_{n\uparrow\infty} \sup_{q\leq \gamma} q|p^n(q)| =
-C(\gamma).
\end{split}
\end{equation*}
But, this directly violates the minimality of $\hat{q}_n$ in view
of \eqref{E:temp_lb_ellp}. As such, there is some $K>0$ so that
$\hat{q}_n\geq K$ for $n$ large enough.

Now, assume that $\liminf_{n\uparrow\infty} \hat{q}_n/r_n  = 0$ and take a subsequence such that $\lim_{n\uparrow\infty}\hat{q}_n/r_n = 0$. For all $0 < c < \delta^+$ we see
\begin{equation}\label{E:the_lb_val}
\begin{split}
\tilde{p}^n - p^n(cr_n) \geq \frac{\hat{q}_n}{c r_n}\left(\tilde{p}^n - p^n(\hat{q}_n)\right).
\end{split}
\end{equation}
As $n\uparrow\infty$ we know that $\tilde{p}^n-p^n(cr_n)\rightarrow \tilde{p} - p^\infty(c)$, $\hat{q}_n/(cr_n)\rightarrow 0$ and $\tilde{p}^n\rightarrow \tilde{p}$.  Recall that $\liminf_{n\uparrow\infty}\hat{q}_n\geq K$ and the $\gamma$ from Assumption \ref{A:one_p_pos}. Note that if $K>\gamma$ then
\begin{equation*}
p^n(K) \leq p^n(\gamma) = \frac{1}{\gamma}\gamma p^n(\gamma) \leq \frac{1}{\gamma}\sup_{q\leq\gamma} q|p^n(q)|,
\end{equation*}
whereas if $K\leq \gamma$ then
\begin{equation*}
p^n(K) = \frac{1}{K}Kp^n(K) \leq \frac{1}{K}\sup_{q\leq\gamma} q|p^n(q)|.
\end{equation*}
Putting these together gives
\begin{equation*}
\limsup_{n\uparrow\infty} p^n(\hat{q}_n) \leq \frac{1}{\gamma\wedge K}\limsup_{n\uparrow\infty}\sup_{q\leq\gamma}q|p^n(q)| = \frac{C(\gamma)}{\gamma\wedge K}.
\end{equation*}
Thus, taking $n\uparrow\infty$ in \eqref{E:the_lb_val} gives $\tilde{p}\geq p^\infty(c)$.  Taking $c\downarrow 0$ gives $\tilde{p}\geq p^\infty_+(0)$ a contradiction.  Therefore, \eqref{E:pos_opt_lim_lb} holds.

Next, assume that $\limsup_{n\uparrow\infty}\hat{q}_n/r_n \geq \delta+$ and take a subsequence so that $\lim_{n\uparrow\infty} \hat{q}_n/r_n = k \geq \delta^+$. For each $c < \delta^+$ we have $\hat{q}_n/r_n \geq c$ and hence for any $K>0$, $\hat{q}_n\geq K$ for $n$ large enough. Thus, we have
\begin{equation}\label{E:the_ub_val}
K\tilde{p}^n - Kp^n(K) \geq \hat{q}_n\left(\tilde{p}^n-p^n(\hat{q}_n)\right) \geq \hat{q}_n\left(\tilde{p}^n-p^n(cr_n)\right).
\end{equation}
Clearly, $K\tilde{p}^n/\hat{q}_n\rightarrow 0$.  Additionally, for any $0 < c' < \delta^+$:
\begin{equation*}
\liminf_{n\uparrow\infty}\frac{p^n(K)}{\hat{q}_n} \geq \liminf_{n\uparrow\infty} \frac{p^n(c'r_n)}{\hat{q}_n} = 0.
\end{equation*}
Thus, dividing by $\hat{q}_n$ in \eqref{E:the_ub_val} and taking $n\uparrow\infty$ yields $0\geq \tilde{p}-p^\infty(c)$. Taking $c\uparrow\delta^+$ gives that $\tilde{p}\leq \lim_{c\uparrow\delta^+}p^\infty(c)$, which is a contradiction. Therefore, \eqref{E:pos_opt_lim_ub} holds.

\end{proof}

\subsection{Short Positions}

We just state the result for $q<0$ as the proof is the exact same. First, we assume:

\begin{assumption}\label{A:one_p_neg}
$\cbra{p^n}$ is a family of functions defined on $(-\infty,0)$ such that
\begin{itemize}
\item For each $n$, $p^n$ is non-increasing and continuous.
\item There exists a $\gamma < 0$ such that $\limsup_{n\uparrow\infty}\sup_{q\geq \gamma } q|p^n(q)| = C(\gamma) < \infty$.
\item There exists $r_n\rightarrow\infty$ and $\delta>0$ such that for $-\delta < \ell < 0$ we have $\lim_{n\uparrow\infty}p^n(\ell r_n) = p^\infty(\ell)$.
\item With $p^\infty_{-}(0)\dfn \lim_{\ell\uparrow 0} p^\infty(\ell)$ and $p^n(-\infty) \dfn \lim_{q\downarrow -\infty} p^n(q)$ we have $p^\infty_{-}(0) < \liminf_{n\uparrow\infty} p^n(-\infty)$.
\end{itemize}
\end{assumption}

To find the minimal lower bound of convergence, set
\begin{equation}\label{E:delta_minus}
\begin{split}
\delta_{-} &\dfn\inf\cbra{k<0\ |\ \lim_{n\uparrow\infty}p^n(\ell r_n) = p^\infty(\ell),\ \forall 0\geq \ell > k}\leq \in [-\infty,\delta_{-}].\\
\end{split}
\end{equation}

As before, we have for any $\delta_{-} < \ell < 0$ that $p^\infty_{-}(0) \leq  p^\infty(\ell) \leq \liminf_{n\uparrow\infty} p^n(-\infty)$ so that a sufficient condition for bullet point four above to hold is that $p^\infty_{-}(0) < p^\infty(\ell)$ for some $\delta_{-} < \ell < 0$. The main result now reads:

\begin{proposition}\label{prop: one_p_neg}
Let Assumption \ref{A:one_p_neg} hold. Let $\tilde{p}^n\rightarrow \tilde{p}$.
\begin{itemize}
\item If $p^\infty_{-}(0) < \tilde{p} < \liminf_{n\uparrow\infty}p^n(-\infty)$ then for $n$ large enough the optimization problem
\begin{equation}\label{E:neg_opt_prob}
\inf_{q<0}\left(q\tilde{p}^n - qp^n(q)\right),
\end{equation}
admits a minimizer $\hat{q}_n<0$.
\item If $p^\infty_{-}(0) < \tilde{p} < \liminf_{n\uparrow\infty} p^n(-\infty)$ then for any sequence of minimizers $\cbra{\hat{q}_n}$:
\begin{equation}\label{E:neg_opt_lim_lb}
0 < \liminf_{n\uparrow\infty}\frac{-\hat{q}_n}{r_n}.
\end{equation}
\item If additionally $p^\infty_{-}(0) < \tilde{p} < \lim_{\ell\downarrow\delta_{-}} p^\infty(\ell)$ then for any sequence $\cbra{\hat{q}_n}$ of minimizers:
\begin{equation}\label{E:neg_opt_lim_ub}
\limsup_{n\uparrow\infty}\frac{-\hat{q}_n}{r_n} < -\delta_{-}.
\end{equation}
\end{itemize}
\end{proposition}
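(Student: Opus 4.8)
The plan is to deduce Proposition~\ref{prop: one_p_neg} directly from Proposition~\ref{prop: one_p_pos} via the reflection $q\mapsto -q$. Given a family $\{p^n\}$ on $(-\infty,0)$ satisfying Assumption~\ref{A:one_p_neg}, define for $q>0$
\begin{equation*}
\hat{p}^n(q) \dfn -p^n(-q), \qquad \hat{p}^{\,n}_\star \dfn -\tilde{p}^n .
\end{equation*}
Then for every $q<0$ one has $q\tilde{p}^n - qp^n(q) = (-q)\hat{p}^{\,n}_\star - (-q)\hat{p}^n(-q)$, so that the minimization \eqref{E:neg_opt_prob} over $q<0$ is \emph{exactly} the minimization \eqref{E:pos_opt_prob} for the family $\{\hat{p}^n\}$ with target prices $\{\hat{p}^{\,n}_\star\}$ and variable $\hat{q}\dfn -q>0$; minimizers correspond through $\hat{q}_n = -\hat{q}^{\,\mathrm{pos}}_n$.

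First I would verify that $\{\hat{p}^n\}$ satisfies Assumption~\ref{A:one_p_pos} whenever $\{p^n\}$ satisfies Assumption~\ref{A:one_p_neg}. Monotonicity and continuity are immediate, since $q\mapsto -p^n(-q)$ is non-increasing and continuous on $(0,\infty)$; the boundedness near $0$ transfers with $\hat{\gamma}\dfn -\gamma>0$; the convergence bullet holds with the same $r_n$ and $\delta$ and with $\hat{p}^\infty(\ell)\dfn -p^\infty(-\ell)$ for $0<\ell<\delta$. Consequently the quantity $\delta^+$ of \eqref{E:delta_plus} built from $\{\hat{p}^n\}$ equals $-\delta_-$ with $\delta_-$ as in \eqref{E:delta_minus}, and $\lim_{\ell\uparrow\delta^+}\hat{p}^\infty(\ell) = -\lim_{\ell\downarrow\delta_-}p^\infty(\ell)$. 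Finally $\hat{p}^\infty_+(0) = -p^\infty_-(0)$ and $\hat{p}^n(\infty) = -p^n(-\infty)$, so the fourth bullet of Assumption~\ref{A:one_p_neg}, namely $p^\infty_-(0) < \liminf_n p^n(-\infty)$, is precisely $\limsup_n \hat{p}^n(\infty) < \hat{p}^\infty_+(0)$, the fourth bullet of Assumption~\ref{A:one_p_pos}.

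It then remains to translate the three conclusions, recalling that in Proposition~\ref{prop: one_p_pos} existence comes from coercivity of $q\mapsto q\tilde p^n-qp^n(q)$ at $q=\infty$ together with the near-$0$ bound that forbids $\hat q_n\to 0$, while the bounds $0<\liminf \hat q_n/r_n$ and $\limsup \hat q_n/r_n<\delta^+$ follow by comparing the value at $\hat q_n$ against the values at $\ell r_n$ with $\ell$ small and at $cr_n$ with $c\uparrow\delta^+$. The hypothesis $p^\infty_-(0)<\tilde{p}<\liminf_n p^n(-\infty)$ reads $\limsup_n\hat{p}^n(\infty) < -\tilde{p} < \hat{p}^\infty_+(0)$, so the first two bullets of Proposition~\ref{prop: one_p_pos} give, for $n$ large, a minimizer $\hat{q}^{\,\mathrm{pos}}_n>0$ of \eqref{E:pos_opt_prob} with $0<\liminf_n \hat{q}^{\,\mathrm{pos}}_n/r_n$; undoing the reflection yields the minimizer $\hat{q}_n = -\hat{q}^{\,\mathrm{pos}}_n<0$ of \eqref{E:neg_opt_prob} and the bound \eqref{E:neg_opt_lim_lb}. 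Adjoining the hypothesis $\tilde{p}<\lim_{\ell\downarrow\delta_-}p^\infty(\ell)$ becomes $-\tilde{p}<\lim_{\ell\uparrow\delta^+}\hat{p}^\infty(\ell)$, and the third bullet of Proposition~\ref{prop: one_p_pos} gives $\limsup_n \hat{q}^{\,\mathrm{pos}}_n/r_n<\delta^+=-\delta_-$, i.e.\ \eqref{E:neg_opt_lim_ub}. There is no genuine obstacle here---the argument is pure bookkeeping---and the only point requiring care is consistent sign-tracking through the four bullets of the assumptions and through the definition of $\delta_-$; equivalently, one may simply repeat the proof of Proposition~\ref{prop: one_p_pos} verbatim with all the relevant inequalities reversed, which is the route the statement alludes to.
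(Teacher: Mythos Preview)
Your reduction via the reflection $q\mapsto -q$ is correct and is precisely a formalization of the paper's one-line proof, which simply asserts that ``the proof is the exact same'' as that of Proposition~\ref{prop: one_p_pos}. One minor slip to fix: in your last paragraph the translated hypothesis should read $\lim_{\ell\uparrow\delta^+}\hat{p}^\infty(\ell) < -\tilde{p}$ (not the reverse), which is exactly the inequality required by the third bullet of Proposition~\ref{prop: one_p_pos}.
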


\subsection{Long and Short Positions}

We now combine the long and short results of the previous section into one result which will be used to prove the frictionless results of Section \ref{S:consequences}.  Here, we assume

\begin{assumption}\label{A:one_p}
$\cbra{p^n}_{n\in\mathbb{N}}$ is a sequence of functions on $\reals$ such that
\begin{itemize}
\item For each $n$, $p^n$ is non-increasing and continuous.
\item There exists a $\gamma > 0$ such that $\limsup_{n\uparrow\infty}\sup_{|q|\leq\gamma} q|p^n(q)| = C(\gamma) < \infty$.
\item There exists $r_n\rightarrow \infty$ and $\delta > 0$ such that for $|\ell|<\delta$ we have $p^n(\ell r_n) \rightarrow p^\infty(\ell)$.
\item $\lim_{\ell\rightarrow 0}p^\infty(\ell) = p^\infty(0)$.
\end{itemize}
\end{assumption}

\begin{proposition}\label{prop: one_p}
Let Assumption \ref{A:one_p} hold and define $\delta^+,\delta_-$ as in \eqref{E:delta_plus} and \eqref{E:delta_minus}.  Let $\tilde{p}^n\rightarrow \tilde{p}$.
\begin{itemize}
\item Assume that $\limsup_{n\uparrow\infty} p^n(\infty) < p^{\infty}(0)$. If $\limsup_{n\uparrow\infty}p^n(\infty) < \tilde{p} < p^\infty(0)$ then for $n$ large enough any minimizer to the optimization problem $\inf_{q\in\reals}\left(q\tilde{p} - qp^n(q)\right)$ is positive. Furthermore, for any sequence of minimizers $\cbra{\hat{q}_n}_{n\in\mathbb{N}}$ we have that $0 < \liminf_{n\uparrow\infty} \hat{q}_n/r_n$. If additionally $\lim_{\ell\uparrow\delta^+} p^\infty(\ell) < \tilde{p} < p^\infty(0)$ then for any sequence of minimizers $\cbra{\hat{q}_n}_{n\in\mathbb{N}}$ we have that $\limsup_{n\uparrow\infty}\hat{q}_n/r_n < \delta^+$.
\item Assume that $p^\infty(0) < \liminf_{n\uparrow\infty} p^n(-\infty)$. If $p^\infty(0) < \tilde{p} < \liminf_{n\uparrow\infty} p^n(-\infty)$ then for $n$ large enough, any minimizer to the optimization problem $\inf_{q\in\reals}\left(q\tilde{p}^n - qp^n(q)\right)$ is negative. Furthermore, for any sequence of minimizers $\cbra{\hat{q}_n}_{n\in\mathbb{N}}$ we have that $0 < \liminf_{n\uparrow\infty} -\hat{q}_n/r_n$. If additionally $p^\infty(0) < \tilde{p} < \lim_{\ell\downarrow\delta_{-}} p^\infty(\ell)$ then for any sequence of minimizers $\cbra{\hat{q}_n}$ we have that $\limsup_{n\uparrow\infty} -\hat{q}_n/r_n < -\delta_{-}$.
\end{itemize}
\end{proposition}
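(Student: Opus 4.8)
The plan is to reduce Proposition \ref{prop: one_p} to the one--sided results, Propositions \ref{prop: one_p_pos} and \ref{prop: one_p_neg}, by first verifying that the stated hypotheses are exactly what is needed to apply those propositions on the appropriate half--line, and then showing that under the stated price restrictions every global minimizer of $g^n(q)\dfn q\tilde p^n - qp^n(q) = q(\tilde p^n - p^n(q))$ over $\reals$ in fact lies in that half--line. I will carry out the first bullet; the second is entirely symmetric.

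First I would check that Assumption \ref{A:one_p} together with the standing hypothesis $\limsup_{n\uparrow\infty}p^n(\infty) < p^\infty(0)$ implies Assumption \ref{A:one_p_pos} for the restriction of $\{p^n\}$ to $(0,\infty)$. Monotonicity and continuity are inherited; the bound $\limsup_{n\uparrow\infty}\sup_{0<q\leq\gamma}q|p^n(q)| \leq \limsup_{n\uparrow\infty}\sup_{|q|\leq\gamma}q|p^n(q)| = C(\gamma) < \infty$ is immediate from the second bullet of Assumption \ref{A:one_p}; and the convergence $p^n(\ell r_n)\to p^\infty(\ell)$ for $0<\ell<\delta$ is the third bullet of Assumption \ref{A:one_p}. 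For the fourth bullet of Assumption \ref{A:one_p_pos}, the fourth bullet of Assumption \ref{A:one_p} gives $p^\infty_+(0)\dfn\lim_{\ell\downarrow 0}p^\infty(\ell) = p^\infty(0)$, so $\limsup_{n\uparrow\infty}p^n(\infty) < p^\infty_+(0)$ is precisely the standing hypothesis. Thus Proposition \ref{prop: one_p_pos} applies to $\inf_{q>0}g^n(q)$: for $n$ large there is a minimizer $\hat q_n>0$ with $0<\liminf_{n\uparrow\infty}\hat q_n/r_n$, and, if in addition $\lim_{\ell\uparrow\delta^+}p^\infty(\ell) < \tilde p < p^\infty(0)$, also $\limsup_{n\uparrow\infty}\hat q_n/r_n < \delta^+$. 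I would also record from that proof (cf. the estimate $\ell r_n\tilde p^n - \ell r_np^n(\ell r_n)\leq -\ell r_n\eps/4$) that the minimal value satisfies $g^n(\hat q_n) < 0$ for $n$ large.

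The remaining step is to exclude nonpositive global minimizers. Since $\tilde p < p^\infty(0) = \lim_{\ell\downarrow 0}p^\infty(\ell)$ and $p^\infty$ is non--increasing, I can fix $\ell_0>0$ with $p^\infty(\ell_0)>\tilde p$; then for $n$ large $p^n(\ell_0 r_n)>\tilde p^n$, and by monotonicity of $p^n$ we get $p^n(q)\geq p^n(0)\geq p^n(\ell_0 r_n) > \tilde p^n$ for every $q\leq 0$. Hence $g^n(q) = q(\tilde p^n - p^n(q))\geq 0$ for all $q\leq 0$, with $g^n(0)=0$, so $\inf_{q\leq 0}g^n = 0$. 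Combined with $\inf_{q>0}g^n = g^n(\hat q_n) < 0$, this shows the global infimum over $\reals$ equals $g^n(\hat q_n)$, is attained, and every global minimizer is strictly positive, hence coincides with a minimizer of $\inf_{q>0}g^n$. The claimed sign, existence, and $\liminf/\limsup$ bounds then follow from Proposition \ref{prop: one_p_pos}. For the second bullet, one argues identically with Proposition \ref{prop: one_p_neg} in place of Proposition \ref{prop: one_p_pos}, using $p^\infty_-(0) = \lim_{\ell\uparrow 0}p^\infty(\ell) = p^\infty(0)$ and monotonicity to force $p^n(q) < \tilde p^n$ for all $q\geq 0$, whence $g^n(q)\geq 0$ on $[0,\infty)$.

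I do not expect a real obstacle here: all the analytic content lives in Propositions \ref{prop: one_p_pos} and \ref{prop: one_p_neg}, and what remains is bookkeeping of which bullets imply which, plus the sign chase for $g^n$ on the ``wrong'' half--line. The only mild points of care are that the definitions of $\delta^+$ and $\delta_-$ in \eqref{E:delta_plus}--\eqref{E:delta_minus} involve only $\ell$ of one sign, so they are unambiguous whether $p^n$ is viewed as defined on a half--line or on all of $\reals$, and that the third bullet of Assumption \ref{A:one_p} is stated for $|\ell|<\delta$ and so includes $\ell=0$, giving $p^n(0)\to p^\infty(0)$, which is what makes the comparison $p^n(q)\geq p^n(0)$ effective.
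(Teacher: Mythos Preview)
Your proposal is correct and follows essentially the same approach as the paper's proof: reduce to Proposition \ref{prop: one_p_pos} (resp.\ \ref{prop: one_p_neg}) on the appropriate half--line, and rule out minimizers on the ``wrong'' half--line by showing $g^n(q)\geq 0$ there while $g^n$ attains a strictly negative value on the ``right'' half--line via \eqref{E:temp_lb_ellp}. Your write--up is in fact a bit more careful than the paper's, since you explicitly verify that Assumption \ref{A:one_p} restricted to $(0,\infty)$ yields Assumption \ref{A:one_p_pos} (in particular, that $p^\infty_+(0)=p^\infty(0)$ via the continuity bullet), whereas the paper takes this for granted.
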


\begin{proof}[Proof of Proposition \ref{prop: one_p}]

We will prove the results for $\limsup_{n\uparrow\infty} p^n(\infty) < \tilde{p} < p^\infty(0)$ and  $\lim_{\ell\uparrow\infty}p^\infty(\ell) < \tilde{p} < p^\infty(0)$ respectively; the proof for the other case is the exact same. First, since $p^n(0)$ is well defined for each $n$, we have $0\times \tilde{p}^n - 0\times p^n(0) = 0$. Additionally, for $\eps >0$ so that $\limsup_{n\uparrow\infty} p^n(\infty) +\eps < \tilde{p} < p^\infty(0)-\eps$ we have for $q<0$ and $n$ large enough that
\begin{equation*}
\begin{split}
q\tilde{p} - qp^n(q) \geq q\tilde{p} - qp^n(0) \geq -q\eps/2 > 0,
\end{split}
\end{equation*}
But, from \eqref{E:temp_lb_ellp} we see there is some $\ell>0$ so that $\ell r_n\tilde{p}^n - \ell r_np^n(\ell r_n) < 0$. Thus it suffices to minimize over $q>0$ and hence Proposition \ref{prop: one_p_pos} yields a minimizer to the problem over $(0,\infty)$, as well as the asymptotic behavior $\hat{q}_n/r_n$ of minimizers $\hat{q}_n$ given above, finishing the result.

\end{proof}

\section{Proofs for Section \ref{SS:opt_pos}}\label{S:pf_opt_pos}

The proofs of Theorems \ref{T:opt_pos_lb} and \ref{T:opt_pos_ub} are
based on a more general result that we proved in Appendix
\ref{SS:A_technical}. Hence,  as a precursor to the proofs of
Theorem \ref{T:opt_pos_lb} and \ref{T:opt_pos_ub} we first show
that the functions $p^n(q)\dfn p^n_{a_n}(q)$ satisfy Assumption
\ref{A:one_p} above.

\begin{lemma}\label{L:one_p_holds}

Let Assumptions \ref{A:claim}, \ref{A:no_arb_n}, \ref{A:GE_Opt1}
and \ref{A:px_range} hold.  Then, $p^n(q)\dfn p^n_{a_n}(q)$
satisfies Assumption \ref{A:one_p}.

\end{lemma}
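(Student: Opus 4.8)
The plan is to verify the four bullet points of Assumption \ref{A:one_p} for $p^n := p^n_{a_n}$ one at a time. The third and fourth — existence of $r_n\rightarrow\infty$ and $\delta>0$ with $p^n_{a_n}(\ell r_n)\rightarrow p^\infty(\ell)$ for $|\ell|<\delta$, and continuity of $p^\infty$ at $0$ — are literally the content of Assumption \ref{A:GE_Opt1}, so nothing needs to be done there. For the first bullet, I would read monotonicity of $q\mapsto p^n_{a_n}(q)$ and concavity of $q\mapsto q\,p^n_{a_n}(q)$ off the variational representation \eqref{E:total_price}, exactly as recorded in Section \ref{SS:discussion} ($q\,p^n_{a_n}(q)$ is a pointwise infimum of affine functions of $q$, hence concave). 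Since the value functions in \eqref{E:val_funct_claim} are finite under Assumptions \ref{A:claim} and \ref{A:no_arb_n}, this concave map is real-valued on all of $\reals$, hence continuous, so $p^n_{a_n}(q)=(q\,p^n_{a_n}(q))/q$ is continuous on $\reals\setminus\{0\}$; continuity at $q=0$ is the identity $\lim_{q\to 0}p^n_{a_n}(q)=\espalt{\qprob^n_0}{B}=d_n=p^n_{a_n}(0)$ from \cite{MR1891730} (cf. the footnote to \eqref{E:d_dn_def}).

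The only substantive step is the second bullet, $\limsup_{n}\sup_{|q|\le\gamma}|q\,p^n_{a_n}(q)|<\infty$ for some $\gamma>0$; the naive bound $\underline{B}_n\le p^n_{a_n}(q)\le\bar{B}_n$ is of no use because $\underline{B}_n,\bar{B}_n$ need not be bounded in $n$. I would fix an arbitrary $\gamma>0$ and set $g_n(q):=q\,p^n_{a_n}(q)$. Taking $\qprob^n=\qprob^n_0$ in \eqref{E:total_price} gives $g_n(q)\le q\,d_n$, so $\sup_{|q|\le\gamma}g_n(q)\le\gamma|d_n|$; and since $g_n$ is concave with $g_n(0)=0$, its minimum on $[-\gamma,\gamma]$ is attained at an endpoint, i.e. $\inf_{|q|\le\gamma}g_n(q)=\min(g_n(-\gamma),g_n(\gamma))$. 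It therefore suffices to bound the three scalar sequences $\{d_n\}$, $\{p^n_{a_n}(\gamma)\}$ and $\{p^n_{a_n}(-\gamma)\}$ uniformly in $n$. The first converges to $d$ by Assumption \ref{A:GE_Opt1}. For the other two, fix $\ell_0\in(0,\delta)$; since $r_n\rightarrow\infty$, for $n$ large one has $-\ell_0 r_n<-\gamma<0<\gamma<\ell_0 r_n$, so monotonicity of $p^n_{a_n}$ gives $p^n_{a_n}(\ell_0 r_n)\le p^n_{a_n}(\gamma)\le p^n_{a_n}(0)=d_n$ and $d_n=p^n_{a_n}(0)\le p^n_{a_n}(-\gamma)\le p^n_{a_n}(-\ell_0 r_n)$; letting $n\rightarrow\infty$ and using Assumption \ref{A:GE_Opt1} confines both sequences to bounded ranges (the first lies asymptotically between $p^\infty(\ell_0)$ and $d$, the second between $d$ and $p^\infty(-\ell_0)$). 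Hence $\limsup_n\sup_{|q|\le\gamma}|g_n(q)|\le\gamma\,\limsup_n\max(|d_n|,|p^n_{a_n}(\gamma)|,|p^n_{a_n}(-\gamma)|)<\infty$, which is the second bullet.

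The main — and essentially only — obstacle is precisely this uniform local boundedness, since one cannot fall back on the arbitrage-free range; the device that makes it work is the endpoint reduction for the concave map $g_n$ combined with the sandwiching of the fixed positions $q=\pm\gamma$ between the convergent scaled prices $p^n_{a_n}(\pm\ell_0 r_n)$ supplied by Assumption \ref{A:GE_Opt1}. (Assumption \ref{A:px_range} plays no role in verifying Assumption \ref{A:one_p} itself; it is needed only in the subsequent deduction of Theorems \ref{T:opt_pos_lb} and \ref{T:opt_pos_ub}.)
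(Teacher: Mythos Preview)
Your proposal is correct and follows essentially the same approach as the paper's proof: bullets three and four are immediate from Assumption \ref{A:GE_Opt1}, bullet one comes from the concavity/monotonicity recorded in Section \ref{SS:discussion} together with continuity at $0$ via $p^n_{a_n}(0)=d_n$, and for bullet two both you and the paper sandwich $p^n_{a_n}(q)$ on $[-\gamma,\gamma]$ between $d_n$ and the convergent scaled prices $p^n_{a_n}(\pm\ell_0 r_n)$ using monotonicity. Your detour through the concavity of $g_n(q)=q\,p^n_{a_n}(q)$ to reduce to the endpoints $q=\pm\gamma$ is correct but unnecessary---the paper simply observes that monotonicity already gives $p^n_{a_n}(\ell_0 r_n)\le p^n_{a_n}(q)\le p^n_{a_n}(-\ell_0 r_n)$ for \emph{all} $|q|\le\gamma$ once $\ell_0 r_n\ge\gamma$, which bounds $|p^n_{a_n}(q)|$ uniformly without invoking $g_n$; your observation that Assumption \ref{A:px_range} is not used here is also accurate.
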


\begin{proof}[Proof of Lemma \ref{L:one_p_holds}]

As shown in Section \ref{SS:discussion}, $p^n_{a_n}(q)$ is
decreasing in $q$ and the map $q\mapsto qp^n_{a_n}(q)$ is concave
and well defined, finite, for all $q\in\reals$.  As such,
$p^n_{a_n}(q)$ is continuous on $(-\infty,0)$ and $(0,\infty)$
respectively.  But, it is well known that continuity at $0$
follows as well and in fact $\lim_{q\rightarrow 0} p^n_{a_n}(q) =
\espalt{\qprob^n_0}{B} = p^n_{a_n}(0) = d_n$.  Thus, bullet point
one in Assumption \ref{A:one_p} holds. Regarding bullet point two,
let $\gamma > 0$.  If $0<q\leq \gamma$ then for any $0<\ell <
\delta^+$ and $n$ sufficiently large so that $r_n \geq
\ell/\gamma$:
\begin{equation*}
\begin{split}
p^n_{a_n}(q) &\leq p^n_{a_n}(0) = d_n =
\espalt{\qprob^n_0}{B};\qquad p^n_{a_n}(q) \geq p^n_{a_n}(\ell
r_n).
\end{split}
\end{equation*}
If $-\gamma\leq q < 0$ then for any $\delta_{-} < \ell' < 0$ and $n$ so that $r_n\geq -\ell'/\gamma$:
\begin{equation*}
\begin{split}
p^n_{a_n}(q) &\geq p^n_{a_n}(0) = d_n
\espalt{\qprob^n_0}{B};\qquad p^n_{a_n}(q) \leq p^n_{a_n}(\ell'
r_n).
\end{split}
\end{equation*}
As such:
\begin{equation*}
\limsup_{n\uparrow\infty}\sup_{|q|\leq\gamma}q|p^n_{a_n}(q)| \leq \gamma\max\cbra{|d|, |p^\infty(\ell)|, |p^\infty(\ell')|} = C(\gamma),
\end{equation*}
and bullet point two holds.  Bullet points three and four are Assumption \ref{A:GE_Opt1}, finishing the result.
\end{proof}

\begin{proof}[Proof of Theorem \ref{T:opt_pos_lb}]

For $\tilde{p}^{n}\in I^n$, the optimal position
$\hat{q}_n(\tilde{p}^{n})$ is the unique solution of the problem
\eqref{E:opt_q_n}.  Using the explicit formula for $U_{a_n}$ in
\eqref{E:util_funct} and $p^n_{a_n}$ in \eqref{E:indiff_px}, this
optimization problem is equivalent to finding
\begin{equation}\label{E:optimization}
\hat{q}_n(\tilde{p}^{n})\in\text{argmin}_{q\in\mathbb{R}}\left(q\tilde{p}^{n}-q
p^{n}_{a_n}(q)\right).
\end{equation}
The results of the theorem will follow from Proposition \ref{prop:
one_p} once the requisite hypotheses are met where $p^n(q) =
p^n_{a_n}(q)$.  By Lemma \ref{L:one_p_holds}, Assumption
\ref{A:one_p} holds. Now, let $\tilde{p}^n\in I^n$,
$\tilde{p}^n\rightarrow \tilde{p}$ where $\tilde{p}$ and
$\tilde{p} < d$. Since $p^n(\infty)\leq \tilde{p}^n$ and
$d=p^\infty(0)$ we have
\begin{equation*}
\limsup_{n\uparrow\infty} p^n(\infty) =
\limsup_{n\uparrow\infty}\underline{B}_n \leq
\lim_{n\uparrow\infty} \tilde{p}^n = \tilde{p} < d = p^\infty(0).
\end{equation*}
Thus, the conclusions of the theorem follow from Proposition
\ref{prop: one_p}. Similarly let $\tilde{p}^n\in I^n$,
$\tilde{p}^n\rightarrow \tilde{p}$ where $\tilde{p}$ and
$\tilde{p} > d$. Since $p^n(-\infty)\geq \tilde{p}^n$ and
$d=p^\infty(0)$ we have
\begin{equation*}
\liminf_{n\uparrow\infty} p^n(-\infty) =
\liminf_{n\uparrow\infty}\bar{B}_n \geq \lim_{n\uparrow\infty}
\tilde{p}^n = \tilde{p} > d = p^\infty(0).
\end{equation*}
Thus, the conclusions of the theorem follow from Proposition \ref{prop: one_p} as well, finishing the result.

\end{proof}

\begin{proof}[Proof of Theorem \ref{T:opt_pos_ub}]

As in the proof of Theorem \ref{T:opt_pos_lb}, it is enough to
show that requisite hypotheses of Proposition \ref{prop: one_p}
are met where $p^n(q) = p^n_{a_n}(q)$ and the optimal position
$\hat{q}_n(\tilde{p}^{n})$ is given in \eqref{E:optimization}.
Again by Lemma \ref{L:one_p_holds}, we have that Assumption
\ref{A:one_p} holds. Now, let $\tilde{p}^n\in I^n$,
$\tilde{p}^n\rightarrow \tilde{p}$ where $\tilde{p}$ and
$p^\infty(\delta^+) < \tilde{p} < d$. Since
$p^n(\infty)\leq \tilde{p}^n$ and $d=p^\infty(0)$ we have
\begin{equation*}
\limsup_{n\uparrow\infty} p^n(\infty) =
\limsup_{n\uparrow\infty}\underline{B}_n \leq
\lim_{n\uparrow\infty} \tilde{p}^n = \tilde{p} < d = p^\infty(0).
\end{equation*}
Thus, the conclusions of the theorem follow from Proposition
\ref{prop: one_p}. Similarly let $\tilde{p}^n\in I^n$,
$\tilde{p}^n\rightarrow \tilde{p}$ where $\tilde{p}$ and
$p^\infty(\delta_{-}) > \tilde{p} > d$.
Since $p^n(-\infty)\geq \tilde{p}^n$ and $d=p^\infty(0)$ we have
\begin{equation*}
\liminf_{n\uparrow\infty} p^n(-\infty) =
\liminf_{n\uparrow\infty}\bar{B}_n \geq \lim_{n\uparrow\infty}
\tilde{p}^n = \tilde{p} > d = p^\infty(0).
\end{equation*}
Thus, the conclusions of the theorem follow from Proposition \ref{prop: one_p} as well, finishing the result.
\end{proof}

\begin{proof}[Proof of Corollary \ref{C:opt_pos}]
Let, for example, $\tilde{p}^n\rightarrow \tilde{p}\in(p^\infty(\delta^+),d)$
so that
\begin{equation*}
0 < \underline{\ell} = \liminf_{n\uparrow\infty}\frac{\hat{q}_n(\tilde{p})}{r_n} \leq \limsup_{n\uparrow\infty} \frac{\hat{q}_n(\tilde{p}^n)}{r_n} = \bar{\ell} < \delta^+.
\end{equation*}
Write $\hat{q}_n$ for $\hat{q}_n(\tilde{p}^n)$ and assume for some subsequence (still labeled $n$) that $\hat{q}_n/r_n\rightarrow \ell \in [\underline{\ell},\bar{\ell}]$. Let $\tau\in [\underline{\ell},\bar{\ell}]$. By the optimality of $\hat{q}_n$
\begin{equation*}
\hat{q}_n\tilde{p}^n - \hat{q}_n p^n_{a_n}(\hat{q}_n) \leq \tau r_n \tilde{p}^n - \tau r_n p^n_{a_n}(\tau r_n).
\end{equation*}
Dividing by $r_n$, letting $n\uparrow\infty$ and using Assumption \ref{A:GE_Opt1} with \eqref{E:continuity} one obtains
\begin{equation*}
\ell \tilde{p} - \ell p^\infty(\ell) \leq \tau\tilde{p} - \tau p^{\infty}(\tau).
\end{equation*}
Since this works for all $\tau\in [\underline{\ell},\bar{\ell}]$, we get that
\begin{equation*}
\ell \tilde{p} - \ell p^\infty(\ell) \leq \inf_{\tau\in[\underline{\ell},\bar{\ell}]}\left(\tau\tilde{p} - \tau p^{\infty}(\tau)\right).
\end{equation*}
Hence, we see that the only possible limit points for $\hat{q}_n/r_n$ are the minimizers of $\ell\tilde{p} - \ell p^\infty(\ell)$ over $[\underline{\ell},\bar{\ell}]$.  But, under the assumption of strict concavity for $\ell p^\infty(\ell)$ any minimizer is unique and hence the result follows.
\end{proof}


\begin{proof}[Proof of Theorem \ref{T:opt_pos_gen_u}]

We start be proving the first bullet, i.e., that we show that maximizers exist to the optimal purchase quantity problem in \eqref{E:opt_q_gen}. To do so we use the following basic result (see \cite[Proposition 2.47]{MR2169807}): if $U\in\Ua$ then with $\underline{\alpha}_U$, $\bar{\alpha}_U$ of \eqref{E:u_bdd_ra} it
holds for $U_a$ from \eqref{E:util_funct} with $a_n\equiv a$ that
\begin{equation*}
\begin{split}
U(x) &= F(U_{\underline{a}_U}(x));\qquad F(t) = U(U_{\underline{a}_U}^{-1}(t)) = U\left(-\frac{1}{\underline{a}_U}\log\left(-\underline{a}_U t\right)\right);\\
U_{\bar{a}_U}(x) &= \hat{F}(U(x));\qquad \hat{F}(t) = U_{\bar{a}_U}(U^{-1}(t)) = -\frac{1}{\bar{a}_U}e^{-\bar{a}_U U^{-1}(t)},
\end{split}
\end{equation*}
and where $F,\hat{F}$ are concave and increasing. Thus, by Jensen's inequality, for any set of random variables $\mathcal{Z}$:
\begin{equation*}
\begin{split}
\hat{F}^{-1}\left(\sup_{Z\in\mathcal{Z}}\espalt{}{U_{\bar{a}_U}(Z)}\right)\leq \sup_{Z\in\mathcal{Z}}\espalt{}{U(Z)} &\leq F\left(\sup_{Z\in\mathcal{Z}}\espalt{}{U_{\underline{a}_U}(Z)}\right),
\end{split}
\end{equation*}
where $\hat{F}^{-1}(s) = U\left(-(1/\bar{a}_U)\log\left(-\bar{a}_U s\right)\right)$ is strictly increasing. Therefore,
\begin{equation*}
U\left(-\frac{1}{\bar{a}_U}\log\left(-\bar{a}_U u^n_{\bar{a}_U}(x-q\tilde{p}^n,q)\right)\right) \leq u^n_U(x-q\tilde{p}^n,q) \leq U\left(-\frac{1}{\underline{a}_U}\log\left(-\underline{a}_U u^n_{\underline{a}_U}(x-q\tilde{p}^n,q)\right)\right).
\end{equation*}
Since for any $a>0$, $u^n_a(x-\tilde{p}^nq,q) =
e^{-a(x-\tilde{p}^nq)}u^n_a(0,q)$ , we obtain from
\eqref{E:indiff_px} that
\begin{equation}\label{E:main_gen_u_ineq}
\begin{split}
&U\left(-\frac{1}{\bar{a}_U}\log(-\bar{a}_U u^n_{\bar{a}_u}(0)) + x-\tilde{p}^nq + qp^n_{\bar{a}_U}(q)\right) \leq u^n_U(x-\tilde{p}^n q,q)\\
&\qquad\qquad \leq
U\left(-\frac{1}{\underline{a}_U}\log(-\underline{a}_U
u^n_{\underline{a}_U}(0)) + x-\tilde{p}^nq +
qp^n_{\underline{a}_U}(q)\right).
\end{split}
\end{equation}
Now, let $\tilde{p}^n \in I^n = (\underline{B}_n,\bar{B}_n)$. As
$\lim_{q\uparrow\infty} p^n_{\underline{a}_U}(q) =
\underline{B}_n$, $\lim_{q\downarrow-\infty}
p^n_{\underline{a}_U}(q) = \bar{B}_n$ we have
\begin{equation*}
\lim_{|q|\uparrow\infty} q(p^n_{\underline{a}_U}(q) - \tilde{p}^n) = -\infty,
\end{equation*}
and hence from the second inequality in \eqref{E:main_gen_u_ineq} and  $\lim_{x\downarrow -\infty} U(x) = -\infty$ (which follows from \eqref{E:u_exp_decay}) we obtain
\begin{equation*}
\lim_{q\uparrow\infty} u^n_U(x-\tilde{p}^nq,q) = -\infty, \lim_{q\downarrow -\infty}u^n_U(x-\tilde{p}^nq,q) = -\infty.
\end{equation*}
As $U(x-\tilde{p}^nq - |q|\|B\|_{\Lb^{\infty}}) \leq
u^n_U(x-\tilde{p}^nq,q) \leq 0$, any maximizing sequence
$\cbra{q^n_m}_{m\in\mathbb{N}}$ must be bounded and has an
accumulation point $\hat{q}_n$. Now, $u^n_U(x-\tilde{p}^nq,q)$
admits the variational representation (see \cite{MR2489605})
\begin{equation}\label{E:main_u_n_def}
u_U^n(x-\tilde{p}^nq,q) =
\inf_{\qprob^n\in\tM^n,y>0}\left(y(x-\tilde{p}^nq) +
yq\espalt{\qprob^n}{B} +
\espalt{}{V\left(y\frac{d\qprob^n}{d\prob}\bigg|_{\F_T}\right)}\right),
\end{equation}
where
\begin{equation}\label{E:dual_funct}
V(y)\dfn \sup_{x\in\reals}\left(U(x)-xy\right).
\end{equation}
Thus, we see that $q\mapsto u^n_U(x-\tilde{p}^nq,q)$ is concave, hence continuous on $\reals$ and $\hat{q}_n$ is indeed a maximizer.

We next show for $p^\infty(\delta^+) <
\tilde{p} < d$ and $I^n\ni\tilde{p}^n\rightarrow \tilde{p}$ that
\eqref{E:opt_pos_pos_limits} holds (the corresponding proof for
negative positions in \eqref{E:opt_pos_neg_limits} is omitted as
it is the exact same).  We first claim that for $n$ large enough,
any maximizer $\hat{q}_n$ is positive.  Indeed, since
$d_n\rightarrow d$ where  $d_n = \espalt{\qprob^n_0}{B} = p^n_a(0)
$ (for any $a>0$) and $\tilde{p} < d$, $\tilde{p}^n\rightarrow
\tilde{p}$ we can find $n$ large enough so that $\tilde{p}^n <
d_n$.  Thus, for $q<0$ we have (since $p^n_a(q)$ is decreasing in
$q$ for any $a>0$) that
\begin{equation*}
q p^n_{\underline{a}_U}(q) - q\tilde{p}^n \leq q\left(d_n - \tilde{p}^n\right) \leq 0.
\end{equation*}
In view of \eqref{E:main_gen_u_ineq} this implies for $q\leq 0$ that
\begin{equation}\label{E:T1}
u^n_U(x-\tilde{p}^nq,q) \leq U\left(-\frac{1}{\underline{a}_U}\log\left(-\underline{a}_U u^n_{\underline{a}_U}(0)\right) + x\right).
\end{equation}
Now, let $\ell > 0$ be so that $\ell\bar{a}_U/a < \delta^+$. At $q = \ell r_n$ we have
\begin{equation*}
p^n_{\bar{a}_U}(\ell r_n)  - \tilde{p}^n = p^n_a(\bar{a}_U \ell/a r_n) - \tilde{p}^n \rightarrow p^\infty(\bar{a}_U\ell /a ) - \tilde{p}.
\end{equation*}
Since $\tilde{p} < p^\infty(0)$ and $p^\infty$ is continuous at
$0$ we can find an $\ell$ small enough so the above quantity is
strictly positive for $n$ large.  Thus, from
\eqref{E:main_gen_u_ineq} we see that
\begin{equation*}
u^n_U(x-\tilde{p}^n\ell r_n,\ell r_n) \geq U\left(-\frac{1}{\bar{a}_U}\log(-\bar{a}_U u^n_{\bar{a}_u}(0)) + x-\tilde{p}^n\ell r_n +\ell r_n p^n_{\bar{a}_U}(\ell r_n)\right).
\end{equation*}
As $n\uparrow\infty$ the right hand side above converges to $0$
whereas the right hand side of \eqref{E:T1}, in view of Assumption
\ref{A:asympt_no_arb} is bounded above by $U(C+x)< 0$ for some
constant $C$.  Thus, for large enough $n$, no maximizer can be
non-positive.

Now, let $\cbra{\hat{q}_n}_{n\in\mathbb{N}}$ be a sequence of
(positive) maximizers. We prove the lower bound in
\eqref{E:opt_pos_pos_limits} by contradiction; i.e. assume
$\liminf_{n\uparrow\infty} \hat{q}_n/r_n  = 0$ and take a sequence
(still labeled $n$) where $\hat{q}_n/r_n \rightarrow  0$.  Let $0
< \ell < \delta^+\bar{a}_U/a$ and assume $\hat{q}_n/r_n \leq
\ell$.  Since $\hat{q}_n$ was an optimizer, we obtain from
\eqref{E:main_gen_u_ineq} that
\begin{equation*}
-\frac{1}{\bar{a}_U}\log(-\bar{a}_U u^n_{\bar{a}_U}(0)) +
x-\tilde{p}^n\ell r_n + \ell r_n p^n_{\bar{a}_U}(\ell r_n) \leq
-\frac{1}{\underline{a}_U}\log(-\underline{a}_U
u^n_{\underline{a}_U}(0)) + x-\tilde{p}^n\hat{q}_n +
\hat{q}_np^n_{\underline{a}_U}(\hat{q}_n).
\end{equation*}
Since $\ell r_n>0$
\begin{equation*}
-\frac{1}{\ell r_n \bar{a}_U}\log(-\bar{a}_U u^n_{\bar{a}_U}(0)) +
\frac{x}{\ell r_n} -\tilde{p}^n + p^n_{\bar{a}_U}(\ell r_n) \leq
-\frac{1}{\ell r_n \underline{a}_U}\log(-\underline{a}_U
u^n_{\underline{a}_U}(0)) + \frac{x}{\ell r_n} +
\frac{\hat{q}_n}{\ell
r_n}\left(p^n_{\underline{a}_U}(\hat{q}_n)-\tilde{p}^n\right).
\end{equation*}
For any $a>0$, $-(1/a) \leq u^n_a(0) =
-(1/a)e^{-\relent{\qprob^n_0}{\prob}}$. Additionally, from
\eqref{E:total_price} it holds for any $a,b>0$ that $p^n_a(q) =
p^n_b(aq/b)$. Thus by Assumptions \ref{A:GE_Opt1} and
\ref{A:asympt_no_arb}
\begin{equation*}
p^\infty\left(\frac{\bar{a}_U\ell}{a}\right) - \tilde{p} \leq \liminf_{n\uparrow\infty} \frac{\hat{q}_n}{\ell r_n}\left(p^n_{\underline{a}_u}(\hat{q}_n)-\tilde{p}^n\right) =0,
\end{equation*}
where the last equality follows since $\hat{q}_n/r_n \rightarrow
0$, $\tilde{p}^n\rightarrow \tilde{p}$ and $|p^n_{\underline{a}_U}(q)|\leq
\|B\|_{\Lb^{\infty}}$.  Taking $\ell\downarrow 0$ gives $\tilde{p}\geq
p^\infty(0)$ a contradiction.  Therefore,
$\liminf_{n\uparrow\infty}\hat{q}_n/r_n >0$.

To obtain the upper bound in \eqref{E:opt_pos_pos_limits}, we first claim that
\begin{equation}\label{E:indiff_px_new_lb}
p^n_U(x,\hat{q}_n) \geq \tilde{p}^n.
\end{equation}
Assuming \eqref{E:indiff_px_new_lb} the upper bound in \eqref{E:opt_pos_pos_limits} readily follows: indeed, assume $\limsup_{n\uparrow\infty}\hat{q}_n/r_n = k\geq \delta^+$ and take a subsequence (still labeled $n$) so that $\hat{q}_n/r_n \rightarrow k$.  Let $0 < \ell < \delta^+$ so that $\hat{q}_n/r_n\geq \ell$ for $n$ large enough.  Since $p^n_U(x,q)$ is decreasing in $q$, \eqref{E:indiff_px_new_lb} implies $\tilde{p}^n\leq p^n_U(x,\ell r_n)$. Taking $n\uparrow\infty$ gives $\tilde{p}\leq p^\infty(\ell)$ and then taking $\ell\uparrow\delta^+$ gives $\tilde{p}\leq p^\infty(\delta^+)$.  But, this is a contradiction and hence \eqref{E:opt_pos_pos_limits} holds.

To prove \eqref{E:indiff_px_new_lb}, come back to \eqref{E:main_u_n_def}.  Write $Z^{\qprob,n} \dfn d\qprob^n_0/d\prob |_{\F_T}$.  From \eqref{E:main_u_n_def} it follows for any $y>0$ that
\begin{equation}\label{E:var_prob_alt_1}
\frac{u^n_U(x-\tilde{p}^nq,q) - u^n_U(x)}{y} +\tilde{p}^nq \leq
q\espalt{\qprob^n_0}{B} +
\frac{1}{y}\left(\espalt{}{V(yZ^{\qprob,n})} + xy -
u^n_U(x)\right).
\end{equation}
Consider the problem
\begin{equation}\label{E:indiff_px_var}
\inf_{y>0}\frac{1}{y}\left(\espalt{}{V(yZ^{\qprob,n})} +xy - u^n_U(x)\right).
\end{equation}
According to \cite[Lemma A.4]{Robertson_2012} the map $y\mapsto \espalt{}{V(yZ^{\qprob,n})}$ is differentiable with derivative $\espalt{}{Z^{\qprob,n}V'(yZ^{\qprob,n})}$.  Thus, we see the derivative of the above map is
\begin{equation*}
\frac{1}{y^2}\left(\espalt{}{yZ^{\qprob,n}V'(yZ^{\qprob,n}) - V(yZ^{\qprob,n})} +u^n_U(x)\right) = \frac{1}{y^2}\left(\espalt{}{\int_0^{yZ^{\qprob,n}}\tau V''(\tau)d\tau} + u^n_U(x)\right),
\end{equation*}
where the last equality follows since $(d/d\tau)(\tau
V'(\tau)-V(\tau)) = \tau V''(\tau)$ and since $U\in\Ua$ implies
$\lim_{\tau\downarrow 0} \tau V'(\tau) = \lim_{\tau\downarrow
0}V(\tau) = 0$.  Since $U\in\Ua$ and Assumption
\ref{A:asympt_no_arb} imply $u^n_U(x) < 0$, the strict convexity
of $V$ yields a unique $y^{\qprob,n}$ solving
\eqref{E:indiff_px_var} and this $y$ satisfies the first order
condition
\begin{equation*}
-u^n_U(x) = \espalt{}{\int_0^{y^{\qprob,n}Z^{\qprob,n}}\tau
V''(\tau)d\tau}.
\end{equation*}
A straightforward calculation shows $\tau V''(\tau) =
1/\alpha_U(I(\tau))$ where $I(\tau) = \left(U'\right)^{-1}(\tau)$.
Since $U\in\Ua$ implies $0 < \underline{a}_U < \alpha_U(x) <
\bar{a}_U$ on $\reals$ we see that $\espalt{}{Z^{\qprob,n}} =1$
gives
\begin{equation*}
\frac{1}{\bar{a}_U}y^{\qprob,n} \leq -u^n_U(x) \leq
\frac{1}{\underline{a}_U} y^{\qprob,n},
\end{equation*}
or equivalently, that $-\underline{a}_U u^n_U(x) \leq y^{\qprob,n} \leq -\bar{a}_U u^n_U(x)$. Using this $y^{\qprob,n}$ in \eqref{E:var_prob_alt_1} gives
\begin{equation*}
\begin{split}
\frac{u^n_U(x-\tilde{p}^nq,q) - u^n_U(x)}{y^{\qprob,n}} + \tilde{p}^nq &\leq q\espalt{\qprob^n}{B} + \frac{1}{y^{\qprob,n}}\left(\espalt{}{V(y^{\qprob,n}Z^{\qprob,n})} + xy - u^n_U(x)\right)\\
&= q\espalt{\qprob^n}{B} +
\inf_{y>0}\frac{1}{y}\left(\espalt{}{V(yZ^{\qprob,n})} + xy -
u^n_U(x)\right).
\end{split}
\end{equation*}
We have already shown the existence of a $\hat{q}_n > 0$ which maximizes $u^n_U(x-\tilde{p}^nq,q)$ and shown that for $n$ large enough $u^n(x-\tilde{p}\hat{q}_n,\hat{q}_n) > u^n_U(x)$. Thus, for this $\hat{q}_n$ we have, using the inequalities for $y^{\qprob,n}$ that
\begin{equation*}
-\frac{1}{\bar{a}_U
u^n_U(x)}\left(u^n_U(x-\tilde{p}^n\hat{q}_n,\hat{q}_n) -
u^n_U(x)\right) + \tilde{p}^n\hat{q}_n \leq
\hat{q}_n\espalt{\qprob^n}{B} +
\inf_{y>0}\frac{1}{y}\left(\espalt{}{V(yZ^{\qprob,n})} + xy -
u^n_U(x)\right),
\end{equation*}
or, since this inequality is valid for any $\qprob^n\in\tM^n$ that
\begin{equation*}
\begin{split}
&u^n_U(x-\tilde{p}^n\hat{q}_n,\hat{q}_n) - u^n_U(x) - \bar{a}_U u^n_U(x)\tilde{p}^n\hat{q}_n\\
&\qquad\qquad \leq -\bar{a}_u u^n_U(x)\left(\inf_{\qprob^n\in \tM^n}\left(\hat{q}_n\espalt{\qprob^n}{B} + \inf_{y>0}\frac{1}{y}\left(\espalt{}{V(yZ^{\qprob,n})} +xy - u^n_U(x)\right)\right)\right)\\
&\qquad\qquad =-\bar{a}_Uu^n_U(x) \hat{q}_np^n_U(x,\hat{q}_n),
\end{split}
\end{equation*}
where the last equality follows from \cite[Proposition 7.1]{MR2489605}. We thus obtain the bounds
\begin{equation}\label{E:gen_u_better_ub}
u^n_U(x) \leq u^n_U(x-\tilde{p}^n\hat{q}_n,\hat{q}_n) \leq u^n_U(x) - \bar{a}_U u^n_U(x)\hat{q}_n\left(p^n_U(x,\hat{q}_n) - \tilde{p}^n\right).
\end{equation}
which, since $u^n_U(x)<0,\hat{q}_n>0$ implies \eqref{E:indiff_px_new_lb}, finishing the result.
\end{proof}

\section{Proofs from Section \ref{SS:BS_trans}}\label{S:pf_trans}

We begin with a lemma\footnote{See the comment in \cite[Section 2.1]{MR1809526}.} showing how the indifference price scales with the initial position and risk aversion. This is an easy consequence of the fact that $\mathcal{A}_t$ is a cone: i.e.~for each $c > 0$, $(L,M)\in\mathcal{A}_t \Leftrightarrow (c L,c M)\in A_t$. Throughout, we assume that $x,y\in\reals$, $0\leq t\leq T$, $s>0$, $a>0$ and $\transpm\in (0,1)$ (resp. $\transpm_n\in (0,1)$).

\begin{lemma}\label{L:Indiff_Px_Scale}
For $p_a$ as in \eqref{E:trans_px} and $q>0$:
\begin{equation}\label{E:lambda_scale}
p_a(qx,qy,q;s,t,\transpm) = p_{qa}(x,y,1;s,t,\transpm).
\end{equation}
\end{lemma}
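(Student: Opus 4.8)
The plan is to exploit the linearity of the wealth dynamics \eqref{E:wealth_dynamics} in $(x,y,L,M)$, the cone property of $\mathcal{A}_t$, and the multiplicative structure of exponential utility.

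\textbf{Step 1: a scaling identity for the value function.} First I would prove that for every $q>0$,
\begin{equation*}
u_a(qx,qy,q;s,t,\transpm) = q\, u_{qa}(x,y,1;s,t,\transpm), \qquad u_a(qx,qy;s,t,\transpm) = q\, u_{qa}(x,y;s,t,\transpm).
\end{equation*}
This follows because $X^{qL,qM,qx,t}_\tau = q\, X^{L,M,x,t}_\tau$ and $Y^{qL,qM,qy,t}_\tau = q\, Y^{L,M,y,t}_\tau$ directly from \eqref{E:wealth_dynamics}, because $(L,M)\mapsto(qL,qM)$ is a bijection of $\mathcal{A}_t$ onto itself (being adapted, non-decreasing, left-continuous and null at $t$ are all preserved under multiplication by $q>0$), and because $U_a(qz) = -\tfrac{1}{a} e^{-aqz} = q\,U_{qa}(z)$. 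Substituting $(L,M)=(q\widetilde L, q\widetilde M)$ in \eqref{E:trans_vf_claim} (resp. \eqref{E:trans_vf_no_claim}) and pulling the factor $q$ through $U_a$ gives the identity.

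\textbf{Step 2: transfer to the balance equation.} Write $P \dfn p_a(qx,qy,q;s,t,\transpm)$. By \eqref{E:trans_px}, with initial capital $qx$, initial stock holdings $qy$, and $q$ units of the claim, $u_a(q(x+P),qy,q;s,t,\transpm) = u_a(qx,qy;s,t,\transpm)$. Applying Step 1 to each side --- to the left-hand side with $x+P$ in place of $x$ --- and cancelling the common factor $q$ yields $u_{qa}(x+P,y,1;s,t,\transpm) = u_{qa}(x,y;s,t,\transpm)$, which is exactly the balance equation \eqref{E:trans_px} defining $p_{qa}(x,y,1;s,t,\transpm)$.

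\textbf{Step 3: uniqueness and conclusion.} It remains to note that the solution of that balance equation is unique, so $P = p_{qa}(x,y,1;s,t,\transpm)$, which is \eqref{E:lambda_scale}. This is immediate from the fact that $z\mapsto u_{qa}(z,y,1;s,t,\transpm)$ is strictly increasing: translating the initial capital by a constant $c$ multiplies the integrand in \eqref{E:trans_vf_claim} by $e^{-qac}$, so $u_{qa}(z,y,1;s,t,\transpm) = e^{-qaz}\,u_{qa}(0,y,1;s,t,\transpm)$ with $u_{qa}(0,y,1;s,t,\transpm)<0$. The argument is essentially routine; the only mild care needed is the well-posedness of the indifference price just invoked, which the displayed multiplicative formula settles, so I do not anticipate a genuine obstacle.
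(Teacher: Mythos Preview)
Your proof is correct and follows essentially the same route as the paper: both first establish the value-function scaling $u_a(qx,qy,q;\cdot)=q\,u_{qa}(x,y,1;\cdot)$ (and likewise without the claim) via the cone property of $\mathcal{A}_t$ and $U_a(qz)=qU_{qa}(z)$, and then transfer the balance equation using the translation property $u_{qa}(x+c,\cdot)=e^{-qac}u_{qa}(x,\cdot)$. The only cosmetic difference is that the paper writes out the chain ending in the factor $e^{-qa(p-p')}=1$, whereas you phrase the same step as uniqueness of the balance-equation solution via strict monotonicity in the capital variable.
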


\begin{proof}[Proof of Lemma \ref{L:Indiff_Px_Scale}]
For $(L,M)\in \mathcal{A}_t$ and $X,Y$ as in \eqref{E:wealth_dynamics} note that
\begin{equation}\label{E:temp_11}
\begin{split}
&-a\left(X^{L,M,qx,t}_T + Y^{L,M,qy,t}_T- q(S_T-K)^+\right) = -qa\left(X^{L/q,M/q,x,t}_T + Y^{L/q,M/q,x,t}_T - (S_T-K)^+\right).
\end{split}
\end{equation}
As $\mathcal{A}_t$ is a cone:
\begin{equation*}
\inf_{(L,M)\in\mathcal{A}_t}\espaltm{}{s,t}{e^{-a\left(X^{L,M,qx,t}_T + Y^{L,M,qy,t}_T- q(S_T-K)^+\right)}} = \inf_{(L,M)\in\mathcal{A}_t}\espaltm{}{s,t}{e^{-qa\left(X^{L,M,x,t}_T + Y^{L,M,y,t}_T- (S_T-K)^+\right)}}.
\end{equation*}
By removing $(S_T-K)^+$ from the above calculations we obtain from \eqref{E:trans_vf_no_claim} and \eqref{E:trans_vf_claim}:
\begin{equation}\label{E:V_Vf_Scale}
u_a(qx,qy,q; s,t,\transpm) = qu_{qa}(x,y,1;s,t,\transpm);\qquad u_a(qx,qy;s,t,\transpm) = qu_{qa}(x,y;s,t,\transpm).
\end{equation}
It is clear for $x'\in\reals$ that $u_{qa}(x+x',y,1;s,t,\transpm) = e^{-qa x'}u_{qa}(x,y,1;s,t,\transpm)$. To make the notation cleaner set $p = p_a(qx,qy,q;s,t,\transpm)$ and $p' = p_{qa}(x,y,1;s,t,\transpm)$ so that \eqref{E:lambda_scale} becomes $p = p'$.  Using the above facts
\begin{align*}
u_{qa}(x,y;s,t,\transpm) &= \frac{1}{q}u_{a}(qx,qy;s,t,\transpm) = \frac{1}{q}u_a(qx+qp,qy,q;s,t,\transpm);\\
&=\frac{1}{q}u_a(qx+qp'+q(p-p'),qy,q;s,t,\transpm);\\
&=u_{qa}(x+p'+(p-p'),y,1;s,t,\transpm);\\
&=e^{-qa\left(p-p'\right)}u_{qa}(x+p',y,1;s,t,\transpm);\\
&=e^{-qa(p-p')}u_{qa}(x,y;s,t,\transpm).
\end{align*}
Thus, $p=p'$.
\end{proof}


As in \cite[pp. 374-375]{MR1809526}, for $\eps > 0$ define
\begin{equation}\label{E:v_vf_def}
\begin{split}
v^{\eps}(x,y,s,t;\transpm) &\dfn 1+\frac{1}{\eps}u_{1/\eps}(x,y,1;s,t,\transpm);\qquad v^{\eps,f}(x,y,s,t;\transpm) \dfn 1+\frac{1}{\eps}u_{1/\eps}(x,y;s,t,\transpm).
\end{split}
\end{equation}
Next, define
\begin{equation}\label{E:z_zf_def}
\begin{split}
z^{\eps}(x,y,s,t;\transpm) &\dfn x + sy + \eps\log\left(1-v^{\eps}(x,y,s,t;\transpm)\right);\\
&= x+sy + \eps\log\left(-\frac{1}{\eps}u_{1/\eps}(x,y,1;s,t,\transpm)\right),\\
z^{\eps,f}(x,y,s,t;\transpm) &\dfn x + sy + \eps\log\left(1-v^{\eps,f}(x,y,s,t;\transpm)\right);\\
&= x+sy + \eps\log\left(-\frac{1}{\eps}u_{1/\eps}(x,y;s,t,\transpm)\right).\\
\end{split}
\end{equation}
Note that by definition $x+py-z^{\eps}$ and $x+py-z^{\eps,f}$ are the respective certainty equivalents in the $\transpm$ transactions costs market with and without the claim. Furthermore:

\begin{lemma}\label{L:z_zf_prop}
$z^{\eps},z^{\eps,f}$ from \eqref{E:z_zf_def} are independent of $x$ and hence write $z^{\eps}(y,s,t;\transpm), z^{\eps,f}(y,s,t;\transpm)$. Furthermore:
\begin{equation}\label{E:estimates}
\begin{split}
\Psi(s,t;0)  - \frac{\eps\mu^2}{2\sigma^2}(T-t) \leq z^{\eps}(y,s,t;\transpm)\leq s(1+\transpm|y-1|);\\
-\frac{\eps\mu^2}{2\sigma^2}(T-t)\leq z^{\eps,f}(y,s,t;\transpm)\leq \transpm s|y|,
\end{split}
\end{equation}
where $\mu$ is the drift of $S$ as in \eqref{E:trans_GBM} and $\Psi(s,t;0)$ is the Black-Scholes price in the frictionless model. Next, for a fixed $(y,s,t)$ and $\eps$, both $z^{\eps},z^{\eps,f}$ are increasing in $\transpm$. Lastly, for a fixed $(y,s,t)$ and $\transpm$, both $z^{\eps}$ and $z^{\eps,f}$ are continuous and decreasing in $\eps$ on $(0,\infty)$.

\end{lemma}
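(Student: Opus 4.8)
Throughout, note that since $U_{1/\eps}(z)=-\eps e^{-z/\eps}$ one has
\[
-\tfrac{1}{\eps}u_{1/\eps}(x,y,1;s,t,\transpm)=\inf_{(L,M)\in\mathcal{A}_t}\espaltm{}{s,t}{e^{-(X^{L,M,x,t}_T+Y^{L,M,y,t}_TS_T-(S_T-K)^+)/\eps}},
\]
and likewise without the $(S_T-K)^+$ term for $-\tfrac1\eps u_{1/\eps}(x,y;s,t,\transpm)$. The plan is to prove the four assertions in turn. \emph{Independence of $x$:} since $X^{L,M,x+x',t}_\tau=X^{L,M,x,t}_\tau+x'$, the parameter $x$ enters the expectations above only through a factor $e^{-x/\eps}$, which cancels the leading $x$ in the definition \eqref{E:z_zf_def}; hence $z^\eps,z^{\eps,f}$ depend only on $(y,s,t;\transpm)$, and below one may fix $x$ at any convenient value.

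\emph{The estimates \eqref{E:estimates}:} For the upper bounds I would use sub-optimal strategies. For $z^\eps$, consider the $(L,M)$ that at time $t$ instantaneously rebalances the stock holdings from $y$ to exactly one share and then does nothing; this incurs transaction costs $\transpm s|y-1|$, leaves $x+(y-1)s-\transpm s|y-1|$ in the money market, and, because $S_T-(S_T-K)^+=S_T\wedge K\ge 0$, the terminal position net of the claim is deterministically at least $x+(y-1)s-\transpm s|y-1|$; plugging into \eqref{E:z_zf_def} gives $z^\eps\le s(1+\transpm|y-1|)$. For $z^{\eps,f}$ one liquidates all stock at time $t$ instead. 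For the lower bounds, pass to the Black--Scholes martingale measure $\qprob$, whose conditional relative entropy on $[t,T]$ equals $\tfrac{\mu^2}{2\sigma^2}(T-t)$; Jensen's inequality for $z\mapsto e^{-z/\eps}$ then yields, for every $(L,M)$ and $W\dfn X^{L,M,x,t}_T+Y^{L,M,y,t}_TS_T-(S_T-K)^+$,
\[
\espaltm{}{s,t}{e^{-W/\eps}}\ge \exp\!\Big(-\tfrac{\mu^2}{2\sigma^2}(T-t)-\tfrac1\eps\espaltm{\qprob}{s,t}{W}\Big).
\]
Since $X^{L,M,x,t}_T+Y^{L,M,y,t}_TS_T\le x+sy+\int_t^T Y_u\,dS_u$ with $\int Y\,dS$ a $\qprob$-supermartingale along admissible strategies, and $\espaltm{\qprob}{s,t}{(S_T-K)^+}=\Psi(s,t;0)$, one gets $\espaltm{\qprob}{s,t}{W}\le x+sy-\Psi(s,t;0)$, which gives the lower bound for $z^\eps$; dropping the claim gives the one for $z^{\eps,f}$. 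Any solvency-region issue with the candidate strategies is moot because $z^\eps$ is $x$-independent, so one runs these arguments with $x$ large.

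\emph{Monotonicity in $\transpm$:} The set $\mathcal{A}_t$ does not depend on $\transpm$, and for each fixed $(L,M)\in\mathcal{A}_t$ the terminal wealth $X^{L,M,x,t}_T+Y^{L,M,y,t}_TS_T$ is pathwise non-increasing in $\transpm$ by \eqref{E:wealth_dynamics} (the transfers are non-decreasing). Hence $\espaltm{}{s,t}{e^{-W/\eps}}$, its infimum over $(L,M)$, and therefore $z^\eps$ (and $z^{\eps,f}$) are non-decreasing in $\transpm$.

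\emph{Monotonicity and continuity in $\eps$:} Write $z^\eps=x+sy-\sup_{(L,M)\in\mathcal{A}_t}\big(-\eps\log\espaltm{}{s,t}{e^{-W/\eps}}\big)$, and observe that for fixed $(L,M)$ the quantity $-\eps\log\espaltm{}{s,t}{e^{-W/\eps}}$ is the entropic certainty equivalent of $W$ at absolute risk aversion $1/\eps$, which is non-increasing in the risk aversion and hence non-decreasing in $\eps$; taking the supremum preserves this, so $z^\eps$ (and $z^{\eps,f}$) is non-increasing in $\eps$, and one-sided limits exist. To rule out jumps I would invoke the Donsker--Varadhan dual representation $-\eps\log\espaltm{}{s,t}{e^{-W/\eps}}=\inf_{\qprob}\big(\espaltm{\qprob}{s,t}{W}+\eps\relent{\qprob}{\prob}\big)$ and exchange the outer supremum over $(L,M)$ with the infimum over $\qprob$, obtaining
\[
x+sy-z^\eps=\inf_{\qprob}\Big(\sup_{(L,M)\in\mathcal{A}_t}\espaltm{\qprob}{s,t}{W}+\eps\relent{\qprob}{\prob}\Big),
\]
an infimum of affine functions of $\eps$, hence concave; therefore $\eps\mapsto z^\eps$ is convex, in particular continuous, on $(0,\infty)$, and similarly for $z^{\eps,f}$. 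The main obstacle is precisely this minimax exchange: it requires restricting the dual domain to consistent price systems (along which $\sup_{(L,M)}\espaltm{\qprob}{s,t}{X_T+Y_TS_T}\le x+sy$, keeping the inner supremum finite) and verifying the hypotheses of a minimax theorem in the proportional-transaction-cost setting, for which one would lean on the convex-duality machinery of \cite{czichowsky2014duality}; short of that, the primal side alone still gives that $\eps\mapsto z^\eps$ is upper semicontinuous and non-increasing, hence left-continuous with at most countably many jumps.
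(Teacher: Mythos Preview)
Your arguments for the $x$-independence, the bounds \eqref{E:estimates}, and the monotonicity in $\transpm$ are correct and essentially match the paper (which outsources the first two to \cite[Proposition~2.1]{MR1809526}); likewise your monotonicity in $\eps$ via the entropic certainty equivalent is equivalent to the paper's one-line appeal to H\"older's inequality.

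The real divergence is in the continuity argument, and here your route is unnecessarily hard and, as you yourself note, incomplete. The paper bypasses any minimax step by working on the \emph{primal} side: it shows that the map
\[
\gamma\ \longmapsto\ f(\gamma)\dfn\inf_{(L,M)\in\mathcal{A}_t}\espaltm{}{s,t}{e^{-\gamma W^{(L,M)}}},\qquad W^{(L,M)}\dfn X^{L,M,x,t}_T+Y^{L,M,y,t}_TS_T-(S_T-K)^+,
\]
is convex on $(0,\infty)$. The key observation is that $W^{(L,M)}$ is \emph{affine} in $(L,M)$ and $\mathcal{A}_t$ is a convex cone: given $(L_i,M_i)\in\mathcal{A}_t$, $0<\lambda<1$ and $\gamma_\lambda=\lambda\gamma_1+(1-\lambda)\gamma_2$, the combination
\[
(L,M)=\frac{\lambda\gamma_1}{\gamma_\lambda}(L_1,M_1)+\frac{(1-\lambda)\gamma_2}{\gamma_\lambda}(L_2,M_2)\in\mathcal{A}_t
\]
satisfies $\gamma_\lambda W^{(L,M)}=\lambda\gamma_1 W^{(L_1,M_1)}+(1-\lambda)\gamma_2 W^{(L_2,M_2)}$; convexity of $e^{-z}$ and two minimizations then give $f(\gamma_\lambda)\le\lambda f(\gamma_1)+(1-\lambda)f(\gamma_2)$. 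Since $f$ is finite and strictly positive by \eqref{E:estimates}, it is continuous on $(0,\infty)$, and continuity of $z^\eps=x+sy+\eps\log f(1/\eps)$ follows at once. Your dual route aims instead for convexity of $\eps\mapsto z^\eps$ itself, a stronger statement than needed, and it forces the minimax exchange you flag as the obstacle; the paper's primal convexity argument simply sidesteps it.
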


\begin{proof}[Proof of Lemma \ref{L:z_zf_prop}]

That $z^{\eps},z^{\eps,f}$ are independent of $x$ and that \eqref{E:estimates} holds both follow from \cite[Proposition 2.1]{MR1809526}. Next, using the definition of $v^{\eps}$ in \eqref{E:v_vf_def} and \eqref{E:wealth_dynamics} we have
\begin{equation*}
\begin{split}
&z^{\eps}(y,s,t;\transpm) - sy\\
&\quad = \inf_{(L,M)\in\mathcal{A}_t}\eps\log\left(\espaltm{}{s,t}{e^{-\frac{1}{\eps}\left(-\int_t^T S_\tau(1+\transpm)dL_\tau + \int_t^T S_\tau(1+\transpm)dM_\tau + yS_T + S_T(L_T-M_T) - (S_T-K)^+\right)}}\right)\\
&\quad  = \inf_{(L,M)\in\mathcal{A}_t}\eps\log\left(\espaltm{}{s,t}{e^{-\frac{1}{\eps}\left(-\int_t^TS_\tau dL_\tau + \int_t^TS_\tau dM_\tau + yS_T + S_T(L_T-M_T) - (S_T-K)^+\right)}e^{\frac{\transpm}{\eps}\int_t^TS_\tau(dL_\tau+dM_\tau)}}\right).
\end{split}
\end{equation*}
It is thus evident that $z^{\eps}(y,s,t;\transpm)$ is increasing in $\transpm$. Since the same formula holds for $z^{\eps,f}$, just absent the $(S_T-K)^+$ term, $z^{\eps,f}(y,s,t;\transpm)$ is also increasing in $\transpm$. Also, that $z^{\eps}(y,s,t;\transpm), z^{\eps,f}(y,s,t;\transpm)$ are decreasing in $\eps$ follows from Holder's inequality. Lastly, note that the map
\begin{equation*}
\begin{split}
\gamma &\mapsto \inf_{(L,M)\in\mathcal{A}_t}\espaltm{}{s,t}{e^{-\gamma\left(-\int_t^TS_\tau(1+\transpm)dL_\tau + \int_t^TS_\tau(1+\transpm)dM_\tau + yS_T + S_T(L_T-M_T) - (S_T-K)^+\right)}},
\end{split}
\end{equation*}
is convex on $(0,\infty)$ (and again, also when the $(S_T-K)^+$ term is absent).  Indeed, take $0<\gamma_1 < \gamma_2$ and $0 < \lambda < 1$. Set $\gamma_\lambda = \lambda\gamma_1 + (1-\lambda)\gamma_2$ and let $(L_1,M_2), (L_2,M_2)\in\mathcal{A}_t$.  Since $z\mapsto e^{-z}$ is convex and
\begin{equation*}
(L,M) = \frac{\lambda\gamma_1}{\gamma_\lambda}(L_1,M_1) + \frac{(1-\lambda)\gamma_2}{\gamma_\lambda}(L_2,M_2)\in \mathcal{A}_t
\end{equation*}
the convexity follows by first minimizing over $(L_1,M_1)$ then
over $(L_2,M_2)$. Since convex functions are continuous on the
interior of their effective domain and since $z^\eps,z^{\eps,f}$
are finite by \eqref{E:estimates} we see that
$z^{\eps}(y,s,t;\transpm),z^{\eps,f}(y,s,t;\transpm)$ are
continuous in $\eps$ on $(0,\infty)$.

\end{proof}


\begin{proof}[Proof of Proposition \ref{prop:bs_conv_price}]

Using Lemma \ref{L:Indiff_Px_Scale} at $q= (\eps a)^{-1}$ gives
\begin{equation*}
p_a\left(\frac{x}{\eps a},\frac{y}{\eps a},\frac{1}{\eps a};s,t;\transpm\right) = p_{1/\eps}\left(x,y,1;s,t,\transpm\right),
\end{equation*}
so that
\begin{equation*}
\begin{split}
v^{\eps}\left(x + p_a\left(\frac{x}{\eps a},\frac{y}{\eps a},\frac{1}{\eps a};s,t;\transpm\right), y, p, t;\mu\right) 
&=v^{\eps,f}(x,y,s,t;\transpm).
\end{split}
\end{equation*}
Thus, using \eqref{E:v_vf_def}, \eqref{E:z_zf_def} one obtains,
since Lemma \ref{L:z_zf_prop} shows $z^{\eps},z^{\eps,f}$ are
independent of the capital $x$, that
\begin{equation*}
\begin{split}
 p_a\left(\frac{x}{\eps a},\frac{y}{\eps a},\frac{1}{\eps a};s,t,\transpm\right) &= z^{\eps}\left(x+ p_a\left(\frac{x}{\eps\gamma},\frac{y}{\eps\gamma},\frac{1}{\eps a};s,t;\transpm\right),y,s,t;\transpm\right) - z^{\eps,f}(x,y,s,t;\transpm)\\
&= z^{\eps}\left(y,s,t;\transpm\right) - z^{\eps,f}(y,s,t;\transpm).
\end{split}
\end{equation*}
Thus, $p_a$ is independent of $x$. The conclusions of the theorem
now readily follow: namely let $r_n = \transpm_n^{-2}$ and set
$q_n = \ell r_n$. Let $y_n\in\reals$. Take $\eps_n =
\transpm_n^2/(a\ell) = (q_n a)^{-1}$ so that $q_n = (\eps_n
a)^{-1}$ and $\transpm_n = \sqrt{\eps_n}\sqrt{a\ell}$.  We then
have
\begin{equation*}
\begin{split}
p_a(y_n,q_n;s,t;\transpm_n) &= p_a\left(\frac{y_n\transpm_n^2/\ell}{\eps_n a},\frac{1}{\eps_n a};s,t,\sqrt{\eps_n}\sqrt{a\ell}\right)\\
&=z^{\eps_n}\left(\frac{y_n\transpm^2_n}{\ell},s,t;\sqrt{\eps_n}\sqrt{a\ell}\right) - z^{\eps_n,f}\left(\frac{y_n\transpm^2_n}{\ell},s,t;\sqrt{\eps_n}\sqrt{a\ell}\right).
\end{split}
\end{equation*}
Now, by \cite[Theorem 3.1]{MR1809526} we have for any $y_0\in\reals$ that
\begin{equation}\label{E:z_zf_n_limits}
\lim_{n\uparrow\infty} z^{\eps_n}\left(y_0,s,t;\sqrt{\eps_n}\sqrt{a\ell}\right) = \Psi(s,t;\sqrt{a\ell});\qquad \lim_{n\uparrow\infty} z^{\eps_n,f}\left(y_0,s,t;\sqrt{\eps_n}\sqrt{a\ell}\right) = 0.
\end{equation}
Furthermore, as shown on \cite[pp. 389]{MR1809526}
\begin{equation*}
\left|z^{\eps_n}\left(\frac{y_n\transpm_n^2}{\ell},s,t;\sqrt{\eps_n}\sqrt{a\ell}\right) - z^{\eps_n}(0,s,t;\sqrt{\eps_n}\sqrt{a\ell})\right| \leq \transpm_ns\frac{\transpm^2_n|y_n|}{\ell},
\end{equation*}
with the same inequality also holding for $z^{\eps_n,f}$.  Thus, if $\lim_{n\uparrow\infty}\transpm^3_n|y_n| = 0$ we see that
\begin{equation*}
\lim_{n\uparrow\infty} p_a(y_n,q_n;s,t;\transpm_n) = \Psi(p,t;\sqrt{a\ell}),
\end{equation*}
which is the desired result.

\end{proof}


\begin{proof}[Proof of Theorem \ref{T:VanishingTransactionsCost} ]
The proof of convergence follows the weak viscosity limits of
\cite{BarlesPerthame}, see also Chapter VII of
\cite{MR1199811}.  Let us
define
\[
\Psi^{*}(s,t)=\limsup_{\rho\downarrow 0}\limsup_{b\downarrow 0} \sup\left\{\Psi(\hat{s},\hat{t};b): |s-\hat{s}|+|t-\hat{t}|<\rho\right\},
\]

and
\[
\Psi_{*}(s,t)=\liminf_{\rho\downarrow 0}\liminf_{b\downarrow 0} \inf\left\{\Psi(\hat{s},\hat{t};b): |s-\hat{s}|+|t-\hat{t}|<\rho\right\}.
\]

\textit{Step 1: $\Psi^{*}(s,t)$ is a viscosity subsolution to the linear Black-Scholes equation.}

Let $w(s,t)$ be a smooth test function and assume that $(s_{0},t_{0})\in (0,\infty)\times[0,T]$ is a strict local maximizer of the difference $\Psi^{*}(s,t)-w(s,t)$ on $[0,\infty)\times[0,T]$ such that $\Psi^{*}(s_{0},t_{0})=w(s_{0},t_{0})$. We may, and will do so,  assume that $w_{ss}(s_{0},t_{0})\neq 0$. We verify that $\Psi^{*}$ is a viscosity subsolution, by proving that if $t_{0}<T$, then
\[
 -w_{t}(s_{0},t_{0})-\frac{1}{2}s_0^{2}\sigma^{2}w_{ss}(s_{0},t_{0})\leq 0,
\]
whereas if $t_{0}=T$, then either the previous inequality holds or  $\Psi^{*}(s_{0},T)\leq (s_{0}-K)^{+}$.

Let us assume that either $t_{0}<T$ or that $t_{0}=T$ and $\Psi^{*}(s_{0},T)> (s_{0}-K)^{+}$. Consider a sequence $b_{n}\downarrow 0$ and local maximizers $(s_{n},t_{n})\in (0,\infty)\times[0,T)$ of the function
\[
 (s,t)\mapsto \Psi(s,t;b_n)-w(s,t),
\]
such that
\[
 (s_{n},t_{n})\rightarrow (s_{0},t_{0}), \Psi(s_{n},t_{n};b_n)\rightarrow \Psi^{*}(s_{0},t_{0}), \text{ and } \Psi(s_{n},t_{n};b_n)-w(s_{n},t_{n})\rightarrow 0.
\]

The existence of such a sequence and maximizers is shown in \cite{BarlesPerthame}. Notice that for $n$ large enough we have  $t_{n}<T$. Indeed, if $t_{0}<T$, then $t_{n}<T$ for large enough $n$ follows by the convergence $t_{n}\rightarrow t_{0}$. Let's now assume that $t_{0}=T$ and $\Psi^{*}(s_{0},T)>(s_{0}-K)^{+}$ and let $t_{n}=T$. We calculate
\[
\Psi^{*}(s_{0},t_{0})=\lim_{n\rightarrow\infty}\Psi(s_{n},T; b_{n})=(s_{0}-K)^{+}.
\]
But, since we have assumed that $\Psi^{*}(s_{0},T)>(s_{0}-K)^{+}$ we get a contradiction, which implies that $t_{n}<T$ for all $n$ large enough.

Let us set now $k_{n}=\Psi(s_{n},t_{n};b_n)-w(s_{n},t_{n})$ and define the operator
\[
\mathcal{G}_{b}[\Psi]=\frac{1}{2}\sigma^{2}s^{2}\Psi_{ss}(s,t)\left(1+S\left(b s^{2}\Psi_{ss}(s,t)\right)\right).
\]

By the fact that $\Psi(;b_n)$ is a continuous viscosity solution of \eqref{Eq:NonlinearBlackScholesEquation} and that the function $A\mapsto A(1+S(A))$ is increasing function, we get the following
\begin{align}
 0&\geq -w_{t}(s_{n},t_{n})-\mathcal{G}_{b_{n}}[w(s_{n},t_{n})+k_{n}].\nonumber
\end{align}

Taking now $n\rightarrow\infty$ and using the facts that $\ell_{n}\rightarrow 0$, $(s_{n},t_{n})\rightarrow (s_{0},t_{0})$, $k_{n}\rightarrow 0$ and $S(0)=0$, we get
\[
 -w_{t}(s_{0},t_{0})-\frac{1}{2}\sigma^{2}s_{0}^{2}w_{ss}(s_{0},t_{0})\leq 0,
\]
completing the proof of the viscosity subsolution property of $\Psi^{*}$.

\textit{Step 2: $\Psi_{*}(s,t)$ is a viscosity supersolution to the linear Black-Scholes equation.}

The proof if this step is almost identical to the proof of the previous step. Let $w(s,t)$ be a smooth test function and assume that $(s_{0},t_{0})\in (0,\infty)\times[0,T]$ is a strict global minimizer of the difference $\Psi_{*}(s,t)-w(s,t)$ on $[0,\infty)\times[0,T]$  such that $\Psi_{*}(s_{0},t_{0})=w(s_{0},t_{0})$. We may, and will do so, assume that $w_{ss}(s_{0},t_{0})\neq 0$. We verify that $\Psi_{*}$ is a viscosity supersolution, by proving that if $t_{0}<T$, then
\[
 -w_{t}(s_{0},t_{0})-\frac{1}{2}s^{2}\sigma^{2}w_{ss}(s_{0},t_{0})\geq 0.
\]

If $t_{0}=T$, then by construction we have the supersolution property $\Psi_{*}(s,T)\geq (s-K)^{+}$. We need to show the viscosity property.

 Consider a sequence $b_{n}\downarrow 0$ and local minimizers $(s_{n},t_{n})\in (0,\infty)\times[0,T)$ of the function
\[
 (s,t)\mapsto \Psi(s,t;b_n)-w(s,t),
\]
such that
\[
 (s_{n},t_{n})\rightarrow (s_{0},t_{0}), \Psi(s_{n},t_{n};b_n)\rightarrow \Psi_{*}(s_{0},t_{0}), \text{ and } \Psi(s_{n},t_{n};b_n)-w(s_{n},t_{n})\rightarrow 0.
\]
The existence of such a sequence and minimizers is shown in \cite{BarlesPerthame}. Notice that, as in the viscosity subsolution case, for $n$ large enough, we have that  $t_{n}<T$. 

By the fact that $\Psi(;b_n)$ is a viscosity solution of (\ref{Eq:NonlinearBlackScholesEquation}) and that the function $A\mapsto A(1+S(A))$ is increasing function, we get the following
\begin{align}
 0& \leq -w_{t}(s_{n},t_{n})-\mathcal{G}_{b_{n}}[w(s_{n},t_{n})+k_{n}].\nonumber
\end{align}

Taking now $n\rightarrow\infty$ and using the facts that $\ell_{n}\rightarrow 0$, $(s_{n},t_{n})\rightarrow (s_{0},t_{0})$, $k_{n}\rightarrow 0$ and $S(0)=0$, we get
\[
 -w_{t}(s_{0},t_{0})-\frac{1}{2}\sigma^{2}s_{0}^{2}w_{ss}(s_{0},t_{0})\geq 0,
\]
completing the proof of the viscosity supersolution property of $\Psi_{*}$.

\textit{Step 3: Putting the estimates together}

By construction we have that $\Psi_{*}\leq \Psi^{*}$. Then a comparison argument as in proof of Theorem 3.1 of \cite{MR1809526}, or equivalently see Section VII.8 of \cite{MR1199811}, gives the opposite inequality, i.e.,
$\Psi_{*}\geq \Psi^{*}$. Thus we have that $\Psi_{*}= \Psi^{*}$ and the function $\Psi^{0}=\Psi_{*}= \Psi^{*}$ is solution to the equation
\[
 \Psi_{t}+\frac{1}{2}\sigma^{2}s^{2}\Psi_{ss}= 0;\qquad \Psi(T,s) =
 (s-K)^+.
\]
Classical arguments, e.g. Theorem 7.1 of  \cite{MR1199811}, then imply that the equality $\Psi_{*}= \Psi^{*}$ implies the local uniform convergence $\Psi^{\ell}\rightarrow\Psi^{0}$ as $\ell\rightarrow 0$. This completes the proof of the theorem.
\end{proof}


\begin{proof}[Proof of Theorem \ref{T:Psi_a_large}]

From Lemma \ref{L:z_zf_prop} at $\transpm = b\sqrt{\eps}$ it follows that $z^{\eps}(y,s,t;b\sqrt{\eps})$ is increasing in $b$.  Since \cite[Theorem 3.1]{MR1809526} implies $\lim_{\eps\rightarrow 0} z^{\eps}(y,s,t;b\sqrt{\eps})= \Psi(s,t;b)$, it follows that $\Psi(s,t;b)$ is increasing in $b$. As for the asymptotics in \eqref{E:Psi_a_large} by construction $\Psi(s,T;b)= (s-K)^+$ for $p>0, b>0$.  Thus, we only consider when $t<T$. Here, we recall from Proposition \ref{prop:bs_conv_price} that $\lim_{A\uparrow\infty} S(A)/A = 1$.  Furthermore, as shown in \cite{MR1809526}, $S(A) > 0$ for $A>0$.  Thus, let $\gamma > 0$ and pick $A_\gamma$ so that $S(A)\geq (1-\gamma)A$ for $A\geq A_\gamma$.

Now, let $\psi: (0,\infty)\times [0,T]$ be a smooth function with $\psi_{ss}\geq 0$.  Write
\begin{equation*}
H[\psi] \dfn \psi_t + \frac{1}{2}\sigma^2s^2\psi_{ss}\left(1 + S(b^2s^2\psi_{ss})\right).
\end{equation*}
We have the following basic estimate, since $\psi_{ss}\geq 0$ and $A\mapsto A(1+S(A))$ is increasing:
\begin{equation*}
\begin{split}
H[\psi] & \geq \psi_t + 1_{s^2\psi_{ss}\geq \frac{A_\gamma}{b^2}}\left(\frac{1}{2}\sigma^2s^2\psi_{ss}\left(1+(1-\gamma)b^2s^2\psi_{ss}\right)\right)\\
&= \psi_t + 1_{s^2\psi_{ss}\geq \frac{A_\gamma}{b^2}}\left(\frac{1-\gamma}{2}\sigma^2\left(bs^2\psi_{ss}+\frac{1}{2(1-\gamma)b}\right)^2 - \frac{\sigma^2}{8b^2(1-\gamma)}\right)\\
&\geq \psi_t - \frac{\sigma^2}{8b^2(1-\gamma)} + \frac{1-\gamma}{2}\sigma^2\left(bs^2\psi_{ss}+\frac{1}{2(1-\gamma)b}\right)^2 - 1_{s^2\psi_{ss}<\frac{A_\gamma}{b^2}}\frac{1-\gamma}{2}\sigma^2\left(bs^2\psi_{ss}+\frac{1}{2(1-\gamma)b}\right)^2\\
&\geq \psi_t - \frac{\sigma^2}{8b^2(1-\gamma)} + \frac{1-\gamma}{2}\sigma^2\left(bs^2\psi_{ss}+\frac{1}{2(1-\gamma)b}\right)^2 - \frac{1-\gamma}{2}\sigma^2\left(\frac{A_\gamma}{b}+\frac{1}{2(1-\gamma)b}\right)^2\\
&=\psi_t - \frac{\sigma^2K_{\gamma}}{2b^2} + \frac{1-\gamma}{2}\sigma^2\left(bs^2\psi_{ss}+\frac{1}{2(1-\gamma)b}\right)^2\\
\end{split}
\end{equation*}
where
\begin{equation*}
K_\gamma \dfn \frac{1}{4(1-\gamma)} + (1-\gamma)\left(A_\gamma + \frac{1}{2(1-\gamma)}\right).
\end{equation*}
To recap, we have for $\psi$ smooth with $\psi_{ss}\geq 0$ that
\begin{equation}\label{E:H_lb}
H[\psi]\geq \psi_t - \frac{\sigma^2K_{\gamma}}{2b^2} + \frac{1-\gamma}{2}\sigma^2\left(bs^2\psi_{ss}+\frac{1}{2(1-\gamma)b}\right)^2.
\end{equation}

Now, let $C>0$ and denote by $\phi(s,t; C)$ the Black-Scholes price at $(s,t)$ for a call option with strike $K$, maturity $T$ when the interest rate is $0$ and the asset volatility is $C$. Let $M\in \reals$ and consider the function
\begin{equation*}
\psi(s,t) = \phi(s,t;C) - M(T-t).
\end{equation*}
Clearly, $\psi$ is smooth and from the explicit formula for $\phi(s,t;C)$ it follows that $\psi_{ss}\geq 0$.  We then have from \eqref{E:H_lb} (writing $\phi^C$ to denote the dependence upon $C$) that
\begin{equation*}
\begin{split}
H[\psi]&\geq \phi^C_t + M - \frac{\sigma^2 K_\gamma}{2a^2} + \frac{1}{2}(1-\gamma)\sigma^2\left(b s^2 \phi^C_{ss} + \frac{1}{2(1-\gamma)b}\right)^2;\\
&= -\frac{1}{2}C^2s^2\phi^C_{ss} + M - \frac{\sigma^2 K_\gamma}{2b^2} + \frac{1}{2}(1-\gamma)\sigma^2\left(b s^2 \phi^C_{ss} + \frac{1}{2(1-\gamma)b}\right)^2.
\end{split}
\end{equation*}
The quadratic form $(1/2)(1-\gamma)\sigma^2b^2x^2 + (1/2)(\sigma^2-C^2)x$ is bounded below by
\begin{equation*}
-\frac{1}{8}\frac{(\sigma^2-C^2)^2}{(1-\gamma)\sigma^2b^2}.
\end{equation*}
Plugging this into the above (with $s^2\phi^C_{ss}$ playing the role of $x$) yields
\begin{equation*}
\begin{split}
H[\psi]&\geq - \frac{(\sigma^2-C^2)^2}{8(1-\gamma)\sigma^2b^2} + M - \frac{\sigma^2 K_\gamma}{2b^2} + \frac{\sigma^2}{8(1-\gamma)b^2}.
\end{split}
\end{equation*}
Clearly, setting
\begin{equation}\label{E:MC_Val}
\begin{split}
M &=  \frac{(\sigma^2-C^2)^2}{8(1-\gamma)\sigma^2b^2}+ \frac{\sigma^2 K_\gamma}{2b^2} - \frac{\sigma^2}{8(1-\gamma)b^2}= \frac{C^4}{8(1-\gamma)\sigma^2b^2} - \frac{C^2}{4(1-\gamma)b^2} + \frac{\sigma^2K_\gamma}{2b^2},
\end{split}
\end{equation}
yields that $H[\psi]\geq 0$ and hence by the comparison argument shown in \cite[Theorem 3.1, pp. 395-396]{MR1809526} it follows that $\Psi(s,t;b)\geq \psi(s,t)$. To connect with the results therein, set
\begin{equation*}
\begin{split}
z^*(s,t) = \psi(s,t)= \phi(s,t;C) - M(T-t);\qquad z_*(s,t) = \Psi(s,t;b),
\end{split}
\end{equation*}
and note that $z^*$ is a (classical) sub-solution; $z_*$ is a continuous viscosity super-solution; $\lim_{s\uparrow\infty} z^*(s,t)/s = 1$, $\lim_{s\uparrow\infty} z_*(s,t)/s = 1$ uniformly in $0\leq t\leq T$; and that $z^*(0,t) = -M(a)(T-t) \leq z_*(0,t) = 0$ for any $t\leq T$ if $C>\sqrt{2\sigma}$.  Thus, the argument in \cite[pp. 395-396]{MR1809526} goes through.

Now, so far the choice of $C>0$ was arbitrary.  Consider then when $C = b^{1/4}$.  Here we have as $b\rightarrow \infty$ that
\begin{equation*}
\begin{split}
C & = C(b) \rightarrow \infty,\\
M &= M(b) = \frac{1}{8(1-\gamma)\sigma^2b} - \frac{1}{4(1-\gamma)b^{3/2}} + \frac{\sigma^2K_\gamma}{2b^2} \rightarrow 0.
\end{split}
\end{equation*}
Thus, we have from the comparison principle that
\begin{equation*}
\liminf_{b\uparrow\infty} \Psi(s,t;b) \geq \liminf_{b\uparrow\infty} \phi(s,t;C(b)) - M(b)(T-t) = s,
\end{equation*}
where the last equality follows from the well known fact that the price of a call in the Black-Scholes model converges to the initial stock price as the volatility approaches infinity.  This completes the proof since it was shown in \cite[Proposition 2.1, Theorem 3.1]{MR1809526} that $\Psi(s,t;b)\leq s$ for all $b>0$.
\end{proof}


\begin{proof}[Proof of Theorem \ref{T:opt_quant_trans}]

We verify that Proposition \ref{prop: one_p_pos} holds, yielding the desired result. As a first step towards this direction, we rewrite the involved optimization problem in a form that is easier to work with. For $\tilde{p}^n\in (\Psi(s,t;0), s)$ recall the optimal sale quantity problem in \eqref{E:trans_opt_q}:
\begin{equation*}
\max_{q> 0} u_a(x+ys(1-\transpm_n)+q\tilde{p}^n,0,q;s,t;\transpm_n).
\end{equation*}
With $\tilde{x} = x + ys(1-\transpm_n)$ we have, in view of \eqref{E:V_Vf_Scale} and \eqref{E:v_vf_def}, \eqref{E:z_zf_def}, that for $q>0$:
\begin{equation}\label{E:H_z_pos_rep}
\begin{split}
u_a(\tilde{x}+q\tilde{p}^n,0,q;s,t;\transpm_n)&= \frac{1}{a}\left(v^{\frac{1}{qa}}\left(\frac{\tilde{x}}{q}+\tilde{p}^n,0,s,t;\transpm_n\right)-1\right)=-\frac{1}{a}e^{-a\left(\tilde{x}+q\tilde{p}^n-qz^\frac{1}{qa}(0,s,t;\transpm_n)\right)},
\end{split}
\end{equation}
and hence it suffices to consider the optimization problem
\begin{equation}\label{E:px_z_pos_rep}
\sup_{q>0}\left(q\tilde{p}^n - qz^{\frac{1}{qa}}(0,s,t;\transpm_n)\right) = -\inf_{q>0}\left(q(-\tilde{p}^n) - q\left(-z^{\frac{1}{qa}}(0,s,t;\transpm_n)\right)\right).
\end{equation}
The existence of a maximizer $\hat{q}_n>0$, as well as the asymptotic behavior of $\hat{q}_n/r_n$ in \eqref{E:trans_rates} as $\transpm_n\rightarrow 0$ will follow from Proposition \ref{prop: one_p_pos} once the requisite hypotheses are shown to hold. Here, $p^n$ is the map
\begin{equation*}
q\mapsto p^n(q) = -z^{\frac{1}{qa}}(0,s,t;\transpm_n).
\end{equation*}
We first consider Assumption \ref{A:one_p_pos}.  As for bullet point one, note that by Lemma \ref{L:z_zf_prop}, $p^n$ is continuous and non-increasing on $(0,\infty)$. Regarding bullet point two, \eqref{E:estimates} gives
\begin{equation*}
\begin{split}
-qs(1+\transpm_n) \leq qp^n(q) & \leq -q\Psi(s,t;0) + \frac{\mu^2}{2a\sigma^2}(T-t),
\end{split}
\end{equation*}
so that for any $\gamma >0$
\begin{equation*}
\limsup_{n\uparrow\infty} \sup_{q\leq \gamma} q|p^n(q)| \leq \gamma\max\cbra{\Psi(s,t;0) + \frac{\mu^2}{2a\sigma^2}(T-t), \gamma s} \dfn C(\gamma) < \infty,
\end{equation*}
verifying bullet point two. Regarding bullet point three, from \eqref{E:z_zf_n_limits} where $\eps_n = \transpm_n^2/(a\ell)$, $q_n = \ell r_n$ and $r_n = \transpm_n^{-2}$ it holds for all $\ell > 0$ that $p^n(\ell r_n)\rightarrow -\Psi(s,t;\sqrt{a\ell}) = p^\infty(\ell)$. Thus, bullet point three holds with $\delta = \delta^+ = \infty$.  Lastly, regarding bullet point four, since Theorem \ref{T:Psi_a_large} shows that $\lim_{\ell\uparrow\infty} \Psi(s,t;\sqrt{a\ell}) = -\lim_{\ell\uparrow\infty} p^\infty(\ell) = -s$ and $s > \Psi(s,t;0) = -p^\infty_{+}(0)$, bullet point four holds (see the sufficient condition Assumption \ref{A:one_p_pos}).  Therefore, Assumption \ref{A:one_p_pos} holds.  Lastly, as stated above for $\tilde{p} \in (\Psi(s,t;0),s)$ we have
\begin{equation*}
-s = \lim_{\ell\uparrow\infty} p^\infty(\ell) < -\tilde{p} < p^\infty_{+}(0) = \lim_{\ell\downarrow 0} (-\Psi(s,t;\sqrt{a\ell})) = -\Psi(s,t;0).
\end{equation*}
Therefore, the results of Proposition \ref{prop: one_p_pos} go through, finishing the proof.

\end{proof}

\bibliographystyle{siam}
\def\polhk#1{\setbox0=\hbox{#1}{\ooalign{\hidewidth
  \lower1.5ex\hbox{`}\hidewidth\crcr\unhbox0}}}

\end{document}